\documentclass[a4paper,12pt]{article}
\usepackage[bottom=4.5cm,top=4cm,left=2.2cm,right=2.2cm]{geometry}
\linespread{1.12}
\setlength{\footskip}{2.8cm}
\setlength{\floatsep}{11pt plus 2pt minus 2pt}
\setlength{\textfloatsep}{11pt plus 2pt minus 2pt}
\setlength{\intextsep}{11pt plus 2pt minus 2pt}

\usepackage{graphicx}
\usepackage{subcaption}
\usepackage[shortlabels]{enumitem}

\usepackage{amsthm}
\usepackage{amsmath}
\usepackage{amssymb}
\usepackage{chngcntr}
\usepackage{url}


\usepackage{algorithmicx}
\usepackage{algpseudocode}
\usepackage{algorithm}


\newcommand{\free}{{\mathop{\rm free}}}
\newcommand{\stopp}{{\mathop{\rm stop}}}

\newcommand{\node}{{\mathop{\rm node}}}

\newcommand{\RNum}[1]{\mathop{\rm \uppercase\expandafter{\romannumeral #1\relax}}}

\newtheorem{lemma}{Lemma}[section]

\newtheorem{theorem}{Theorem}[section]

\newtheorem{remark}{Remark}[section]

\newtheorem{definition}{Definition}[section]
\newtheorem{property}{Property}[section]

\newcounter{algcounter}

\begin{document}

\begin{center}
{\Large Approximate Ridesharing of Personal Vehicles Problem} \footnote{A preliminary version of the paper appeared in the Proceeding of COCOA2020 \cite{GLZ20}.}
\vskip 0.1in
Qian-Ping Gu$^a$, Jiajian Leo Liang$^a$ and Guochuan Zhang$^b$

$^a$School of Computing Science, Simon Fraser University, Canada\\
qgu@sfu.ca, leo\_liang@sfu.ca

$^b$College of Computer Science and Technology, Zhejiang University, China\\
zgc@zju.edu.cn
\end{center}

\noindent
{\bf Abstract.}
It is important to find ride matches for individuals who participate in ridesharing quickly, and it is equally important to minimize the number of drivers to serve all individuals and minimize the total travel distance of the vehicles. 
This paper considers the following ridesharing problem: given a set of trips, each trip consists of an individual, a vehicle of the individual and some requirements, select a subset of trips and use the vehicles of selected trips to deliver all individuals to their destinations while satisfying the requirements and achieving some optimization goal.
Requirements of trips are specified by parameters including source, destination, vehicle capacity, preferred paths, detour distance and number of stops a driver is willing to make, and time constraints.
We consider two optimization problems: minimizing the number of selected vehicles and minimizing total travel distance of the vehicles.
We prove that it is NP-hard to approximate both minimization problems within a constant factor if any one of the requirements related to the detour distance, preferred paths, number of stops and time constraints is not satisfied. 
We give $\frac{K+2}{2}$-approximation algorithms for minimizing the number of selected vehicles when the requirement related to the number of stops is not satisfied, where $K$ is the largest capacity of all vehicles.
\vspace{2mm}

\noindent \textbf{Keywords:} Ridesharing problem, optimization problems, approximation algorithms, algorithmic analysis

\section{Introduction} \label{sec-intro}
The use of shared mobility (carpooling/ridesharing) is becoming popular around the world. With recent advancement in communication technologies, ridesharing in large scale are emerging. Ridesharing services are enabling timely and convenient transportation to many people. The need of such services is increasing as the population grows in urban areas.
At the same time, the number of cars on the road also increases.
According to~\cite{ATST-S}, personal vehicles are the main transportation mode in more than 200 European cities between 2001 and 2011.
In the United States, personal vehicles are also the main transportation mode~\cite{CSS20}. The occupancy rate of personal vehicles in the United States is 1.6 persons per vehicle in 2011~\cite{USDT-S,USDTFHA-S} (and decreased to 1.5 persons per vehicle in 2017~\cite{CSS20}), which can be a major cause for congestion, and the estimated cost of congestion is a round \$121 billion per~\cite{PNAS17-AM}.
Shared mobility (carpooling or ridesharing) is a promising effective way to increase the occupancy rate, which can reduce congestion~\cite{TRTPB-A,TRBM13-F}.
It is estimated that ridesharing to work in Dublin, Ireland, can reduce 12,674 tons of CO$_2$ emissions per year~\cite{TRDTE09-C}, and taxi-ridesharing in Beijing can reduce 120 million liters of gasoline annually~\cite{TKDE15-M}.
A number of systems for ridesharing services are known, such as Uber, Lyft and DiDi.
These systems are also called mobility-on-demand (MoD) systems and can support dynamic ridesharing: ridesharing requests arrive dynamically, and the system provides a service in real-time.
In this paper, we study the static ridesharing problem but our algorithms can be applied to dynamic services. A sequence of dynamic ridesharing requests within a time interval can be viewed as a set of static requests and solved by a static ridesharing algorithm~\cite{ESA15-S}.

Due to the COVID-19 pandemic, traffic volume has decreased significantly in many areas~\cite{IJTST20-D}\footnote{Google COVID-19 Community Mobility Reports, 2021-02. \url{https://google.com/covid19/mobility}}$^{,}$\footnote{Geotab Blog, 2021-02. \url{https://geotab.com/blog/congestion-and-commercial-traffic}}.
The recent study~\cite{IJTST20-D} of pandemic impacts on road traffic found that one of the most effective ways to reduce traffic fuel consumption and emissions is indeed by reducing the number of vehicles on the road, and the authors suggested that the policy makers should encourage ridesharing to reduce the number of vehicles on the road after the pandemic.
During the pandemic, in addition to reducing emissions, ridesharing may be safer than public transit for colleagues who work at the same place; such a type of ridesharing is also easier for contact tracing.
A major lesson from the pandemic is that people should cooperate to protect our living environment, and ridesharing is an effective way to reduce the vehicles on the road, and thus reduce pollution~\cite{IJTST20-D,TRBM13-F}.
In this paper, we focus on this goal.

More specifically, we study two minimization problems in the following ridesharing problem:
Given a set of trips (requests) in a road network, where each trip consists of an individual, a vehicle of the individual and some requirements, select a subset of trips to deliver the individuals of all trips to their destinations by the vehicles of the selected trips satisfying the requirements.
We call the individual of a selected trip a {\em driver} and an individual other than a driver a {\em passenger}. 
The parameters specifying the requirements of a trip include the source and destination of the trip, the vehicle capacity (number of seats to serve passengers), the preferred paths of individual when selected as a driver, the detour distance and number of stops the driver willing to make to serve passengers, and time constraints such as departure/arrival times.
Our optimization goals are to minimize the number of selected vehicles (or equivalently number of drivers) and to minimize the total travel distance of selected vehicles (drivers) to serve all trips.

In general, the ridesharing problem is a generalization of the vehicle routing problem (VRP) and Dial-A-Ride problem (DARP)~\cite{TRBM19-M}, and thus, it is NP-hard.
Mixed Integer Programming (MIP) formulation combined with exact methods or heuristics to solve the MIP formulation have been used to solve the ridesharing problem~\cite{PNAS17-AM,GECCO12-H,VLDBE14-H}.
The MIP based approach is time consuming and not practical for the ridesharing of large scale. Most previous works use (meta)heuristics or study the simplified ridesharing for large instances~\cite{TRBM11-A,CCIE16-J,PNAS14-S,ESA15-S}.
There are two types of optimization goals in ridesharing: {\em operational objectives} and {\em quality-related objectives}~\cite{TRBM19-M}. The operational objectives are to optimize system-wide optimization performances such as maximizing the number of matched/served trips and minimizing the total travel distance/time of all vehicles.
The quality-related objectives are to improve the satisfactions of individuals (drivers/passengers), such as minimizing the waiting time of each individual passenger and maximizing the cost saving of the drivers/passengers.
This is done in first-come first-serve manner for ridesharing service users and is usually achieved by agent-based or decentralized approach by simulating the interaction between passenger ride-requests and driver ride-services (e.g.~\cite{CCGRID18-B,TRCET14-F,TRCET16-N}).
The decentralized approach may be good for providing service to users, but it lacks the ability to achieve operational objectives which optimize system performance.
The centralized approach, using MIP or heuristics, can achieve system-wide optimization goals as a whole (approximately for large instances).
A number of variants and mathematical formulations of the ridesharing problem are derived from DARP and a review on DARP can be found in~\cite{AOR17-M}. Readers may refer to~\cite{TRBM12-A,TRBM13-F,TRBM19-M} for literature surveys and reviews on the ridesharing problem. 

Previous works mainly focus on empirical studies of the ridesharing problem.
Recently, a model for analyzing the relations between the computational complexity of the ridesharing problem and its parameters was introduced by Gu. et al~\cite{GLZ16}; an algorithmic analysis for the simplified ridesharing problem with parameters of source, destination, vehicle capacity, detour distance limit and preferred paths only.
It is shown in~\cite{GLZ16} that if one of the following conditions is not satisfied then both minimizing the number of drivers and minimizing the total travel distance of the drivers are NP-hard: 
\begin{itemize}
\item[(1)] All trips have the same destination or all trips have the same source.
\item[(2)] Detour is not allowed for every trip (zero detour condition).
\item[(3)] Each trip has a unique preferred path (fixed path condition).
\end{itemize}
When all of Conditions (1)-(3) are satisfied, the following exact algorithms are given in~\cite{GLZ19}: $O(M+l^3)$ time dynamic programming algorithms for both minimization problems, where $M$ is the size of road map and $l$ is the number of trips, and an $O(M + l\cdot \log l)$ time exact algorithm for minimizing the number of drivers.

Kuteil and Rawitz~\cite{ESA17-K} studied the maximum carpool matching problem (MCMP), which is closely related to the ridesharing problem.
An instance of MCMP consists of a directed graph $H(V,E)$, where the vertices of $V$ represent the individuals and an arc $(u,v) \in E$ denotes $v$ can serve $u$.
Every $v \in V$ is flexible to be a driver or passenger. The goal of MCMP is to select a set of drivers $S \subseteq V$ to serve all trips of $V$ such that the number of passengers is maximized; MCMP is NP-hard~\cite{ICAM14-H}.
Approximation algorithms are proposed in~\cite{ESA17-K}, and these algorithms can be modified into $\frac{K+2}{2}$-approximation algorithms for minimizing the number of drivers in the ridesharing problem, where $K$ is the largest capacity of all vehicles.

In this paper, we extend the computational complexity analysis of the simplified ridesharing problem to more generalized problems with three additional parameters: the number of stops a driver willing to make to serve passengers, departure time and arrival time of each trip.
Two more conditions related to the three parameters are considered:
\begin{itemize}
\item[(4)] Each driver is willing to stop as many times as its vehicle capacity to pick-up passengers.
\item[(5)] All trips have the same earliest departure time and same latest arrival time.
\end{itemize}
We call Condition (4) the \textit{stop constraint condition} and Condition (5) the \textit{time constraint condition}.
Our results in this paper are:
\begin{enumerate}
\item We prove that it is NP-hard to approximate within a constant factor for each problem of minimizing the number of drivers and minimizing the total travel distance of drivers if stop constraint or time constraint condition is not satisfied.
\item We further show that it is NP-hard to approximate within a constant factor for each problem of minimizing the number of drivers and minimizing the total travel distance of drivers if Condition (2) (zero detour) or Condition (3) (fixed path) is not satisfied.
\item We give two $\frac{K+2}{2}$-approximation algorithms for minimizing the number of drivers when the input instances satisfy Conditions (1-3) and (5), where $K$ is the largest capacity of all vehicles.
For a ridesharing instance of a road network with size $M$ and $l$ trips, our first algorithm, which is a modification of of an approximation algorithm (StarImprove) for the MCMP problem in~\cite{ESA17-K}, runs in $O(M + K \cdot l^3)$ time.
Our second algorithm is more practical and runs in $O(M + l^2)$ time.
\end{enumerate}

\paragraph{Application of ridesharing} In practice, our algorithms may apply to the following described scenario: A ridesharing scenario in regular school commute may be represented by an instance satisfying Conditions (1)-(3) and (5). 
In the morning, many students and staffs go to the same university/college campus (Condition (1), same destination) around the same time (Condition (5)).
Each driver always wants to use a fixed path (usually the fastest route) from home to school (Condition (3)) and does not want to detour (Condition (2)) because time constraints may be tight and traffic can be unpredictable during the peak hours.
Then in the afternoon, staffs and students leave from the same school (Condition (1), same source) around the same time;
from the similar reasons, drivers may want to use a fixed path from school to home and do not want to detour.
Depending on the time constraints, a driver may want to only stop once or twice to pick-up passengers such that the driver's travel duration is not prolonged; on the other hand, if a driver wants to stop many times, Condition (4) is satisfied.
There are studies focus on ridesharing for university commute, such as~\cite{PSBS11-B,Sustain18-D,IJST15-E}.
The work commute scenario is similar to school commute, except the destinations may be scattered around an office area.
In this case and a more dynamic setting for ridesharing, one can apply our algorithms by grouping users together who satisfy (or nearly satisfy) Conditions (1-3) and (5).
This grouping technique has been applied to solving the traveling salesman problem with time windows problem (TSPTW)~\cite{JORS20-B}, which may be useful in minimizing the total travel distance of drivers for the ridesharing problem.

Other studies have shown the possible potential of ridesharing involving autonomous vehicles and (autonomous) taxis~\cite{PNAS17-AM,TRCET18-L}.
It can further benefit the use of autonomous vehicles by computing solution with minimum number of vehicles or minimum total travel distance of vehicles.
Another interesting application is multimodal transportation with ridesharing (integrating public and private transportation). This area of research has gained some attraction recently (e.g.~\cite{TOITS19-H,TRELTR19-M,COR18-S}).
It may be possible to apply our algorithms to this area to satisfy public transportation users demand.

The rest of the paper is organized as follows.
We give in Section \ref{sec-preliminaries} the preliminaries of the paper.
We prove the inapproximability results for stop constraint condition and time constraint condition in Sections \ref{sec-stop-nphard} and \ref{sec-time-nphard}, respectively.
The inapproximability results for Conditions (2) and (3) are given in Section \ref{sec-inapprox-previous-results}.
We modify an approximation algorithm for the MCMP problem into an approximation algorithm for minimizing the number of drivers in Section \ref{sec-MCMP-alg} and present a more practical approximation algorithm for the same minimization problem in Section \ref{sec-new-alg}.
The final section concludes the paper.

\section{Preliminaries} \label{sec-preliminaries}
A (undirected) graph $G$ consists of a set $V(G)$ of vertices and a set $E(G)$ of edges, where each edge $\{u,v\}$ of $E(G)$ is a (unordered) pair of vertices in $V(G)$.
A digraph $H$ consists of a set $V(H)$ of vertices and a set $E(H)$ of arcs, where each arc $(u, v)$ of $E(H)$ is an ordered pair of vertices in $V(H)$.
A graph $G$ (digraph $H$) is weighted if every edge of $G$ (arc of $H$) is assigned a real number as the edge length.
A {\em path} between vertex $v_0$ and vertex $v_k$ in graph $G$ is a sequence $e_1,..,e_k$ of edges, where $e_i=\{v_{i-1},v_i\}\in E(G)$ for $1\leq i\leq k$ and $v_i \neq v_j$ for $i \neq j$ and $0 \leq i, j \leq k$.
A path from vertex $v_0$ to vertex $v_k$ in a digraph $H$ is defined similarly with each $e_i=(v_{i-1},v_i)$ an arc in $H$.
The {\em length} of a path $P$ is the sum of the lengths of edges (arcs) in $P$.
For simplicity, we express a road network by a weighted undirected graph $G(V,E)$ with non-negative edge length: $V(G)$ is the set of locations in the network, an edge $\{u,v\}$ represents the road segment between $u$ and $v$.

In the ridesharing problem, we assume that the individual of every trip can be assigned as a driver or passenger.
In addition to a vehicle and individual, each trip has 
a source, a destination, a capacity of the vehicle, a set of preferred (optional) paths (e.g., shortest paths) to reach the destination, a limit (optional) on the detour distance/time from the preferred path to serve other individuals, a limit (optional) on the number of stops a driver wants to make to pick-up passengers, an earliest departure time, and a latest arrival time.
Each trip in the ridesharing problem is expressed by an integer label $i$ and specified by parameters $(s_i, t_i, n_i, d_i, \mathcal{P}_i, \delta_i, \alpha_i, \beta_i)$, which are defined in Table~\ref{table-para}.
\begin{table}[!ht]
\begin{center}
   \begin{tabular}{| c | l |}
   	\hline
   	\textbf{Parameter} & \textbf{Definition}                                                   \\ \hline
   	$s_i$              & The source (start location) of $i$ (a vertex in $G$)                  \\ 
   	$t_i$              & The destination of $i$ (a vertex in $G$)                              \\ 
   	$n_i$              & The number of seats (capacity) of $i$ available for passengers        \\ 
   	$d_i$              & The detour distance limit $i$ willing to make for offering services   \\ 
   	${\cal P}_i$       & The set of preferred paths of $i$ from $s_i$ to $t_i$ in $G$          \\ 
   	$\delta_i$         & The maximum number of stops $i$ willing to make to pick-up passengers \\ 
   	$\alpha_i$         & The earliest departure time of $i$                                    \\ 
   	$\beta_i$          & The latest arrival time of $i$                                        \\ \hline
   \end{tabular}
\caption{Parameters for a trip $i$.}
\label{table-para}
\end{center}
\vspace{-2mm}
\end{table}

When the individual of trip $i$ delivers (using $i$'s vehicle) the individual of a trip $j$, we say trip $i$ {\em serves} trip $j$ and call $i$ a {\em driver} and $j$ a {\em passenger}.
The \emph{serve relation} between a driver $i$ and a passenger $j$ is defined as follows.
A trip $i$ can serve $i$ itself and can serve a trip $j\neq i$ if $i$ and $j$ can arrive at their destinations by time $\beta_i$ and $\beta_j$ respectively such that $j$ is a passenger of $i$, the detour of $i$ is at most $d_i$, and the number of stops $i$ has to make to serve $j$ is at most $\delta_i$.
When a trip $i$ \emph{can serve} another trip $j$, it means that $i$-$j$ is a feasible assignment of a driver-passenger pair.
We extend this notion to a set $\sigma(i)$ of passenger trips that can be served by a driver $i$ ($i \in \sigma(i)$).
A driver $i$ can serve all trips of $\sigma(i)$ if the total detour of $i$ is at most $d_i$, the number of stops $i$ have to make to pick-up $\sigma(i)$ is at most $\delta_i$, and every $j \in \sigma(i)$ arrives at $t_j$ before $\beta_j$.
At any specific time point, a trip $i$ can serve at most $n_i+1$ trips. If trip $i$ serves some trips after serving some other trips (known as {\em re-take passengers} in previous studies), trip $i$ may serve more than $n_i+1$ trips. In this paper, we study the ridesharing problem in which no re-taking passenger is allowed.
A serve relation is \emph{transitive} if $i$ can serve $j$ and $j$ can serve $k$ imply $i$ can serve $k$.
Let $(G,R)$ be an instance of the ridesharing problem, where $G$ is a road network (weighted graph) and $R=\{1,..,l\}$ is a set of trips. $(S,\sigma)$, where $S\subseteq R$ is a set of trips assigned as drivers and $\sigma$ is a mapping $S\to 2^R$, is a partial solution to $(G,R)$ if
\begin{itemize}
\item for each $i\in S$, $i$ can serve $\sigma(i)$, 
\item for each pair $i,j \in S$ with $i\neq j$, $\sigma(i)\cap \sigma(j)=\emptyset$, and 
\item $\sigma(S)=\cup_{i\in S} \sigma(i)\subseteq R$.
\end{itemize}
When $\sigma(S)=R$, $(S,\sigma)$ is called a solution of $(G,R)$. For a (partial) solution $(S,\sigma)$ we sometimes simply call $S$ a (partial) solution when $\sigma$ is clear from the context or not related to the discussion.

We consider the problem of minimizing $|S|$ (the number of drivers) and the problem of minimizing the total travel distance of the drivers in $S$.
To investigate the relations between the computational complexity and problem parameters, Gu et al.~\cite{GLZ16} introduced the simplified minimization (ridesharing) problems with parameters $(s_i, t_i, n_i, d_i, \mathcal{P}_i)$ only and the following conditions:
\begin{itemize}
\item[(1)] All trips have the same destination or all trips have the same source, that is,
$t_i=D$ for every $i\in R$ or $s_i=\chi$ for every $i\in R$.
\item[(2)] Zero detour: each trip can only serve others on his/her preferred 
path, that is, $d_i=0$ for every $i\in R$.
\item[(3)] Fixed path: ${\cal P}_i$ has a unique preferred path $P_i$.
\end{itemize}
It is shown in~\cite{GLZ16} that if any one of Conditions (1), (2) and (3) is not satisfied, both minimization problems are NP-hard. Polynomial time exact algorithms are given in~\cite{GLZ19} for the simplified minimization problems if all of Conditions (1-3) and transitive serve relation are satisfied.
In this paper, we study more generalized minimization problems with all parameters in Table~\ref{table-para} considered.
To analyze the computational complexity of the more generalized minimization problems, we introduce two more conditions:
\begin{itemize}
\item[(4)] The number of stops each driver is willing to make to pick-up passengers is at least its capacity, that is, $\delta_i \geq n_i$ for every $i \in R$ (stop constraint).
\item[(5)] All trips have the same earliest departure time and same latest arrival time, that is, for every $i \in R$, $\alpha_i = \alpha$ and $\beta_i = \beta$ for some $\alpha < \beta$ (time constraint).
\end{itemize}
The polynomial-time exact algorithms in~\cite{GLZ19} can still apply to any ridesharing instance when all of Conditions (1-5) and transitive serve relation are satisfied.

\section{Inapproximabilities for stop constraint condition} \label{sec-stop-nphard}
We first show the NP-hardness results for the stop constraint condition, that is, when Conditions (1)-(3) and (5) are satisfied but Condition (4) is not.
When Condition (1) is satisfied, we assume all trips have the same destination (since it is symmetric to prove the case that all trips have the same source).
If all trips have distinct sources, one can solve both minimization problems by using the polynomial-time exact algorithms in~\cite{GLZ19}: when Conditions (1-3) are satisfied and each trip has a distinct source $s_i$, each trip is represented by a distinct vertex $i$ in the serve relation graph in~\cite{GLZ19}.
Each time a driver $i$ serves a trip $j$, $i$ must stop at $s_j \neq s_i$ to pick-up $j$. When Condition (4) is not satisfied ($\delta_i < n_i$), $i$ can serve at most $\delta_i$ passengers. Therefore, we can set the capacity $n_i$ to $\min\{n_i,\delta_i\}$ and apply the exact algorithms to solve the minimization problems.
In what follows, we assume trips have arbitrary sources (multiple trips may have a same source).

\subsection{Both minimization problems are NP-hard} \label{sec-stop-nphard-results}
We first prove both minimization problems are NP-hard. These proofs will provide a base for proving the inapproximabilities for both minimization problems.
The NP-hardness proofs are a reduction from the 3-partition problem. The decision problem of 3-partition is that given a set $A = \{a_1, a_2, ..., a_{3r}\}$ of $3r$ positive integers, where $r \geq 2$, $\sum^{3r}_{i=1} a_i = rM$ and $M/4 < a_i < M/2$, whether $A$ can be partitioned into $r$ disjoint subsets $A_1,A_2,....,A_r$ such that each subset has three elements of $A$ and the sum of integers in each subset is $M$. Given a 3-partition instance $A = \{a_1,..., a_{3r}\}$, construct a ridesharing problem instance $(G, R_A)$ as follows (also see Figure~\ref{fig-C1C2C3}).
\begin{itemize}
\item $G$ is a graph with $V(G) = \{D, u_1,..., u_{3r}, v_1,..., v_r\}$ and $E(G)$ having edges $\{u_i, v_1\}$ for $1 \leq i \leq 3r$, edges $\{v_i, v_{i+1}\}$ for $1 \leq i \leq r-1$ and $\{v_r,D\}$. Each edge $\{u,v\}$ has weight of 1, representing the travel distance from $u$ to $v$. It takes $r+1$ units of distance traveling from $u_i$ to $D$ for $1 \leq i \leq 3r$.
\item $R_A = \{1,...,3r+rM\}$ has $3r + rM$ trips. Let $\alpha$ and $\beta$ be valid constants representing time.
	\begin{itemize}
	\item Each trip $i$, $1 \leq i \leq 3r$, has source $s_i = u_i$, destination $t_i = D, n_i = a_i, d_i = 0, \delta_i = 1$, $\alpha_i = \alpha$ and $\beta_i = \beta$. Each trip $i$ has a preferred path $\{u_i, v_1\}, \{v_1,v_2\},...,\{v_r,D\}$ in $G$. 
	\item Each trip $i$, $3r+1 \leq i \leq 3r+rM$, has source $s_i = v_j$, $j = \lceil(i - 3r)/M\rceil$, destination $t_i = D$, $n_i = 0$, $d_i = 0$, $\delta_i = 0$, $\alpha_i = \alpha$, $\beta_i = \beta$ and a unique preferred path $\{v_j, v_{j+1}\},\{v_{j+1},v_{j+2}\},...,\{v_r,D\}$ in $G$.
	\end{itemize}
\end{itemize}

\begin{figure}[htbp]
\centering
\includegraphics[scale=0.68]{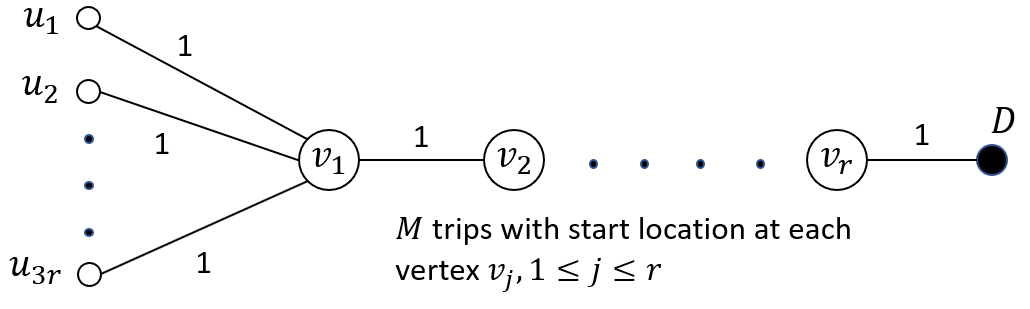}
\caption{Ridesharing instance based on a given 3-partition problem instance.}
\label{fig-C1C2C3}
\end{figure}

\begin{lemma}
Any solution for the instance $(G, R_A)$ has every trip $i$, $1 \leq i \leq 3r$, as a driver and total travel distance at least $3r \cdot (r+1)$.
\label{lemma-C1C2C3}
\end{lemma}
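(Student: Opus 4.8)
The plan is to establish two facts in sequence: first, that no trip $i$ with $1 \leq i \leq 3r$ can appear as a passenger in any solution, which forces every such trip to be a driver; and second, that each of these $3r$ forced drivers contributes exactly $r+1$ to the total travel distance, giving the stated bound of $3r \cdot (r+1)$.

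First I would exploit the structure of $G$, namely that each source $u_i$ is a leaf: the only edge incident to $u_i$ is $\{u_i, v_1\}$. For trip $i$ to be served as a passenger, some driver $k \neq i$ must pick up $i$ at $s_i = u_i$, which requires $u_i$ to lie on the route $k$ actually travels. Since Condition~(2) holds ($d_k = 0$), driver $k$ makes no detour and travels exactly along its preferred path. I would then rule out the two families of candidate drivers. Any other trip $k$ with $1 \leq k \leq 3r$ has preferred path $u_k, v_1, \ldots, v_r, D$, which contains no $u$-vertex other than $u_k$ and therefore does not pass through $u_i$; so with zero detour $k$ cannot reach $u_i$. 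Any trip $k$ with $3r+1 \leq k \leq 3r+rM$ has capacity $n_k = 0$ and hence cannot serve any passenger at all. Consequently no driver can serve trip $i$ as a passenger, so trip $i$ must itself be a driver.

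Having pinned down that every trip $i$ with $1 \leq i \leq 3r$ is a driver, I would bound its travel distance. Because $d_i = 0$, driver $i$ follows its preferred path $u_i, v_1, \ldots, v_r, D$ exactly; this path uses the edge $\{u_i,v_1\}$, the $r-1$ edges $\{v_j,v_{j+1}\}$, and the edge $\{v_r,D\}$, each of weight $1$, for a total length of $r+1$. Summing over the $3r$ forced drivers yields $3r(r+1)$. Any additional drivers among the trips $3r+1 \leq i \leq 3r+rM$ only increase the total, so the total travel distance of any solution is at least $3r \cdot (r+1)$.

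The main obstacle is the first step, ruling out every trip $1 \leq i \leq 3r$ from being a passenger; everything hinges on combining the leaf structure of the $u_i$ with the zero-detour condition and the zero capacity of the remaining trips. Once the driver set is fixed, the distance bound follows immediately.
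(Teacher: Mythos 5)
Your proof is correct and follows essentially the same route as the paper's: zero detour (together with the leaf position of each $u_i$ and the zero capacity of the trips $3r+1,\ldots,3r+rM$) forces every trip $i$ with $1 \leq i \leq 3r$ to be a driver, and summing their path lengths of $r+1$ gives the bound. You simply spell out the details that the paper's one-line argument leaves implicit.
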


\begin{proof}
Since Condition (2) is satisfied (detour is not allowed), every trip $i$, $1 \leq i \leq 3r$, must be a driver in any solution. A solution with exactly $3r$ drivers has total travel distance $3r \cdot (r+1)$, and any solution with a trip $i$, $3r+1 \leq i \leq 3r+rM$, as a driver has total travel distance greater than $3r \cdot (r+1)$.
\end{proof}

\begin{lemma}
Minimizing the number of drivers in the ridesharing problem is NP-hard when Conditions (1-3) and (5) are satisfied, but Condition (4) is not.
\label{lemma-C1C2C3-size}
\end{lemma}

\begin{proof}
We prove the lemma by showing that an instance $A = \{a_1,..., a_{3r}\}$ of the 3-partition problem has a solution if and only if the ridesharing problem instance $(G,R_A)$ has a solution of $3r$ drivers.

Assume that instance $A$ has a solution $A_1, ..., A_r$ where the sum of elements in each $A_j$ is $M$. For each $A_j = \{a_{j_1},a_{j_2},a_{j_3}\}$, $1 \leq j \leq r$, assign the three trips whose $n_{j_1} = a_{j_1}$, $n_{j_2} = a_{j_2}$ and $n_{j_3} = a_{j_3}$ as drivers to serve the $M$ trips with sources at vertex $v_j$. Hence, we have a solution of $3r$ drivers for $(G,R_A)$.

Assume that $(G,R_A)$ has a solution of $3r$ drivers. By Lemma~\ref{lemma-C1C2C3}, every trip $i$, $1 \leq i \leq 3r$, is a driver in the solution. Then, each trip $j$ for $3r+1 \leq j \leq 3r+rM$ must be a passenger in the solution, total of $rM$ passengers.
Since $\sum_{1 \leq i \leq 3r} a_i = rM$, each driver $i$, $1 \leq i \leq 3r$, serves exactly $n_i = a_i$ passengers. Since $a_i < M/2$ for every $a_i \in A$, at least three drivers are required to serve the $M$ passengers with sources at each vertex $v_j$, $1 \leq j \leq 3r$.
Due to $\delta_i = 1$, each driver $i$, $1 \leq i \leq 3r$, can only serve passengers with the same source. Therefore, the solution of $3r$ drivers has exactly three drivers $j_1, j_2,j_3$ to serve the $M$ passengers with sources at vertex $v_j$, implying $a_{j_1}+a_{j_2}+a_{j_3} = M$. Let  $A_j = \{a_{j_1},a_{j_2},a_{j_3}\}$, $1 \leq j \leq r$, we get a solution for the 3-partition instance.

The size of $(G,R_A)$ is polynomial in $r$. It takes a polynomial time to convert a solution of $(G,R_A)$ to a solution of the 3-partition instance and vice versa. 
\end{proof}

\begin{lemma}
Minimizing the total travel distance of drivers in the ridesharing problem is NP-hard when Conditions (1-3) and (5) are satisfied but Condition (4) is not.
\label{lemma-C1C2C3-distance}
\end{lemma}

\begin{proof}
Let $d_{sum}$ be the sum of travel distances of all trips $i$ with $1 \leq i \leq 3r$. Then the total travel distances of drivers in any solution for $(G,R_A)$ is at least $d_{sum} = 3r(r+1)$ by Lemma~\ref{lemma-C1C2C3}. We show that an instance $A = \{a_1,..., a_{3r}\}$ of the 3-partition problem has a solution if and only if instance $(G,R_A)$ has a solution with travel distance $d_{sum}$.

Assume that the 3-partition instance has a solution. Then there is a solution of $3r$ drivers for $(G,R_A)$ as shown in the proof of Lemma~\ref{lemma-C1C2C3-size}. The total travel distance of this solution is $d_{sum}$.

Assume that $(G,R_A)$ has a solution with total travel distance $d_{sum}$. Trips $i$ with $1 \leq i \leq 3r$ must be drivers. From this, there is a solution for the 3-partition instance as shown in the proof of Lemma~\ref{lemma-C1C2C3-size}.
\end{proof}

\subsection{Inapproximability results} \label{sec-stop-inapprox}
Based on the results in Section~\ref{sec-stop-nphard-results}, we extent our reduction to further show that it is NP-hard to approximate both minimization problems within a constant factor if Condition (4) is not satisfied.
Let $(G,R_A)$ be the ridesharing problem instance constructed based on a given 3-partition instance $A$ as described above for Lemma~\ref{lemma-C1C2C3-size}.
We modify $(G,R_A)$ to get a new ridesharing instance $(G,R')$ as follows.
For every trip $i$, $1 \leq i \leq 3r$, we multiply $n_i$ with $rM$, that is, $n_i = a_i \cdot rM$, where $r$ and $M$ are given in instance $A$. 
There are now $rM^2$ trips with sources at vertex $v_j$ for $1 \leq j \leq r$, and all such trips have the same destination, capacity, detour, stop limit, earlier departure time, latest arrival time, and preferred path as before.
The size of $(G,R')$ is polynomial in $r$ and $M$.
Note that Lemma~\ref{lemma-C1C2C3} holds for $(G,R')$ and $\sum_{i=1}^{3r} n_i = rM \sum_{i=1}^{3r} a_i = (rM)^2$.

\begin{lemma}
Let $(G,R')$ be a ridesharing problem instance constructed above from a 3-partition problem instance $A = \{a_1,\ldots, a_{3r}\}$.
The 3-partition problem instance $A$ has a solution if and only if the ridesharing problem instance $(G,R')$ has a solution $(\sigma, S)$ s.t. $3r \leq |S| < 3r + rM$, where $S$ is the set of drivers.
\label{lemma-stop-inapprox}
\end{lemma}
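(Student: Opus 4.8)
The plan is to prove the biconditional directly: build an explicit ridesharing solution from a 3-partition for the forward direction, and extract a 3-partition from the driver set for the reverse direction. The entire argument rests on two structural facts about $(G,R')$. First, by Lemma~\ref{lemma-C1C2C3} (which the excerpt notes still applies to $(G,R')$), every trip $i$ with $1 \leq i \leq 3r$ is forced to be a driver, so $|S| \geq 3r$ holds automatically in any solution. Second, since $\delta_i = 1$ for each such trip and all passenger trips stationed at a common vertex $v_j$ share the same source, a driver $i \in \{1,\dots,3r\}$ can make only one pick-up stop and hence can serve passengers originating at only a single vertex $v_j$.

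For the forward direction I would take a 3-partition $A_1,\dots,A_r$ with $\sum_{a \in A_j} a = M$ and, for each $j$, assign the three trips whose capacities are $a_{j_1}rM,\,a_{j_2}rM,\,a_{j_3}rM$ to pick up the passengers at $v_j$. Their combined capacity is $(a_{j_1}+a_{j_2}+a_{j_3})\,rM = rM^2$, exactly the number of passengers stationed at $v_j$, so every passenger is served as a passenger and the only drivers are trips $1,\dots,3r$. This yields a solution with $|S| = 3r$, which satisfies $3r \leq |S| < 3r + rM$.

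For the reverse direction, given a solution with $3r \leq |S| < 3r + rM$, I would count drivers per vertex. Let $D_j \subseteq \{1,\dots,3r\}$ be the drivers whose passengers lie at $v_j$; these sets are pairwise disjoint by the one-stop observation. Put $c_j = \sum_{i \in D_j} a_i$, and let $q_j$ be the number of passengers at $v_j$ that are themselves assigned as drivers. Since the serving capacity available at $v_j$ is $c_j\,rM$ while there are $rM^2$ passengers there, we get $q_j \geq rM^2 - c_j\,rM = rM\,(M - c_j)$, hence $q_j \geq rM\cdot\max(0,\, M - c_j)$. The passenger-drivers are exactly the trips in $S \setminus \{1,\dots,3r\}$, so $|S| = 3r + \sum_j q_j$, and the hypothesis $|S| < 3r + rM$ forces $\sum_j q_j < rM$, whence $\sum_j \max(0,\, M - c_j) < 1$.

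The crux, and the step I expect to need the most care, is turning this strict inequality into an exact partition via integrality. Because each $c_j$ and $M$ are integers, every term $\max(0,\, M - c_j)$ is a non-negative integer, so a sum strictly below $1$ must equal $0$; that is, $c_j \geq M$ for all $j$. Combined with $\sum_j c_j \leq \sum_{i=1}^{3r} a_i = rM$, this squeezes $\sum_j c_j = rM$ and $c_j = M$ for every $j$, and simultaneously shows every driver is assigned, so $\{D_j\}$ partitions $\{1,\dots,3r\}$. Invoking $M/4 < a_i < M/2$, a subset summing to exactly $M$ can contain neither at most two elements (their sum would be below $M$) nor at least four (above $M$), hence exactly three; thus $A_j = \{a_i : i \in D_j\}$ gives a valid 3-partition. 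The delicate point throughout is that multiplying capacities by $rM$ makes each per-vertex deficit come in chunks of size $rM$, so the allowed slack of fewer than $rM$ extra drivers cannot absorb even a single unit of imbalance. This rigidity is exactly what will widen the gap and drive the constant-factor inapproximability in the sequel.
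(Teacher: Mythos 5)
Your proposal is correct and follows essentially the same route as the paper: forced drivers via Lemma~\ref{lemma-C1C2C3}, the one-stop/one-source observation, the rigidity from capacities being multiples of $rM$, and the bounds $M/4 < a_i < M/2$ forcing triples. Your $\sum_j \max(0,\,M-c_j)<1$ counting merely repackages (a bit more cleanly, avoiding the paper's case split on $|S|=3r$ versus $|S|>3r$) the paper's argument that each per-vertex deficit comes in chunks of size $rM$.
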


\begin{proof}
Assume that instance $A$ has a solution $A_1,\ldots,A_r$ where the sum of elements in each $A_j$ is $M$. For each $A_j = \{a_{j_1},a_{j_2},a_{j_3}\}$, $1 \leq j \leq r$, we assign the three trips whose $n_{j_1} = a_{j_1} \cdot rM$, $n_{j_2} = a_{j_2} \cdot rM$ and $n_{j_3} = a_{j_3} \cdot rM$ as drivers to serve the $rM^2$ trips with sources at vertex $v_j$. Hence, we have a solution of $3r$ drivers for $(G,R')$.

Assume that $(G,R')$ has a solution with $3r \leq |S| < 3r + rM$ drivers.
Let $R'(1,3r)$ be the set of trips in $R'$ with labels from $1$ to $3r$.
By Lemma~\ref{lemma-C1C2C3}, every trip $i \in R'(1,3r)$ is a driver in $S$.
Since $a_i < M/2$ for every $a_i \in A$, $n_i < rM \cdot M/2$ for every trip $i \in R'(1,3r)$.
From this, it requires at least three drivers in $R'(1,3r)$ to serve the $rM^2$ trips with sources at each vertex $v_j$, $1 \leq j \leq r$.
For every trip $i \in R'(1,3r)$, $i$ can only serve passengers with the same source due to $\delta_i = 1$.
There are two cases: (1) $|S| = 3r$ and (2) $3r < |S| < 3r + rM$.

(1) It follows from the proof of Lemma~\ref{lemma-C1C2C3-size} that every three drivers $j_1, j_2, j_3$ of the $3r$ drivers serve exactly $rM^2$ passengers with sources at vertex $v_j$.
Then similarly, let $A_j = \{a_{j_1},a_{j_2},a_{j_3}\}$, $1 \leq j \leq r$, we get a solution for the 3-partition problem instance.

(2) For every vertex $v_j$, let $X_j$ be the set of trips with source $v_j$ not served by drivers in $R'(1,3r)$. Then $0 \leq |X_j| < rM$ due to $|S| < 3r + rM$.
For every trip $i \in R'(1,3r)$, $n_i = a_i \cdot rM$ is a multiple of $rM$.
Hence, the sum of capacity for any trips in $R'(1,3r)$ is also a multiple of $rM$, and further, $n_i + n_{i'} = (a_i + a_{i'}) \cdot rM < rM \cdot (M-1)$ for every $i, i' \in R'(1,3r)$ because $a_i<M/2$ and $a_{i'}<M/2$.
From these and $|X_j| < rM$, there are 3 drivers $j_1,j_2,j_3\in R'(1,3r)$ to serve trips with source $v_j$ and $n_{j_1} + n_{j_2} + n_{j_3} \geq rM^2$.
Because $n_{j_1}+n_{j_2}+n_{j_3} \geq rM^2$ for every $1\leq j\leq r$ and $\sum_{1\leq i\leq 3r} n_i = (rM)^2$, $n_{j_1} + n_{j_2} + n_{j_3} = rM^2$ for every $j$.
Thus, we get a solution with $A_j=\{a_{j_1},a_{j_2},a_{j_3}\}$, $1\leq j\leq r$, for the 3-partition problem. 

It takes a polynomial time to convert a solution of $(G,R')$ to a solution of the 3-partition instance and vice versa.
\end{proof}

\begin{theorem}
Let $(G,R')$ be the ridesharing instance stated above based on a 3-partition instance.
Approximating the minimum number of drivers for $(G,R')$ within a constant factor is NP-hard.
This implies that it is NP-hard to approximate the minimum number of drivers within a constant factor for a ridesharing instance when Conditions (1-3) and (5) are satisfied and Condition (4) is not.
\label{theorem-stop-inapprox}
\end{theorem}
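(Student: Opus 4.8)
The plan is to exploit the solution-size gap that Lemma~\ref{lemma-stop-inapprox} creates, together with the lower bound of Lemma~\ref{lemma-C1C2C3}. First I would pin down the two cases sharply. If the 3-partition instance $A$ is a yes-instance, the forward direction of Lemma~\ref{lemma-stop-inapprox} exhibits a solution with exactly $3r$ drivers, and since Lemma~\ref{lemma-C1C2C3} forces every solution to use at least $3r$ drivers, the optimum is exactly $3r$. If $A$ is a no-instance, then by the contrapositive of Lemma~\ref{lemma-stop-inapprox} no solution has $|S| < 3r + rM$; combined with the universal lower bound $|S| \ge 3r$, every solution then uses at least $3r + rM$ drivers, so the optimum is at least $3r + rM$. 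Thus the optimum jumps from $3r$ to at least $3r + rM = 3r\,(1 + M/3)$, a multiplicative gap of $1 + M/3$ separating the two cases.

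Next I would convert this gap into inapproximability for an arbitrary fixed constant $c \ge 1$. Suppose, for contradiction, that a polynomial-time $c$-approximation algorithm $\mathcal{A}$ exists. I may assume the 3-partition instance satisfies $M > 3(c-1)$, hence $1 + M/3 > c$: scaling every $a_i$ by a constant $t$ depending only on $c$ multiplies $M$ by $t$ while leaving both the partition structure and the bound $M/4 < a_i < M/2$ intact, so the large-$M$ case is still NP-hard. Building $(G,R')$ from such an instance and running $\mathcal{A}$, on a yes-instance $\mathcal{A}$ returns at most $c \cdot 3r < 3r + rM$ drivers, whereas on a no-instance the optimum, and therefore $\mathcal{A}$'s output, is at least $3r + rM$. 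The two output ranges are disjoint, so testing whether $\mathcal{A}$'s output is below $3r + rM$ decides 3-partition in polynomial time.

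The step I expect to be the main obstacle is ensuring the whole reduction stays polynomial-size, since the construction places $rM^2$ passenger trips at each vertex $v_j$, which is polynomial in $r$ and $M$ but not obviously in the bit-length of $A$. I would resolve this by invoking the \emph{strong} NP-hardness of 3-partition: the problem remains NP-hard when the integers are polynomially bounded in $r$ (equivalently, given in unary), so $M = \mathrm{poly}(r)$, and with $t$ a fixed constant the number of trips and the size of $G$ remain polynomial. Feeding the disjoint-range test with this polynomial reduction would then decide a strongly-NP-hard problem in polynomial time, contradicting $\mathrm{P}\neq\mathrm{NP}$ and ruling out any constant-factor approximation. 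The concluding sentence of the theorem would then follow at once, because by construction $(G,R')$ satisfies Conditions (1)--(3) and (5) while violating Condition (4) (each driver-trip $i$, $1 \le i \le 3r$, has $\delta_i = 1 < rM \le n_i$), so this family already witnesses the hardness within the stated conditions.
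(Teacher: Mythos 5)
Your proposal is correct and follows essentially the same route as the paper: both exploit the gap from Lemma~\ref{lemma-stop-inapprox} (optimum exactly $3r$ for yes-instances versus at least $3r+rM$ for no-instances) and choose $M$ large enough that a $c$-approximation would separate the two cases. The only difference is that you explicitly justify two points the paper leaves implicit --- that one may restrict to instances with $M > 3(c-1)$ by scaling, and that the reduction is polynomial because 3-partition is strongly NP-hard --- which is a welcome tightening rather than a different argument.
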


\begin{proof}
Assume that there is a polynomial time $c$-approximation algorithm $C$ for instance $(G,R')$ for any constant $c > 1$.
This means that $C$ will output a solution $(\sigma_C, S_C)$ for $(G,R')$ such that $OPT(R') \leq |S_C| \leq c \cdot OPT(R')$, where $OPT(R')$ is the minimum number of drivers for $(G,R')$.
When the 3-partition instance is a ``No'' instance, the optimal value for $(G,R')$ is $OPT(R') \geq 3r + rM$ by Lemma~\ref{lemma-stop-inapprox}. Hence, algorithm $C$ must output a value $|S_C| \geq 3r + rM$.
When the 3-partition instance is a ``Yes'' instance, the optimal value for $(G,R')$ is $OPT(R') = 3r$.
For any constant $c > 1$, taking $M$ such that $c < M/3 + 1$.
The output $|S_C|$ from algorithm $C$ on $(G,R')$ is $3r \leq |S_C| \leq 3rc < 3r + rM$ for a 3-partition ``Yes'' instance.
Therefore, by running the $c$-approximation algorithm $C$ on any ridesharing instance $(G,R')$ and checking the output value $|S_C|$ of $C$, we can answer the 3-partition problem in polynomial time, which contradicts that the 3-partition problem is NP-hard unless $P=NP$. 
\end{proof}

\begin{theorem}
It is NP-hard to approximate the total travel distance of drivers within any constant factor for a ridesharing instance when Conditions (1-3) and (5) are satisfied and Condition (4) is not.
\label{theorem-stop-distance-inapprox}
\end{theorem}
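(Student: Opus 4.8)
The plan is to reuse the instance $(G,R')$ of Lemma~\ref{lemma-stop-inapprox} and convert the gap in the \emph{number} of drivers between ``Yes'' and ``No'' 3-partition instances into a gap in the \emph{total travel distance}. The key structural observation is that distances are decoupled from capacities: by Lemma~\ref{lemma-C1C2C3} every trip $i$ with $1 \leq i \leq 3r$ is a driver in every solution and travels exactly $r+1$ to reach $D$, so the trips in $R'(1,3r)$ alone account for $d_{sum} = 3r(r+1)$, no matter which passengers they carry. Any \emph{additional} driver is a trip with source $v_j$ for some $j$, and such a trip travels $r-j+1 \geq 1$ to reach $D$. Hence the total travel distance of any solution $(S,\sigma)$ equals $d_{sum}$ plus the sum of the (at least unit) distances of the $|S|-3r$ extra drivers.

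First I would pin down the two endpoints of the gap. For a ``Yes'' instance, the $3r$-driver solution constructed in Lemma~\ref{lemma-stop-inapprox} has travel distance exactly $d_{sum}$, so $OPT(R') = d_{sum}$. For a ``No'' instance, Lemma~\ref{lemma-stop-inapprox} forces $|S| \geq 3r + rM$ in \emph{every} solution, so every solution has at least $rM$ extra drivers, each contributing distance at least $1$; therefore every solution has distance at least $d_{sum} + rM$, giving $OPT(R') \geq d_{sum} + rM$.

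The new ingredient, and the step I expect to be the main obstacle, is amplifying this additive gap of $rM$ into a multiplicative gap exceeding an arbitrary constant $c$. The ratio of the two endpoints is $1 + \frac{rM}{d_{sum}} = 1 + \frac{M}{3(r+1)}$, which, unlike the $1 + M/3$ gap used in Theorem~\ref{theorem-stop-inapprox}, depends on $r$; so taking $M$ merely ``large'' does not suffice, and $M$ must be made large \emph{relative to} $r$. I would arrange this by scaling the 3-partition instance before applying the construction: multiply every $a_i$ by an integer $t = \Theta((r+1)c)$. Scaling preserves the existence of a 3-partition and, since 3-partition is strongly NP-hard (so $M$ may be taken polynomial in $r$) while $t$ is polynomial in $r$ for fixed $c$, the resulting $(G,R')$ stays of polynomial size; meanwhile the sum parameter becomes $tM > 3(r+1)(c-1)$, which is exactly the condition $\frac{M}{3(r+1)} > c-1$.

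Finally, I would run a hypothetical polynomial-time $c$-approximation algorithm $C$ on this instance. On a ``Yes'' instance, $C$ returns distance at most $c \cdot d_{sum} < d_{sum} + rM$, where the strict inequality is precisely $(c-1)\,3r(r+1) < rM$ guaranteed by the scaling; on a ``No'' instance, $C$ returns distance at least $OPT(R') \geq d_{sum} + rM$. Thus the value returned by $C$ separates ``Yes'' from ``No'' instances, so $C$ would decide 3-partition in polynomial time, contradicting its NP-hardness unless $P = NP$. The only points requiring care are the ``No''-case distance lower bound (that the at least $rM$ forced extra drivers each cost at least one unit of distance) and checking that the scaled reduction remains polynomial; the rest parallels the proof of Theorem~\ref{theorem-stop-inapprox}.
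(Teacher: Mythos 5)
Your proposal is correct and follows essentially the same route as the paper's proof: the same instance $(G,R')$, the same distance endpoints $3r(r+1)$ versus $3r(r+1)+rM$ derived from Lemma~\ref{lemma-C1C2C3} and Lemma~\ref{lemma-stop-inapprox}, and the same gap argument against a hypothetical $c$-approximation algorithm. The one place you go beyond the paper is in justifying the requirement $c < \frac{M}{3(r+1)}+1$: the paper simply says to ``take $M$ and $r$'' satisfying it, whereas you explicitly scale the 3-partition instance by $t=\Theta((r+1)c)$ and invoke strong NP-hardness to keep the reduction polynomial --- a legitimate and in fact more careful treatment of a step the paper leaves implicit.
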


\begin{proof}
Let $(G,R')$ be the ridesharing problem instance used in Theorem~\ref{theorem-stop-inapprox}, based on a given 3-partition instance $A = \{a_1,..., a_{3r}\}$.
Let $d(S)$ be the sum of travel distances for a set $S$ of drivers.
Let $R'(1,3r)$ be the set of trips in $R'$ with labels from $1$ to $3r$.
By Lemma~\ref{lemma-C1C2C3}, all of $R'(1,3r)$ must be drivers in any solution for $(G,R')$ and $d(R'(1,3r)) = 3r(r+1)$.
Assume that there is a polynomial time $c$-approximation algorithm $C$ for the ridesharing problem $(G,R')$ for any constant $c > 1$.
This means that $C$ will output a solution $(\sigma_C, S_C)$ for $(G,R')$ such that $OPT(R') \leq d(S_C) \leq c \cdot OPT(R')$, where $OPT(R')$ is the minimum total travel distance of drivers for $(G,R')$.
By Lemma~\ref{lemma-stop-inapprox}, when the 3-partition instance is a ``No'' instance, the number of drivers in any solution for $(G,R')$ is at least $3r + rM$. All $rM$ trips (of the $3r + rM$) can have source at vertex $v_r$, so $d(S_C) \geq 3r(r+1) + rM$.
When the 3-partition instance is a ``Yes'' instance, the optimal value for $(G,R')$ is $OPT(R') = 3r(r+1)$.
For any constant $c > 1$, taking $M$ and $r$ such that $c < \frac{M}{3(r+1)} + 1$.
The output $d(S_C)$ from algorithm $C$ on $(G,R')$ is $3r(r+1) \leq d(S_C) \leq 3r(r+1)c < 3r(r+1) + rM$ for a 3-partition ``Yes'' instance.
Therefore, by running the $c$-approximation algorithm $C$ on any ridesharing instance $(G,R')$ and checking the output value $d(S_C)$ of $C$, we can answer the 3-partition problem in polynomial time, which contradicts that the 3-partition problem is NP-hard unless $P=NP$. 
\end{proof}

\section{Inapproximabilities for time constraint condition} \label{sec-time-nphard}
Assume that Conditions (1-4) are satisfied but Condition (5) is not, that is, trips can have arbitrary departure time and arrival time.
Recall that we assume all trips have the same destination when Condition (1) is satisfied (the same reduction with simple modifications can also be applied to all trips have the same source).

\subsection{NP-hardness results}
We first show that both minimization problems are NP-hard, and these proofs serve as part of the inapproximability proofs. The NP-hardness proofs are a reduction from 3-partition problem, which is similar to the one used in Lemma~\ref{lemma-C1C2C3-size}.
Given a 3-partition minimization problem instance, construct a ridesharing instance $(G, R_A)$ with $G$ shown in Figure~\ref{fig-C1C2C3}. The only differences are the values of $\alpha_i$, $\beta_i$ and $\delta_i$.
\begin{itemize}
	\item For trips $i$, $1 \leq i \leq 3r$, source $s_i = u_i$, destination $t_i = D$, $n_i = a_i$, $d_i = 0$, $\delta_i = n_i$, $\alpha_i = 0$, $\beta_i = 2r$. Each trip $i$ has a preferred path $\{u_i, v_1\}, \{v_1,v_2\},...,\{v_r,D\}$ in $G$. 
	\item For trips $i$, $3r+1 \leq i \leq 3r+rM$, source $s_i = v_j$, $j = \lceil(i - 3r)/M\rceil$, destination $t_i = D$, $n_i = 0$, $d_i = 0$, $\delta_i = 0$. Each trip $i$ has a unique preferred path $\{v_j, v_{j+1}\},\{v_{j+1},v_{j+2}\},\ldots,\{v_r,D\}$ in $G$, $\alpha_i = r$ and $\beta_i = 2r - j + 1$, where $j = \lceil(i - 3r)/M\rceil$.
\end{itemize}
Note that every trip $i \in R_A$ has the same travel distance from $s_i$ to $t_i$ as previous construction in Section~\ref{sec-stop-nphard}.
Since they have the same construction, Lemma~\ref{lemma-C1C2C3} also holds for this ridesharing instance $(G, R_A)$.

\begin{lemma}
In any solution for the instance $(G, R_A)$, all trips served by a driver $i \in R_A$ (other than $i$ itself), must have the same source $v_j$, for some $j \in [1,\ldots, 3r]$.
\label{lemma-C1C2C3-serve}
\end{lemma}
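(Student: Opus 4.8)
The plan is to prove the lemma by a timing argument that exploits the tightness of the arrival deadlines $\beta_i$ imposed on the passenger-type trips. First I would record the relevant travel times in $G$: since every edge has unit weight, traveling along the forced path $v_j, v_{j+1}, \ldots, v_r, D$ takes exactly $r - j + 1$ time units from $v_j$ to $D$, while traveling from $u_i$ to $v_j$ takes $j$ time units. These two quantities are the only facts about the geometry of $G$ that the argument needs.

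Next I would pin down which trips a driver $i$ can possibly serve apart from itself. Only the trips with $1 \leq i \leq 3r$ have positive capacity, so any driver that serves a passenger must be one of these, and by Lemma~\ref{lemma-C1C2C3} all such trips are indeed drivers. Because $d_i = 0$, driver $i$ is confined to its preferred path $u_i, v_1, \ldots, v_r, D$, hence it can only pick up a trip whose source lies on that path. The unique trip with source $u_i$ is $i$ itself (the vertices $u_i$ are pairwise distinct) and $D$ is no trip's source, so every trip in $\sigma(i) \setminus \{i\}$ is a passenger-type trip with source some $v_j$, $1 \leq j \leq r$. The lemma thus reduces to showing these sources all coincide.

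The core of the argument is the tightness claim that driver $i$ can pick up a passenger with source $v_j$ \emph{only} at time exactly $r$. Indeed, the passenger does not depart before $\alpha_i = r$, so the pickup time is at least $r$; and since the passenger must reach $D$ by $\beta_i = 2r - j + 1$ while the remaining travel from $v_j$ to $D$ costs $r - j + 1$, the driver must leave $v_j$ (carrying the passenger) no later than time $r$. Hence both the pickup at $v_j$ and the departure from $v_j$ occur at exactly time $r$. I would then suppose for contradiction that $i$ serves passengers at two distinct sources $v_j$ and $v_{j'}$ with $j < j'$. Since zero detour keeps $i$ on its preferred path, which visits $v_1, \ldots, v_r$ in increasing index order, the driver reaches $v_j$ before $v_{j'}$; departing $v_j$ at time $r$ it arrives at $v_{j'}$ only at time $r + (j' - j) > r$, too late for the mandatory pickup there. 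This contradiction forces all served passengers to share a single source $v_j$.

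I expect the point requiring care to be the tightness computation together with the monotonicity of the driver's route: I must make explicit that $d_i = 0$ (reinforced by the no re-take assumption) compels $i$ to traverse $v_1, \ldots, v_r$ exactly once in increasing order, so that the forced pickup at the earlier vertex $v_j$ at time $r$ genuinely rules out a timely arrival at any later $v_{j'}$. It is worth verifying that serving a single source is actually feasible, namely that the driver can reach $v_j$ by its free time $j \leq r$, wait until $r$, and still arrive at $D$ by $2r - j + 1 \leq 2r = \beta_i$, so the construction is not vacuous; but this check is routine.
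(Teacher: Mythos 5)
Your proposal is correct and follows essentially the same argument as the paper: the key step in both is that the deadline $\beta_j = 2r-j+1$ together with $\alpha_j = r$ forces every pickup at $v_j$ to occur at exactly time $r$, which makes pickups at two distinct vertices $v_j$ and $v_{j'}$ mutually exclusive. The only minor difference is that you use the zero-detour constraint to dismiss the reversed pickup order outright, where the paper checks both orders explicitly; this does not change the substance of the proof.
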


\begin{proof}
By Lemma~\ref{lemma-C1C2C3}, every trip $i$, $1 \leq i \leq 3r$, is a driver in any solution. Thus, only trips with source $v_j$, $1 \leq j \leq 3r$, can be passengers. 
Let $j$ be a trip with source $v_j$. The travel time from $v_j$ to $D$ is $r-j+1$. Since $\beta_{j} = 2r - j +1$, $j$ must be picked-up no later than time $r$. Otherwise, $j$ cannot arrive at $t_{j} = D$ by time $\beta_{j}$.
From this and the fact that $\alpha_{j} = r$, $j$ must be picked-up at time $r$ exactly.
Suppose that driver $i$ serves trip $j$. The travel time from $s_i$ to $s_{j} = v_j$ is $j \leq r$. $i$ can arrive at $D$ (after delivering $j$) no later than time $2r = \beta_i$.

Let $j_1$ and $j_2$ be two trips with $s_{j_1} = v_{j_1}$, $s_{j_2} = v_{j_2}$ and $j_1 < j_2$.
Then any trip $i$ with $1\leq i\leq 3r$ can serve only one of $j_1$ and $j_2$ due to the following reasons.
Suppose $i$ picks-up $j_1$ first.
By the time $i$ reaches $v_{j_2}$ after picking-up $j_1$, it will pass time $r$, and from above, $i$ can no longer serve $j_2$. Otherwise, $j_2$ will not be arrive on time.
Suppose $i$ picks-up $j_2$ first.
When $i$ reaches $v_{j_1}$ by going back, it will also pass time $r$. Hence, $i$ cannot serve $j_1$ in this case.
Therefore, if $i$ decides to serve a trip $j$ with source $v_j$, the only other trips $i$ can serve must also have source $v_j$. 
\end{proof}
Lemma~\ref{lemma-C1C2C3-serve} actual implies that every driver $i$ ($1 \leq i \leq 3r$) in any solution for $(G, R_A)$ will only make at most one stop, effectively making $\delta_i = 1$.

\begin{lemma}
Minimizing the number of drivers in the ridesharing problem is NP-hard when Conditions (1-4) are satisfied but Condition (5) is not.
\label{lemma-C1C2C3-time-size}
\end{lemma}

\begin{proof}
Assume that the 3-partition instance has a solution $A_1, ..., A_r$ where the sum of elements in each $A_j$ is $M$. For each $A_j = \{a_{j_1},a_{j_2},a_{j_3}\}$, $1 \leq j \leq r$, we assign the three trips whose $n_{j_1} = a_{j_1}$, $n_{j_2} = a_{j_2}$ and $n_{j_3} = a_{j_3}$ as drivers to serve the $M$ trips with sources at vertex $v_j$. Hence, we have a solution of $3r$ drivers for $(G,R_A)$.

Assume that $(G,R_A)$ has a solution of $3r$ drivers. By Lemma~\ref{lemma-C1C2C3}, every trip $i$, $1 \leq i \leq 3r$, is a driver in the solution. Similarly, each driver $i$ serves exactly $n_i = a_i$ passengers, and at least three drivers are required to serve the $M$ passengers with sources at each vertex $v_j$, $1 \leq j \leq 3r$.
By Lemma~\ref{lemma-C1C2C3-serve}, each driver $i$, $1 \leq i \leq 3r$, can only serve passengers with the same source. Therefore, the solution of $3r$ drivers has exactly three drivers $j_1, j_2,j_3$ to serve the $M$ passengers with sources at the vertex $v_j$, implying $a_{j_1}+a_{j_2}+a_{j_3} = M$. Let  $A_j = \{a_{j_1},a_{j_2},a_{j_3}\}$, $1 \leq j \leq r$, we get a solution for the 3-partition problem instance.

The size of $(G,R_A)$ is polynomial in $r$. It takes a polynomial time to convert a solution of $(G,R_A)$ to a solution of the 3-partition instance and vice versa. 
\end{proof}

With a similar argument to that of Lemma~\ref{lemma-C1C2C3-distance}, we have the following lemma.
\begin{lemma}
Minimizing the total travel distance of drivers in the ridesharing problem is NP-hard when Conditions (1-4) are satisfied but Condition (5) is not.
\label{lemma-C1C2C3-time-distance}
\end{lemma}

\subsection{Inapproximability results}
It is NP-hard to approximate the minimum number of drivers and total travel distance of drivers within a constant factor for the ridesharing problem when Conditions (1-4) are satisfied but condition (5) is not.
The proofs are identical to the inapproximability proof of Theorem~\ref{theorem-stop-inapprox} and Theorem~\ref{theorem-stop-distance-inapprox} respectively for each minimization problem.
Let $(G,R_A)$ be the ridesharing problem instance constructed based on a given 3-partition instance as described in Section~\ref{sec-time-nphard}.
Construct a ridesharing instance $(G,R')$ from $(G,R_A)$ as described in Section~\ref{sec-stop-inapprox}.
Then Lemma~\ref{lemma-stop-inapprox} and Lemma~\ref{lemma-C1C2C3-serve} can be applied to $(G,R')$.
From this, the analysis of Theorem~\ref{theorem-stop-inapprox} and Theorem~\ref{theorem-stop-distance-inapprox} can be applied to $(G,R')$, and we have the following theorems.

\begin{theorem}
It is NP-hard to approximate the minimum number of drivers within any constant factor for a ridesharing instance satisfying Conditions (1-4) but not Condition (5).
\label{lemma-time-approx}
\end{theorem}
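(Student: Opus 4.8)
The plan is to reuse the reduction-and-amplification framework of Theorem~\ref{theorem-stop-inapprox}, substituting the time-constraint construction for the stop-constraint one. The construction is already in place: I start from the instance $(G,R_A)$ of Theorem~\ref{theorem-C1C2C3-time-size} and amplify it exactly as in Section~\ref{sec-stop-inapprox}, multiplying each $n_i$ by $rM$ for $1 \leq i \leq 3r$ and replicating the $v_j$-sourced passenger trips to obtain $rM^2$ of them at each $v_j$. Call the result $(G,R')$; its size is polynomial in $r$ and $M$, and $\sum_{i=1}^{3r} n_i = (rM)^2$.

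The first substantive step is to transfer the structural lemmas to $(G,R')$. Lemma~\ref{lemma-C1C2C3} already holds for the time-constraint construction, and since the amplification only scales capacities and duplicates passengers, it continues to hold for $(G,R')$, forcing every trip $i$ with $1 \leq i \leq 3r$ to be a driver. The key observation is that Lemma~\ref{lemma-C1C2C3-serve} --- which shows the timing constraints force every driver to serve only passengers sharing a single source $v_j$ --- plays precisely the role that the stop bound $\delta_i = 1$ played in the stop-constraint reduction. With this substitution, the gap lemma (Lemma~\ref{lemma-stop-inapprox}) can be re-derived for $(G,R')$: a ``Yes'' 3-partition instance yields a solution with exactly $3r$ drivers, while a ``No'' instance forces $|S| \geq 3r + rM$.

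Given the gap lemma, the inapproximability argument is a direct copy of Theorem~\ref{theorem-stop-inapprox}. I would suppose a polynomial-time $c$-approximation algorithm $C$ exists for some constant $c > 1$, choose $M$ so that $c < M/3 + 1$, and observe that on a ``Yes'' instance $C$ must output $3r \leq |S_C| \leq 3rc < 3r + rM$, whereas on a ``No'' instance $|S_C| \geq 3r + rM$. Inspecting the output value therefore decides 3-partition in polynomial time, contradicting its NP-hardness unless $P = NP$.

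The main obstacle --- and essentially the only content that differs from the stop-constraint proof --- is confirming that Lemma~\ref{lemma-C1C2C3-serve} survives the amplification, since $(G,R')$ now carries $rM^2$ passengers at each $v_j$ rather than $M$. Here I would recheck that the timing argument (every $v_j$-passenger must be picked up at exactly time $r$, and no single driver can reach two distinct sources before time $r$) is insensitive to the passenger multiplicity. Because Lemma~\ref{lemma-C1C2C3-serve} reasons about one driver attempting to serve two trips with \emph{distinct} sources and never appeals to how many passengers sit at a given source, the same-source restriction carries over unchanged, so this obstacle is readily dispatched and the remaining counting is identical to the stop-constraint case.
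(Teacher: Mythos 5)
Your proposal is correct and takes essentially the same route as the paper: the paper likewise amplifies the time-constraint instance of Theorem~\ref{theorem-C1C2C3-time-size} exactly as in Section~\ref{sec-stop-inapprox} and then invokes Lemma~\ref{lemma-C1C2C3-serve} in place of the $\delta_i=1$ stop bound, so that Lemma~\ref{lemma-stop-inapprox} and the gap argument of Theorem~\ref{theorem-stop-inapprox} carry over unchanged. Your explicit check that Lemma~\ref{lemma-C1C2C3-serve} is insensitive to passenger multiplicity is a detail the paper leaves implicit, but the argument is the same.
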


\begin{theorem}
It is NP-hard to approximate the minimum total travel distance of drivers within any constant factor for a ridesharing instance satisfying Conditions (1-4) but not Condition (5).
\label{lemma-time-distnace-approx}
\end{theorem}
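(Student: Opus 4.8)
The plan is to mirror the argument of Theorem~\ref{theorem-stop-distance-inapprox} essentially verbatim, substituting the time-based structural lemma for the stop-based one. I would begin by taking the time-constraint base instance $(G,R_A)$ of Section~\ref{sec-time-nphard} and forming the scaled instance $(G,R')$ exactly as in Section~\ref{sec-stop-inapprox}: multiply each capacity $n_i = a_i$ by $rM$ for $1 \le i \le 3r$, producing $rM^2$ zero-capacity passenger trips at each $v_j$ and $\sum_{i=1}^{3r} n_i = (rM)^2$. Since the travel distances and the time windows $\alpha_i, \beta_i$ are untouched by the scaling, both Lemma~\ref{lemma-C1C2C3} and Lemma~\ref{lemma-C1C2C3-serve} continue to hold for $(G,R')$; in particular every trip in $R'(1,3r)$ is forced to be a driver and $d(R'(1,3r)) = 3r(r+1)$, where $d(S)$ denotes the total travel distance of a driver set $S$.

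Next I would establish the analog of Lemma~\ref{lemma-stop-inapprox} for this instance, which is the engine of the separation. The only place the stop condition $\delta_i = 1$ entered the proof of Lemma~\ref{lemma-stop-inapprox} was to conclude that each driver serves passengers from a single source $v_j$; Lemma~\ref{lemma-C1C2C3-serve} delivers exactly this conclusion here through the time windows instead. Consequently the divisibility counting (each $n_i$ a multiple of $rM$, and $n_i + n_{i'} = (a_i + a_{i'}) \cdot rM < rM(M-1)$ since $a_i, a_{i'} < M/2$) transfers unchanged to show that $A$ is a ``Yes'' instance if and only if $(G,R')$ admits a solution with $3r \le |S| < 3r + rM$.

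Finally I would run the reduction. Suppose a polynomial-time $c$-approximation algorithm $C$ exists for minimizing total travel distance, outputting $S_C$ with $OPT(R') \le d(S_C) \le c \cdot OPT(R')$. On a ``No'' instance the number of drivers is at least $3r + rM$, and even placing all excess passenger-drivers at $v_r$ (each contributing unit distance) forces $d(S_C) \ge 3r(r+1) + rM$. On a ``Yes'' instance $OPT(R') = 3r(r+1)$, so choosing $M$ and $r$ with $c < \frac{M}{3(r+1)} + 1$ yields $d(S_C) \le 3r(r+1)c < 3r(r+1) + rM$. The threshold $3r(r+1) + rM$ thus separates the two cases, so inspecting $d(S_C)$ decides 3-partition in polynomial time, contradicting its NP-hardness unless $P = NP$.

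The main obstacle I anticipate is not the counting, which is identical to the stop-constraint case, but confirming that the time-feasibility argument survives the capacity scaling. Concretely, I must verify that multiplying capacities by $rM$ creates no new feasible driver--passenger pairing that Lemma~\ref{lemma-C1C2C3-serve} would otherwise forbid. Since that lemma's proof depends only on the fixed travel times from each $v_j$ to $D$ and the windows $[\alpha_i, \beta_i]$ --- all independent of capacity --- the forcing that every passenger must be picked up exactly at time $r$ and that only same-source trips can be co-served remains intact, and the remainder is a direct transcription of Theorem~\ref{theorem-stop-distance-inapprox}.
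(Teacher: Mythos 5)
Your proposal is correct and follows essentially the same route as the paper, which itself disposes of this theorem by constructing the scaled instance $(G,R')$ from the time-constraint base instance and invoking the proofs of Lemma~\ref{lemma-stop-inapprox} and Theorem~\ref{theorem-stop-distance-inapprox} with Lemma~\ref{lemma-C1C2C3-serve} supplying the ``same-source only'' property in place of the $\delta_i=1$ argument. Your explicit check that the time-feasibility forcing is independent of the capacity scaling is exactly the (unstated) verification the paper relies on.
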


\section{Inapproximabilities for Conditions (2) and (3)} \label{sec-inapprox-previous-results}
It is proved in~\cite{GLZ16} that the ridesharing minimization problems are NP-hard when all conditions except Condition (2) or Condition (3) are satisfied.
In this section, we show that for each case, it is NP-hard to approximate both minimization problems within any constant factor.
The proof uses a similar method as described in Section~\ref{sec-stop-inapprox}.

\subsection{Inapproximability results for non-zero detour}
Recall that the NP-hardness proof for this case is a reduction from the 3-partition problem too~\cite{GLZ16}.
For completeness, we show the construction of the slightly modified $(G, R_A)$ and the inapproximability proof.
Given a 3-partition instance $A = \{a_1,..., a_{3r}\}$, the ridesharing instance $(G, R_A)$ is constructed as follows (see Figure~(\ref{fig-previousResults}a)):
\begin{itemize}
\item $G$ is a graph with $V(G) = \{I, D, u_1,\ldots,u_r, v_1,\ldots,v_{3r}\}$ and $E(G)$ having edges $\{u_i, I\}$ of weight $rM$ for $1 \leq i \leq r$, edges $\{v_i, I\}$ of weight $a_i$ for $1 \leq i \leq 3r$ and edge $\{I,D\}$ of weight $rM$.
\item $R_A = \{1,...,r+3rrM\}$ has $r + 3rrM$ trips. Let $\alpha$ and $\beta$ be valid constants representing time.
	\begin{itemize}
	\item Each trip $i$, $1 \leq i \leq r$, has source $s_i = u_i$, destination $t_i = D, n_i = 3rM, d_i = 2M, \delta_i = n_i$, $\alpha_i = \alpha$ and $\beta_i = \beta$. Each trip $i$ has a preferred path $\{u_i, I\}, \{I,D\}$ in $G$.
	\item Each trip $i$, $r+1 \leq i \leq r+3rrM$, has source $s_i = v_j$, $j = \lceil(i - r)/rM\rceil$, destination $t_i = D$, $n_i = 0$, $d_i = 0$, $\delta_i = 0$, $\alpha_i = \alpha$, $\beta_i = \beta$ and a unique preferred path $\{v_j, I\},\{I,D\}$ in $G$.
	\end{itemize}
\end{itemize}
\begin{figure}[htbp]
\centering
\includegraphics[scale=0.6]{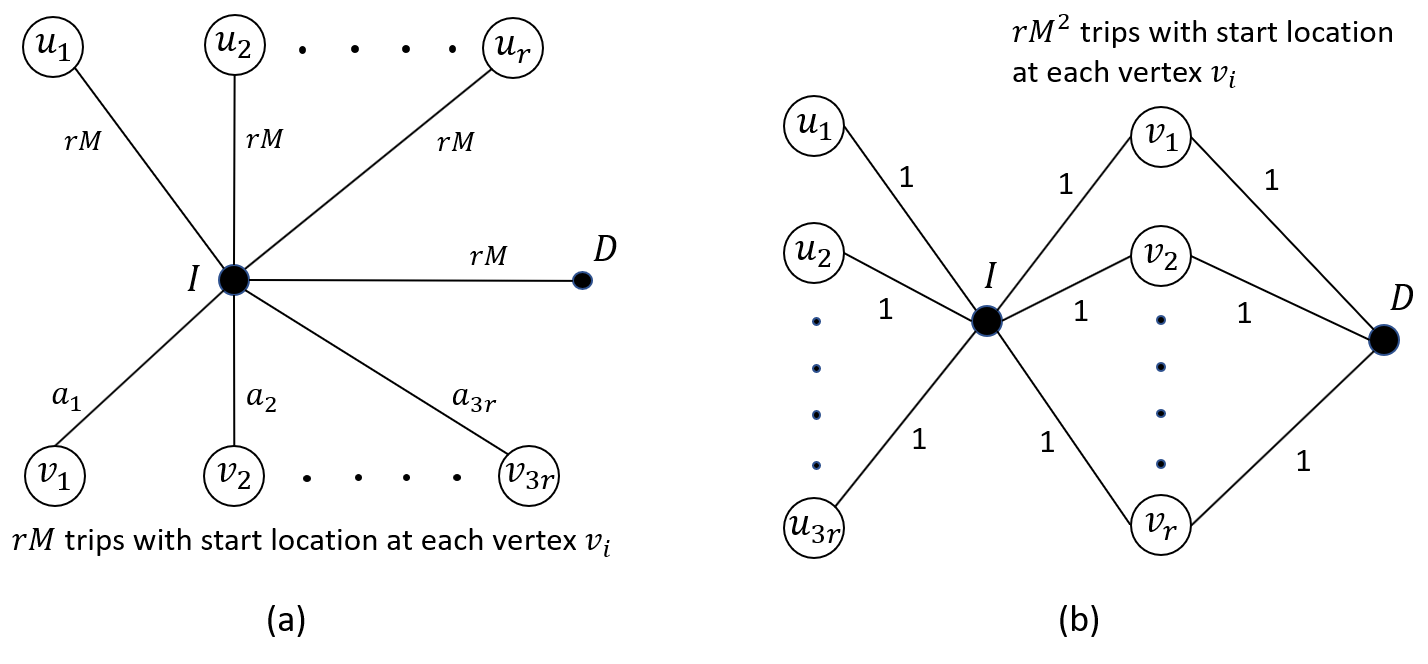}
\caption{Ridesharing instance based on a given 3-partition problem instance: (a) Conditions (1) and (3-5) are satisfied but not Condition (2); (b) Conditions (1-2) and (4-5) are satisfied but not Condition (3).}
\label{fig-previousResults}
\end{figure}
First we re-state Lemma 3.1 in~\cite{GLZ16} as the following lemma.
\begin{lemma}
Any solution for the instance $(G, RA)$ has every trip $i$ with $1 \leq i \leq r$ as a driver and total travel distance at least $2rM(r + 1)$.
\label{lemma-detour}
\end{lemma}

\begin{proof}
Any solution for the instance must have every trip $i$ with $1 \leq i \leq r$ as a driver because of the detour distance limit and the travel distance between the trips $1 \leq i \leq r$.
Let $S = \{i \mid 1 \leq i \leq r\}$ be the set of driver. The total travel distance of $S$ is at least $2rM$.
For each trip $j$ with source $s_j = v_k$ ($1 \leq k \leq 3r$), the total travel distance of drivers in $S$ and trip $j$ is at least $2rrM + 2a_k$ if $j$ is served by a driver in $S$, otherwise is at least $2rrM + a_k + rM$.
Since $a_k < rM$, the minimum total travel distance of any solution is to have every $j$ with source at $v_k$, $1 \leq k \leq 3r$, as a passenger served by $S$ with total distance $2rrM + \sum_{1 \leq k \leq 3r} 2a_k = 2rrM + 2rM = 2rM(r+1)$.
\end{proof}

\begin{lemma}
Let $(G,R_A)$ be a ridesharing problem instance constructed above from a 3-partition problem instance $A = \{a_1,\ldots, a_{3r}\}$.
The 3-partition problem instance $A$ has a solution if and only if the ridesharing problem instance $(G,R_A)$ has a solution $(\sigma, S)$ s.t. $r \leq |S| < r + rM$, where $S$ is the set of drivers.
\label{lemma-detour-inapprox}
\end{lemma}

\begin{proof}
Assume that instance $A$ has a solution $A_1,\ldots,A_r$ where the sum of elements in each $A_i$ is $M$. For each $A_i = \{a_{i_1},a_{i_2},a_{i_3}\}$, $1 \leq i \leq r$, we say trips with sources at vertex $v_k$ ($1 \leq k \leq 3r$) \textit{correspond to} $A_i$ if the edge $\{v_k,I\}$ has weight $a_{i_1},a_{i_2}$, or $a_{i_3}$.
By the definition of 3-partition instance, one can uniquely identify the corresponding trips of $A_i$.
Then for each $A_i$, there are exactly $3rM$ corresponding trips with sources at three different vertices $v_k$.
Recall that every trip $i$ with source at $u_i$ has detour distance limit $d_i = 2M$ and capacity $n_i = 3rM$.
We assign a trip $i$ with source at $u_i$ as a driver to serve the corresponding trips of $A_i$ for $1 \leq i \leq r$.
It requires exactly $2M$ detour distance for $i$ to serve the $3rM$ corresponding trips of $A_i$.
Hence, we have a solution of $r$ drivers for $(G,R_A)$.

Assume that $(G,R_A)$ has a solution with $r \leq |S| < r + rM$ drivers.
Let $R(1,r)$ be the set of trips in $R_A$ with labels from $1$ to $r$.
By Lemma~\ref{lemma-detour}, every trip $i \in R(1,r)$ is a driver in $S$ (trips with source at $u_i$ are drivers).
For every vertex $v_k$, let $X_k$ be the set of trips with source at $v_k$ not served by drivers in $R(1,r)$. Then $0 \leq |\cup_{1\leq k\leq 3r} X_k| < rM$ due to $|S| < r + rM$.
From this and there are $rM$ trips with source at each vertex $v_k$, every driver in $R(1,r)$ must detour to some vertex $v_k$ ($1 \leq k \leq 3r$) to pick-up some passengers. In other words, every vertex $v_k$ for $1 \leq k \leq 3r$ must have been visited by at least one driver in $R(1,r)$.
Assume some driver $i \in R(1,r)$ has detour distance less than $2M$ ($i$ detours to at least 1 vertex and at most 3 vertices because $M/4 < a_k <M/2$ for $1 \leq k \leq 3r$).
Then from the fact that the sum of elements in $A$ is $rM$ ($\sum_{1 \leq k \leq 3r} a_k = rM$), some driver $i'$ must has detour distance greater than $2M$ so that all vertices can be visited.
This is a contradiction to $d_{i'} = 2M$.
Hence, the detour distance of every driver in $R(1,r)$ is exactly $2M$.
For each driver $i$, $1 \leq i \leq r$, let $A_i$ be the subset of the three integers of $A$ corresponding to the detour distance traveled by $i$ (one way). Then $A_1,\ldots, A_r$ is a solution for the 3-partition problem instance.

It takes a polynomial time to convert a solution of $(G,R_A)$ to a solution of the 3-partition instance and vice versa.
\end{proof}

With Lemma~\ref{lemma-detour-inapprox}, the analysis of Theorem~\ref{theorem-stop-inapprox} can be applied to $(G,R_A)$, and we have the following theorem.
\begin{theorem}
It is NP-hard to approximate the minimum number of drivers within any constant factor for a ridesharing instance satisfying Conditions (1) and (3-5) but not Condition (2).
\label{lemma-RS-C1C3-approx}
\end{theorem}

\begin{theorem}
It is NP-hard to approximate the minimum total travel distance of drivers within any constant factor for a ridesharing instance satisfying Conditions (1) and (3-5) but not Condition (2).
\label{theorem-C1C3-detour-distance}
\end{theorem}

\begin{proof}
With a similar argument to that of Theorem~\ref{theorem-stop-distance-inapprox}, we have the theorem.
\end{proof}

\subsection{Inapproximability results for no fixed preferred path}
The NP-hardness proof for this case is also a reduction from the 3-partition problem~\cite{GLZ16}.
Given a 3-partition instance $A = \{a_1,..., a_{3r}\}$, the ridesharing instance $(G, R_A)$ is constructed as follows (see Figure~(\ref{fig-previousResults}b)):
\begin{itemize}
\item $G$ is a graph with $V(G) = \{D, u_1,..., u_{3r}, v_1,..., v_r\}$ and $E(G)$ having edges $\{u_i, I\}$ for $1 \leq i \leq 3r$, edges $\{I, v_i\}$ for $1 \leq i \leq r$ and edges $\{v_i,D\}$ for $1 \leq i \leq r$.
Every edge in $E(G)$ has weight of 1.
\item $R_A = \{1,...,3r+(rM)^2\}$ has $3r + (rM)^2$ trips. Let $\alpha$ and $\beta$ be valid constants representing time.
	\begin{itemize}
	\item Each trip $i$, $1 \leq i \leq 3r$, has source $s_i = u_i$, destination $t_i = D, n_i = a_i\cdot rM, d_i = 0, \delta_i = n_i$, $\alpha_i = \alpha$ and $\beta_i = \beta$. Each trip $i$ has $r$ preferred paths $\{u_i, I\}, \{I,v_i\}, \{v_i,D\}$ in $G$, $1 \leq i \leq r$.
	\item Each trip $i$, $3r+1 \leq i \leq (rM)^2$, has source $s_i = v_j$, $j = \lceil(i - 3r)/rM^2\rceil$, destination $t_i = D$, $n_i = 0$, $d_i = 0$, $\delta_i = 0$, $\alpha_i = \alpha$, $\beta_i = \beta$ and a unique preferred path $\{v_j,D\}$ in $G$.
	\end{itemize}
\end{itemize}
Lemma 3.2 in~\cite{GLZ16} also holds for the instance $(G, R_A)$, which is stated as the following lemma.
\begin{lemma}
Any solution for the instance $(G, RA)$ has every trip $i$ with $1 \leq i \leq r$ as a driver and total travel distance at least $9r$.
\label{lemma-unique-path}
\end{lemma}

\begin{lemma}
Let $(G,R_A)$ be a ridesharing problem instance constructed above from a 3-partition problem instance $A = \{a_1,\ldots, a_{3r}\}$.
The 3-partition problem instance $A$ has a solution if and only if the ridesharing problem instance $(G,R_A)$ has a solution $(\sigma, S)$ s.t. $3r \leq |S| < 3r + rM$, where $S$ is the set of drivers.
\label{lemma-path-inapprox}
\end{lemma}

\begin{proof}
With Lemma~\ref{lemma-unique-path} and Theorem~3.3 in~\cite{GLZ16}, a similar analysis of Lemma~\ref{lemma-stop-inapprox} can be applied to this lemma.
\end{proof}

From Lemma~\ref{lemma-path-inapprox}, the analysis of Theorem~\ref{theorem-stop-inapprox} and Theorem~\ref{theorem-stop-distance-inapprox} can be applied to $(G,R_A)$, and we have the following theorems.

\begin{theorem}
It is NP-hard to approximate the minimum number of drivers within any constant factor for a ridesharing instance satisfying Conditions (1-2) and (3-4) but not Condition (5).
\label{lemma-path-approx}
\end{theorem}
\vspace{-6mm}

\begin{theorem}
It is NP-hard to approximate the minimum total travel distance of drivers within any constant factor for a ridesharing instance satisfying Conditions (1-2) and (3-4) but not Condition (5).
\label{lemma-path-distnace-approx}
\end{theorem}

\section{Approximation algorithms based on MCMP} \label{sec-MCMP-alg}
For short, we call the ridesharing problem with all conditions satisfied except Condition (4) as \emph{ridesharing problem with stop constraint}.
Let $K = \max_{i \in R} {n_i}$ be the largest capacity of all vehicles.
Kutiel and Rawitz~\cite{ESA17-K} proposed two $\frac{1}{2}$-approximation algorithms for the maximum carpool matching problem.
We show in this section that the algorithms in~\cite{ESA17-K} can be modified to $\frac{K+2}{2}$-approximation algorithms for minimizing the number of drivers in the ridesharing problem with stop constraint. Then in the next section, we propose a more practical $\frac{K+2}{2}$-approximation algorithm for the minimization problem.

An instance of the maximum carpool matching problem (MCMP) consists of a directed graph $H(V,E)$, a capacity function $c : V \rightarrow \mathbb{N}$, and a weight function $w : E \rightarrow \mathbb{R^+}$, where the vertices of $V$ represent the individuals and an arc $(u,v) \in E$ implies $v$ can serve $u$.
We are only interested in the unweighted case, that is, $w(u,v) = 1 $ for every $(u,v) \in E$.
Every $v \in V$ can be assigned as a driver or passenger. The goal of MCMP is to find a set of drivers $S \subseteq V$ to serve all $V$ such that the number of passengers is maximized.
A solution to MCMP is a set $\cal{S}$ of vertex-disjoint stars in $H$.
Let $S_v$ be a star in $\cal{S}$ rooted at center vertex $v$, and leaves of $S_v$ is denoted by $P_v = V(S_v) \setminus \{v\}$.
For each star $S_v \in \cal{S}$, vertex $v$ has out-degree of 0 and every leave in $P_v$ has only one out-edge towards $v$.
The center vertex of each star $S_v$ is assigned as a driver and the leaves are assigned as passengers.
The set of edges in $\cal{S}$ is called a matching $M$. An edge in $M$ is called a \emph{matched} edge.
Notice that $|M|$ equals to the number of passengers.
For an arc $e = (u,v)$ in $H$, vertices $u$ and $v$ are said to be \emph{incident to} $e$.
For a matching $M$ and a set $V' \subseteq V$ of vertices, let $M(V')$ be the set of edges in $M$ incident to $V'$.
The \emph{in-neighbors} of a vertex $v$ is defined as $N^{in}(v) = \{u \mid (u,v) \in E\}$, and the set of arcs entering $v$ is defined as \emph{in-arcs} $E^{in}(v) = \{(u,v) \mid (u,v) \in E\}$.
Table~\ref{table-def-mcmp} lists the basic notation and definition for this section.
\begin{table}[!ht]
\begin{center}
   \begin{tabular}{| l | l |}
   	\hline
   	\textbf{Notation} & \textbf{Definition}                                                    \\ \hline
    $\cal{S}$         & A set of vertex-disjoint stars in $H$ (solution to MCMP)               \\ 
    $S_v$ and $P_v$   & A Star $S_v$ rooted at center vertex $v$							   \\
    $P_v$             & $P_v = V(S_v) \setminus \{v\}$, the set of leaves of star $S_v$        \\ 
  	$c(v)$            & Capacity of vertex $v$ (equivalent to $n_v$ in Table~\ref{table-para}) \\  
    Matching $M$      & The set of edges in $\cal{S}$, namely $E(S)$                		   \\  
   	$M(V')$           & The set of edges in $M$ incident to a set $V'$ of vertices             \\ 
   	$N^{in}(v)$       & The set of \emph{in-neighbors} of $v$, $N^{in}(v) = \{u \mid (u,v) \in E\}$  \\ 
    $E^{in}(v)$       & The set of arcs entering $v$, \emph{in-arcs} $E^{in}(v) = \{(u,v) \mid (u,v) \in E\}$   \\
   	$\delta P_v$      & The number of stops required for $v$ to pick-up $P_v$                  \\ \hline
   \end{tabular}
\caption{Common notation and definition used in this section.}
\label{table-def-mcmp}
\end{center}
\vspace{-1mm}
\end{table}

Two approximation algorithms (StarImprove and EdgeSwap) are presented in~\cite{ESA17-K}; both can achieve $\frac{1}{2}$-approximation ratio, that is, the number of passengers found by the algorithm is at least half of that for the optimal solution.

\paragraph{\textbf{EdgeSwap}}
The EdgeSwap algorithm requires the input instance to have a bounded degree graph (or the largest capacity $K$ is bounded by a constant) to have a polynomial running time.
The idea of EdgeSwap is to swap $i$ matched edges in $M$ with $i+1$ edges in $E \setminus M$ for $1 \leq i \leq k$ and $k$ is a constant integer. The running time of EdgeSwap is in the order of $O(|E|^{2k+1})$.
EdgeSwap can directly apply to the minimization problem to achieve $\frac{K+2}{2}$-approximation ratio in $O(l^{2K})$ time, which may not be practical even if $K$ is a small constant.

\paragraph{\textbf{StarImprove}}
Let $(H(V, E), c, w)$ be an instance of MCMP.
Let $\cal{S}$ be the current set of stars found by StarImprove and $M$ be the set of matched edges.
The idea of the StarImprove algorithm is to iteratively check in a \emph{for-loop} for every vertex $v \in V(G)$:
\begin{itemize}
\item check if there exists a star $S_v$ with $E(S_v) \cap M = \emptyset$ such that the resulting set of stars $\cal{S}$ $\setminus M(V(S_v)) \cup S_v$ gives a larger matching.
\end{itemize}
Such a star $S_v$ is called an \textit{improvement} and $|P_v| \leq c(v)$.
Given a ridesharing instance $(G, R)$ satisfying all conditions, except Condition (4). The StarImprove algorithm cannot apply to $(G, R)$ directly because the algorithm assumes a driver $v$ can serve any combination of passengers corresponding to vertices adjacent to $v$ up to $c(v)$.
This is not the case for $(G, R)$ in general. For example, suppose $v$ can serve $u_1$ and $u_2$ with $n_v = 2$ and $\delta_v = 1$.
The StarImprove assigns $v$ as a driver to serve both $u_1$ and $u_2$. However, if $u_1$ and $u_2$ have different sources ($s_v \neq s_{u_1} \neq s_{u_2}$), this assignment is not valid for $(G, R)$.
Hence, we need to modify StarImprove for computing a star.
For a vertex $v$ and star $S_v$, let $N^{in}_{\mbox{-}M}(v) = \{i \mid i \in N^{in} \setminus V(M)\}$ and $\delta P_v$ be the number of stops required for $v$ to pick-up $P_v$.
Suppose the in-neighbors $N^{in}_{\mbox{-}M}(v)$ are partitioned into $g_1(v),\ldots,g_m(v)$ groups such that trips with same source are grouped together.
When stop constraint is considered, finding a star $S_v$ with maximum $|P_v|$ is similar to solving a fractional knapsack instance using a greedy approach as shown in Figure~\ref{fig-alg-star}.
The idea is, in each iteration, to select the largest group of in-neighbors $N^{in}_{\mbox{-}M}(v)$ until the capacity $c(v)$ is reached.
\stepcounter{algcounter}
\begin{figure}[htb]
\small
\textbf{Algorithm~\arabic{algcounter}} Greedy algorithm
\begin{algorithmic}[1]
\\ $P_v = \emptyset$; $c=c(v)$; $\delta P_v = 0$;
\If {$\exists$ a group $g_j(v)$ s.t. $s_u = s_v$ for any $u \in g_j(v)$}
    \State Select $g_i(v) = \max_{1\leq i \leq m : s_u = s_v, u \in g_i(v)} \{|g_i(v) \setminus P_v|\}$;
    \State Let $g'_j(v) \subseteq g_j(v)$ be a maximum subset of $g_j(v)$ such that $|g'_j(v)| \leq c$.
    \State $P_v = P_v \cup g'_j(v)$; $c = c - |g'_j(v)|$;
\EndIf
\While {$c > 0$ and $\delta P_v < \delta_v$}
    \State Select $g_i(v) = \max_{1\leq i \leq m} \{|g_i(v) \setminus P_v|\}$;
    \State Let $g'_i(v) \subseteq g_i(v)$ be a maximum subset of $g_i(v)$ such that $|g'_i(v)| \leq c$.
    \State $P_v = P_v \cup g'_i(v)$; $c = c - |g'_i(v)|$; $\delta P_v = \delta P_v + 1$;
\EndWhile
\State \Return the star $S_v$ induced by $P_v \cup \{v\}$;
\end{algorithmic}
\caption{Greedy algorithm for computing $S_v$.}
\label{fig-alg-star}
\end{figure}

\begin{lemma}
Let $v$ be the trip being processed and $S_v$ be the star found by Algorithm~\arabic{algcounter} w.r.t. current matching $M$.
Then $|P_v| \geq |P'_v|$ for any star $S'_v$ s.t. $P'_v \cap M = \emptyset$.
\label{lemma-star-greedy}
\end{lemma}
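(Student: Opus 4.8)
The plan is to recast the computation of $S_v$ as a two-budget selection problem and then argue that the greedy rule of Algorithm~\arabic{algcounter} attains its optimum. First I would characterize exactly which stars $S'_v$ with $P'_v \cap M = \emptyset$ are feasible: since all in-neighbors of $v$ sharing a common source are picked up with a single stop, a feasible $P'_v$ is any selection of passengers from the groups $g_1(v),\ldots,g_m(v)$ subject to two independent budgets --- the total number of selected passengers is at most $c(v)$ (capacity), and the number of \emph{distinct} groups used is at most $\delta_v$ (stops), with the one group whose source equals $s_v$ charged no stop. Because each selected passenger contributes exactly $1$ to $|P'_v|$ regardless of its group, the objective is simply to maximize the number of selected passengers under these two budgets.

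From this formulation I would derive a single upper bound valid for every feasible star. Writing $g_{\mathrm{free}}$ for the (possibly empty) group with source $s_v$ and sorting the remaining groups in non-increasing order of size, any feasible $P'_v$ uses at most $\delta_v$ of the non-free groups, so its non-free part is bounded by $\Sigma$, the sum of the $\delta_v$ largest non-free group sizes; combined with the capacity budget this yields
\[
|P'_v| \;\le\; \min\bigl(\,c(v),\; |g_{\mathrm{free}}| + \Sigma\,\bigr).
\]

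The crux is to show Algorithm~\arabic{algcounter} meets this bound with equality, which I would handle by a case analysis on why the while-loop halts. If it halts with $c=0$, then $|P_v|=c(v)$ and the first term is achieved (and $c(v)\le|g_{\mathrm{free}}|+\Sigma$, since capacity was reached while drawing from $g_{\mathrm{free}}$ and the largest groups within the stop budget); if it halts with $\delta P_v=\delta_v$ while $c>0$, then no group was ever truncated, so $v$ took $g_{\mathrm{free}}$ together with the $\delta_v$ largest non-free groups entirely, giving $|P_v|=|g_{\mathrm{free}}|+\Sigma\le c(v)$ and achieving the second term; and if it halts because every available passenger has been taken, then $|P_v|$ equals the total in-neighborhood size and is trivially maximal. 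In all cases $|P_v|$ equals the minimum above, whence $|P_v|\ge|P'_v|$.

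The step I expect to require the most care is justifying that selecting the largest groups first is optimal once the stop budget binds, and that greedy never truncates a group in that regime. Both reduce to the exchange observation underlying fractional-knapsack greedy: each non-free group costs exactly one stop while every passenger has unit value, so replacing a chosen non-free group in any feasible solution by a larger unused one never decreases the passenger count. Taking the free group first is likewise safe because it adds passengers without consuming any stop. Assembling these exchanges shows that greedy's selection dominates any feasible $P'_v$, completing the argument.
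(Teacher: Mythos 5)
Your proof is correct and rests on exactly the two facts the paper's proof uses: if the capacity budget binds, greedy is trivially optimal, and otherwise greedy has taken the free group plus the $\delta_v$ largest remaining groups in full, which dominates any other selection touching at most $\delta_v$ non-free groups. The only difference is presentational --- the paper argues by contradiction with a term-by-term comparison of group sizes, while you state the explicit bound $\min\bigl(c(v),\,|g_{\mathrm{free}}|+\Sigma\bigr)$ and show greedy attains it via a case analysis on the halting condition --- so this is essentially the same argument (and your handling of the case where all unmatched in-neighbors are exhausted is, if anything, slightly more explicit).
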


\begin{proof}
Assume for contradiction, $|P'_v| > |P_v|$ for some star $S'_v$ s.t. $P'_v \cap M = \emptyset$.
Since $|P'_v| > |P_v|$, $c(v) > |P_v|$.
For any trip $u \in N^{in}_{\mbox{-}M}(v)$, let $g_{i_u}(v)$ be the group s.t. $u \in g_{i_u}(v)$.
Let $u \in P'_v \setminus P_v$.
Note that $s_u \neq s_v$; otherwise, $u$ would have been included in $P_v$ by the greedy algorithm, and hence, $\delta_v > 0$.
From $c(v) > |P_v|$ and $\delta_v > 0$, the greedy algorithm must have executed the while-loop and checked all the groups in decreasing order of their size, and $\delta P_v = \delta_v$ at the end of the while-loop.
Because $c(v) > |P_v|$, $|P_v \cap g_{i_w}(v)| = |g_{i_w}(v)| \geq |P'_v \cap g_{i_w}(v)|$ for any $w \in P'_v \cap P_v$.
Since groups are checked in decreasing order of their size, $|P_v \cap g_{i}(v)| \geq |P'_v \cap g_{i_u}(v)|$ for every group $g_{i}(v)$ and every $u \in P'_v \setminus P_v$.
Recall that $\delta P_v = \delta_v$.
Hence, $|P_v| \geq |P'_v|$, which is a contradiction.
\end{proof}

\begin{definition}
A star $S_v$ rooted at $v$ is an improvement with respect to matching $M$ if $|P_v| \leq c(v), \delta P_v \leq \delta_v$ and $|S_v| - \sum_{(u,v) \in E(S_v)} |M(u)| > |M(v)|$.
\label{def-improvement}
\end{definition}

Definition~\ref{def-improvement} is equivalent to the original definition in~\cite{ESA17-K}, except the former is for the unweighted case and stop constraint. When an improvement is found, the current matching $M$ is increased by exactly $|S_v| - \sum_{(u,v) \in E(S_v)} |M(u)|$ edges.
For a vertex $v$ and a subset $S \subseteq E^{in}(v)$, let $N^{in}_{S}(v) = \{u \mid (u,v) \in S\}$.

\begin{lemma}
Let $M$ be the current matching and $v$ be a vertex with no improvement.
Let $S_v \subseteq E^{in}(v)$ s.t. $|S_v| \leq c(v)$ and $\delta P_v \leq \delta_v$, then $|S_v| \leq |M(v)| + |M(N^{in}_{S_v}(v))|$.
Further, if the star $S_v$ found by Algorithm~\arabic{algcounter} w.r.t. $M$ is not an improvement, then no other $S'_v$ is an improvement.
\label{lemma-unweighted}
\end{lemma}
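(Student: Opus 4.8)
The plan is to handle the two assertions separately. The inequality is, in essence, the contrapositive of the improvement condition, so the natural route is to compute exactly how $M$ would change if the star $S_v$ were installed, and read off the bound from ``no improvement.'' The ``moreover'' part asks me to reduce an arbitrary candidate star to the free-vertex star returned by the greedy algorithm, so the natural tool there is Lemma~\ref{lemma-star-greedy} together with the observation that matched leaves never help an improvement.

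For the inequality, fix a valid candidate $S_v \subseteq E^{in}(v)$ with leaves $P_v = N^{in}_{S_v}(v)$, $|S_v|\le c(v)$ and $\delta P_v \le \delta_v$. First I would measure the effect of installing $S_v$, i.e.\ replacing the current stars touching $V(S_v)=\{v\}\cup P_v$ by $S_v$. Writing $M' = (M\setminus M(V(S_v)))\cup E(S_v)$, every arc of $E(S_v)$ is incident to $v$, hence to $V(S_v)$, so $E(S_v)$ is disjoint from $M\setminus M(V(S_v))$; this gives $|M'| = |M| - |M(V(S_v))| + |S_v|$, so the net change is $|S_v| - |M(V(S_v))|$. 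Since $v$ admits no improvement, installing $S_v$ cannot enlarge the matching, whence $|S_v|\le |M(V(S_v))|$. Finally $M(V(S_v)) = M(v)\cup M(N^{in}_{S_v}(v))$, so the union bound gives $|M(V(S_v))|\le |M(v)| + |M(N^{in}_{S_v}(v))|$, which is the claimed inequality.

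For the ``moreover'' part, let $S_v$ be the star produced by the greedy algorithm (Figure~\ref{fig-alg-star}); its leaves lie in $N^{in}_{\mbox{-}M}(v)$, so $M(u)=\emptyset$ for every leaf and, by Definition~\ref{def-improvement}, $S_v$ is an improvement precisely when $|P_v| > |M(v)|$. Assume $S_v$ is not an improvement, i.e.\ $|P_v|\le |M(v)|$, and let $S'_v$ be any valid star with leaves $P'_v$. I would split $P'_v$ into its free part $F' = P'_v\cap N^{in}_{\mbox{-}M}(v)$ and its matched part $P'_v\setminus F'$. Each matched leaf $u$ has $|M(u)|\ge 1$, so $\sum_{u\in P'_v}|M(u)|\ge |P'_v\setminus F'|$ and therefore $|S'_v| - \sum_{u\in P'_v}|M(u)| \le |F'|$. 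The set $F'$ induces a star on free in-neighbours that still respects the capacity and stop bounds (it has no more leaves and no more stops than $S'_v$), so Lemma~\ref{lemma-star-greedy} yields $|F'|\le |P_v|\le |M(v)|$. Hence $|S'_v| - \sum_{u\in P'_v}|M(u)| \le |M(v)|$, so $S'_v$ is not an improvement either.

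The step I expect to be the main obstacle is the edge-counting bookkeeping, specifically the gap between the set cardinality $|M(N^{in}_{S_v}(v))|$ appearing in the statement and the multiplicity sum $\sum_u |M(u)|$ used in Definition~\ref{def-improvement}; these differ exactly when $P_v$ contains an algorithm-center together with one of its leaves (or when $v$ is a leaf of some $u\in P_v$). To reach the stronger set-version bound I must phrase ``no improvement'' through the true change $|S_v| - |M(V(S_v))|$ rather than through the per-leaf sum, and then invoke the union bound, so this reconciliation is where I would spend the most care. A secondary point to verify is that passing from a valid star to its free-leaf restriction $F'$ preserves \emph{both} the capacity constraint $|F'|\le c(v)$ and the stop constraint, since that is precisely what legitimises the appeal to Lemma~\ref{lemma-star-greedy}.
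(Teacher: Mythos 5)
Your proof of the ``further'' claim is essentially the paper's argument: matched leaves contribute at least one to $\sum_{u}|M(u)|$ and so cannot increase the gain, which reduces any candidate $S'_v$ to its free-leaf restriction, and Lemma~\ref{lemma-star-greedy} then bounds that restriction by the greedy star; you merely spell out the capacity/stop preservation that the paper leaves implicit. For the first inequality, however, you take a genuinely different route. The paper reads the bound directly off Definition~\ref{def-improvement} by writing $|S_v|-|M(N^{in}_{S_v}(v))| = |S_v|-\sum_{(u,v)\in S_v}|M(u)| \le |M(v)|$, i.e.\ it silently identifies the set cardinality $|M(N^{in}_{S_v}(v))|$ with the multiplicity sum. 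You instead compute the exact effect of installing $S_v$, getting $|S_v|\le |M(V(S_v))|$ and then applying a union bound. Your version is cleaner and correctly isolates the only delicate point, but note that it rests on reading ``no improvement'' as ``installation does not enlarge $M$,'' whereas Definition~\ref{def-improvement} literally says ``no star with $|S_v|-\sum_u|M(u)|>|M(v)|$''; these two notions diverge precisely in the double-counting cases you identify (e.g.\ $P_v$ containing a matched center together with one of its leaves), and in such cases the definition-based hypothesis is strictly weaker than the one your first step uses. This is not a defect you introduced --- the paper itself asserts that an improvement increases $M$ by exactly $|S_v|-\sum_u|M(u)|$, which is subject to the same caveat --- but if one insists on the literal Definition~\ref{def-improvement}, your derivation of the set-version bound needs the extra observation that the greedy algorithm only ever builds stars over $N^{in}_{\mbox{-}M}(v)$, where the set and multiset counts coincide (all terms are zero), so the distinction is harmless where the lemma is actually applied. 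Flagging and resolving that bookkeeping gap explicitly is the one place your write-up genuinely improves on the paper's.
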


\begin{proof}
When no improvement exists for a vertex $v$, we get 
$|S_v| - |M(N^{in}_{S_v}(v))| = |S_v| - \sum_{(u,v) \in S_v} |M(u)| \leq |M(v)|$ by Definition~\ref{def-improvement}.

To maximize $|S_v|$, we need to maximize $|S_v| - \sum_{(u,v) \in S_v} |M(u)|$, which can be done by selecting only in-neighbors of $v$ that are not in matching $M$.
This is because for any $(u,v) \in S_v$ s.t. $u$ is incident to a matched edge, $|M(u)| \geq 1$. In other words, including such a vertex $u$ cannot increase $|S_v| - \sum_{(u,v) \in S_v} |M(u)|$.
Algorithm~\arabic{algcounter} considers only in-neighbors $N^{in}_{\mbox{-}M}(v) = \{i \mid i \in N^{in} \setminus V(M)\}$.
By Lemma~\ref{lemma-star-greedy}, $|P_v|$ is maximized among all stars rooted at $v$ w.r.t. $M$.
Hence, lemma holds.
\end{proof}

Lemma~\ref{lemma-unweighted} is equivalent to Lemma 5 of~\cite{ESA17-K}, except the former is for the unweighted case and stop constraint.
By Lemma~\ref{lemma-unweighted} and the same argument of Lemma 6 in~\cite{ESA17-K}, we have the following lemma.

\begin{lemma}
The modified StarImprove algorithm computes a solution to an instance of ridesharing problem with stop constraint with
$\frac{1}{2}$-approximation.
\label{lemma-star-approximation}
\end{lemma}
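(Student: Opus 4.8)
The plan is to run the standard local-search charging argument for carpool-type matchings, substituting Lemma~\ref{lemma-unweighted} for Lemma~5 of~\cite{ESA17-K} as the local-optimality certificate; the remaining bookkeeping is exactly the argument of Lemma~6 of~\cite{ESA17-K}. Let $M$ be the matching output by the modified StarImprove at termination and let $M^*$ be an optimal matching, i.e.\ a set of vertex-disjoint stars maximizing the number of matched edges. Since $|M|$ and $|M^*|$ are precisely the numbers of passengers served, it suffices to establish $|M^*| \le 2|M|$. Let $C^*$ denote the centers (drivers) of $M^*$, and for each $v \in C^*$ let $S^*_v \subseteq E^{in}(v)$ be the set of $M^*$-arcs entering $v$; because $M^*$ is a feasible ridesharing solution, $|S^*_v| \le c(v)$ and $\delta P^*_v \le \delta_v$.

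First I would use the termination condition: when StarImprove halts, no vertex admits an improvement, so Lemma~\ref{lemma-unweighted} applies at every $v \in C^*$ with the competitor star $S_v = S^*_v$. This is a legal choice precisely because $S^*_v$ respects both the capacity bound $c(v)$ and the stop bound $\delta_v$, and it yields
\[
|S^*_v| \;\le\; |M(v)| + |M(N^{in}_{S^*_v}(v))| \qquad \text{for every } v \in C^*.
\]
Summing over $C^*$ gives $|M^*| = \sum_{v \in C^*} |S^*_v| \le \sum_{v \in C^*} |M(v)| + \sum_{v \in C^*} |M(N^{in}_{S^*_v}(v))|$.

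Next I would close the argument using vertex-disjointness of the optimal stars. The passenger sets $N^{in}_{S^*_v}(v)$ are pairwise disjoint and disjoint from $C^*$, so the right-hand side collapses to $\sum_{x \in V^*} |M(x)|$, where $V^*$ is the set of all vertices that act as a driver or a passenger in $M^*$, each counted once. Since every matched edge of $M$ is incident to exactly two vertices, it is charged at most twice in $\sum_{x \in V^*} |M(x)|$, whence this sum is at most $2|M|$. Chaining the inequalities gives $|M^*| \le 2|M|$, i.e.\ $|M| \ge \tfrac12 |M^*|$, the claimed $\tfrac12$-approximation.

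The routine part is the summation and the factor-$2$ charging. The step that needs genuine care --- and the only place the stop constraint could have caused trouble --- is invoking Lemma~\ref{lemma-unweighted} on the optimal star $S^*_v$ rather than on the greedy star of Algorithm~\arabic{algcounter}. This is legitimate because Lemma~\ref{lemma-unweighted} is stated for an arbitrary feasible $S_v \subseteq E^{in}(v)$ with $|S_v| \le c(v)$ and $\delta P_v \le \delta_v$, and the optimal star meets exactly these hypotheses; the greedy procedure enters only through the ``no improvement'' premise, which termination supplies for free. Hence no separate handling of the same-source grouping is needed here beyond what Lemma~\ref{lemma-unweighted} already absorbs.
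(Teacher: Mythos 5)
Your proposal is correct and follows essentially the same route as the paper, which gives no written proof of this lemma but simply invokes Lemma~\ref{lemma-unweighted} together with ``the same argument of Lemma 6 in~\cite{ESA17-K}''; your write-up is precisely that charging argument made explicit, including the one point that genuinely needs care (applying Lemma~\ref{lemma-unweighted} to the optimal star $S^*_v$, which is legitimate since the optimal star satisfies the capacity and stop-count hypotheses, and since the second half of Lemma~\ref{lemma-unweighted} guarantees that termination of the greedy check certifies that no improvement of any kind exists at $v$). No gaps.
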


\begin{theorem}
Let $(G, R)$ be a ridesharing instance satisfying all conditions, except condition (4). Let $|S^*|$ be the minimum number of drivers for $(G, R)$, $l = |R|$ and $K = \max_{i \in R} {n_i}$. Then,
\begin{itemize}
    \item The EdgeSwap algorithm computes a solution $(\sigma, S)$ for $(G, R)$ s.t. $|S^*| \leq |S| \leq \frac{K+2}{2} |S^*|$ with running time $O(M + l^{2K})$.
    \item The modified StarImprove algorithm computes a solution $(\sigma, S)$ for $(G, R)$ s.t. $|S^*| \leq |S| \leq \frac{K+2}{2} |S^*|$ with running time $O(M + K\cdot l^3)$, where $M$ is the size of a ridesharing instance which contains a road network and $l$ trips.
    \end{itemize}
\label{claim-app-ratio}
\end{theorem}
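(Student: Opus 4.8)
The plan is to deduce the driver-minimization guarantee from the passenger-maximization guarantee already established for both algorithms, and then convert the $\frac{1}{2}$-approximation for passengers into a $\frac{K+2}{2}$-approximation for drivers by a counting bound relating $l$, $K$, and $|S^*|$. First I would record the complementarity identity: in any solution $(\sigma, S)$ every trip is either a driver or a passenger, so the number of passengers equals $l - |S|$. Hence minimizing $|S|$ is exactly maximizing the number of passengers, and the optimal min-drivers solution $S^*$ realizes the maximum passenger count $p^* = l - |S^*|$. By Lemma~\ref{lemma-star-approximation} for the modified StarImprove, and for EdgeSwap by the $\frac{1}{2}$-approximation inherited from~\cite{ESA17-K} as noted above, each algorithm returns a solution whose passenger count $p$ satisfies $p \geq \tfrac{1}{2} p^*$.

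The key translation comes next. Writing $p = l - |S|$ and $p^* = l - |S^*|$, the inequality $p \geq \tfrac{1}{2} p^*$ rearranges to $|S| \leq \tfrac{l + |S^*|}{2}$. To turn $l$ into a multiple of $|S^*|$, I would use that every driver has capacity at most $K$ and therefore serves at most $K$ passengers; applied to the optimal solution this gives $p^* \leq K\,|S^*|$, so $l = p^* + |S^*| \leq (K+1)\,|S^*|$. Substituting yields $|S| \leq \tfrac{(K+1)|S^*| + |S^*|}{2} = \tfrac{K+2}{2}\,|S^*|$, while $|S| \geq |S^*|$ holds because $S^*$ is optimal. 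This settles the ratio for both algorithms simultaneously.

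Finally I would account for the running times. The additive $O(M)$ term is the cost of preprocessing the road network $G$ into the serve-relation digraph $H$, i.e. determining which trips can serve which under Conditions (1)--(3),(5). For EdgeSwap the $O(l^{2K})$ figure is precisely the bound already recorded in the EdgeSwap paragraph (the $O(|E|^{2k+1})$ swap cost specialized to this setting), so nothing further is needed beyond adding $O(M)$. For StarImprove I would bound two quantities: (i) the per-vertex cost of the greedy star computation of Algorithm~\arabic{algcounter}, and (ii) the number of improvement rounds. For (i), partitioning $N^{in}_{\mbox{-}M}(v)$ into groups costs $O(l)$, and the while-loop runs at most $c(v) \leq K$ iterations since each iteration adds at least one passenger to $P_v$, each iteration selecting a maximum remaining group in $O(l)$ time, giving $O(Kl)$ per vertex and $O(Kl^2)$ for one full pass over all $l$ vertices. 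For (ii), each improvement increases $|M|$ by at least one and $|M| \leq l$, so there are at most $l$ productive passes plus one final unproductive pass; multiplying gives $O(Kl^3)$, hence $O(M + Kl^3)$ with preprocessing.

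The main obstacle is making the round-count argument for StarImprove rigorous, rather than the approximation arithmetic, which is routine. Specifically, the hard part will be justifying that a full pass producing no improvement certifies termination, which relies on Lemma~\ref{lemma-unweighted}: the failure of the greedy star of Algorithm~\arabic{algcounter} at a vertex $v$ to be an improvement must rule out \emph{every} improving star rooted at $v$, so that a single improvement-free pass over all vertices guarantees a locally optimal (and thus $\frac{1}{2}$-approximate) matching.
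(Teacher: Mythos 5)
Your proposal is correct and follows essentially the same route as the paper: both derive $|S|\leq (l+|S^*|)/2$ from the $\frac12$-approximation on passengers, combine it with the counting bound $l\leq (K+1)|S^*|$ (each optimal driver serves at most $K$ passengers), and bound the running time by $O(K\cdot l)$ per greedy star computation, $O(K\cdot l^2)$ per pass, and $O(l)$ improvement rounds. The only cosmetic difference is that you make the driver/passenger complementarity and the role of Lemma~\ref{lemma-unweighted} in certifying termination explicit, which the paper leaves implicit.
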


\begin{proof}
First, we need to construct a directed graph $H_R$ to represent the serve relation of the trips in $R$ as described in~\cite{GLZ19}, which takes $O(M)$ time.
Then reverse the direction of all arcs in $H_R$, and this gives an instance can be solved by the EdgeSwap and modified StarImprove algorithms.
Then, the first bullet point of the Lemma is due to the EdgeSwap paragraph stated above.
The rest of the proof is for the second bullet point.

By Lemma~\ref{lemma-star-approximation}, the modified StarImprove algorithm finds a solution to $(G, R)$ with at least $(l - |S^*|) / 2$ passengers, and hence, at most $|S| \leq l - (l - |S^*|) / 2 = (l + |S^*|) / 2$ drivers.
There are $l - |S^*|$ passengers in the optimal solution, implying $|S^*| \geq (l - |S^*|)/K = l / (K+1)$, so $l \leq (K+1)|S^*|$.
Therefore,
\begin{align*}
|S| \leq (l + |S^*|) / 2 \leq ((K+1)|S^*| + |S^*|) / 2 = (K+2)|S^*| / 2
\end{align*}
The original StarImprove algorithm has a for-loop to check each vertex $v$ to see if an improvement can be found, that is, it takes $O(l)$ time to check all in-neighbors of $v$ to see if a star $S_v$ that can increase $|M|$ exists, where $M$ is the current matching. In total, the for-loop takes $O(l^2)$ time.
Then for the modified StarImprove, it takes $O(K \cdot l^2)$ time; $O(K \cdot l)$ time for computing $S_v$ if it exists.
After an improvement is made each time, StarImprove scans every vertex again to check for another improvement until no improvement can be found, and this takes $O(l)$ time due to at most $O(l)$ improvements can be made for the unweighted case.
Thus, in total, the modified StarImprove has a running time of $O(M + K \cdot l^3)$.
\end{proof}

\section{A more practical new approximation algorithm} \label{sec-new-alg}
In this section, we present our new approximation algorithm for the ridesharing problem with stop constraint; our algorithm has a better running time than the approximation algorithms based on MCMP.
For our proposed algorithm, we assume the serve relation is transitive, that is, trip $i$ can serve trip $j$ and $j$ can serve trip $k$ imply $i$ can serve $k$.
In general, if each trip has a unique preferred path and trip $i$ can serve trip $j$ implies $j$'s preferred path is a subpath of $i$'s preferred path, then the serve relation is transitive.
For the scenarios we described in the paragraph Application of ridesharing in Section~\ref{sec-intro}, each driver always wants to use a unique preferred path $P$. In most cases, such a path $P$ is a shortest path.
For school commute, the length of $P$ is usually not too long. There is a high chance that the shortest path of any two points within $P$ is a subpath of $P$.
In practice, drivers may not even specify a preferred path, so a ridesharing instance satisfying the transitive serve relation may consist of trips with the preferred paths computed by a coordinator, or the road network has a unique shortest path between every pair of nodes and each trip uses the shortest path from the source to destination as the preferred path.

\subsection{New approximation algorithm}
Given a ridesharing instance $(G, R)$, we construct a directed meta graph $\Gamma(V, E)$ to express the serve relation, where $V(\Gamma)$ represents the start locations of all trips in $(G, R)$. Each node $\mu$ of $V(\Gamma)$ contains all trips with the same start location $\mu$.
There is an arc $(\mu, \nu)$ in $E(\Gamma)$ if a trip in $\mu$ can serve a trip in $\nu$. Since Conditions (1-3) and (5) are satisfied, if one trip in $\mu$ can serve a trip in $\nu$, any trip in $\mu$ can serve any trip in $\nu$. We say node $\mu$ can serve node $\nu$.
An arc $(\mu, \nu)$ in $\Gamma$ is called a \emph{short cut} if after removing $(\mu, \nu)$ from $\Gamma$, there is a path from $\mu$ to $\nu$ in $\Gamma$. We simplify $\Gamma$ by removing all short cuts from $\Gamma$.
The graph $\Gamma$ may contain a number of connected components. However, a trip in a node $\mu$ from one component cannot serve a trip in $\nu$ from another component and vice versa. Hence, the solution for a component is independent from another component in $\Gamma$.
In what follows, we assume $\Gamma$ is a single connected component and use $\Gamma$ for the simplified meta graph.
Notice that $\Gamma$ is an inverse tree and for every pair of nodes $\mu$ and $\nu$ in $\Gamma$, if there is a path from $\mu$ to $\nu$ then $\mu$ can serve $\nu$.
We label the nodes of $\Gamma$ as $V(\Gamma) = \{\mu_p, \mu_{p-1},...,\mu_1\}$, where $p = |V(\Gamma)|$, in such a way that for every arc $(\mu_b,\mu_a)$ of $\Gamma$, $b > a$, and we say $\mu_b$ has a larger label than $\mu_a$.
The labeling is done by the procedure in~\cite{GLZ19} (see Figure~\ref{fig-label}).
Figure~\ref{fig-metagraph} shows an example of a graph $\Gamma(V, E)$.
Each node in $\Gamma$ without an incoming arc is called an \emph{origin}, and $\mu_1$ is the unique sink.
For a node $\mu$ in $V(\Gamma)$, the set of trips contained in mode $\mu$ is denoted by $R(\mu)$.
For a set $U$ of nodes in $V(\Gamma)$, $R(U) = \bigcup_{\mu \in U} R(\mu)$.
Similarly, given a set $S$ of drivers, we denote the set of drivers in the nodes
of $U$ by $S(U)$ and the set of drivers in a node $\mu$ by $S(\mu)$.
For a trip $i \in R$, the node that contains $i$ is denoted by $\node(i)$, that is, if $i \in R(\mu)$ then $\node(i) = \mu$.
Table~\ref{table-def-5.2} contains the basic notation and definition for this section.

\begin{figure}[htbp]
\small
\begin{center}
\begin{tabbing}
XX\=XX\=XX\=XX\=XX\=XX\=XX\=XX\=XX\=XX\=XX\=XX\=    \kill
{\bf Procedure} Label-Inverse-Tree\\
{\bf Input:} An inverse tree $\Gamma$ of $l$ trips and $p$ nodes.\\
{\bf Output:} Distinct integer labels $\mu_p,\ldots,\mu_1$ for nodes in $\Gamma$.\\
{\bf begin}\\
\>let ST be a stack and push the sink of $\Gamma$ into ST;\\
\>$i:=p$ and mark every arc in $\Gamma$ un-visited;\\
\>{\bf while} ST$\neq \emptyset$ {\bf do} \\
\>\>let $\mu$ be the node at the top of ST;\\
\>\>{\bf if} there is an arc $(\nu,\mu)$ in $\Gamma$ un-visited {\bf then}\\
\>\>\>push $\nu$ into ST and mark $(\nu,\mu)$ visited;\\
\>\>{\bf else}\\
\>\>\>remove $\mu$ from ST; assign $\mu$ integer label $\mu_i$; $i:=i-1$;\\
\>\>{\bf endif}\\
\>{\bf endwhile}\\
{\bf end.}
\end{tabbing}
\caption{Procedure for assigning integer labels to nodes in $\Gamma$~\cite{GLZ19}.}
\label{fig-label}
\end{center}
\end{figure}

\begin{figure}[htbp!]
\centering
\includegraphics[scale=0.63]{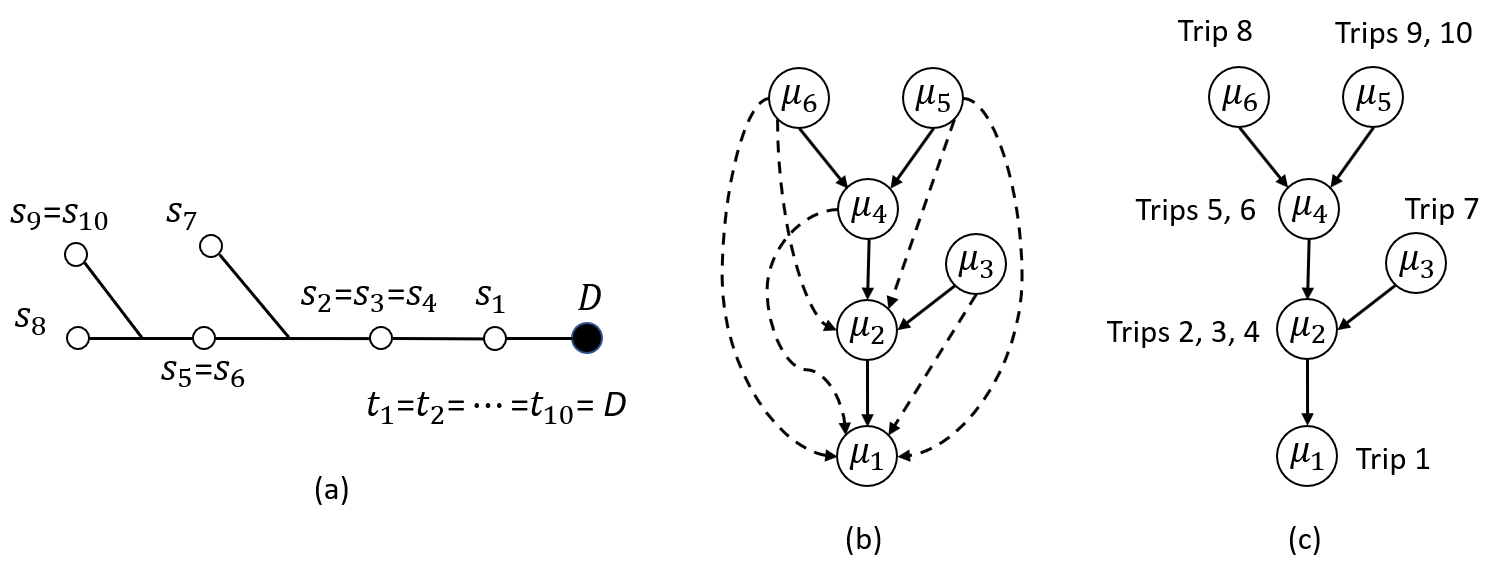}
\caption{(a) A set $R$ of 10 trips with same destination $D$ in the road network graph $G$. (b) The directed meta graph expressing the serve relation of these trips with shortcuts in dashed arcs. (c) The simplified meta graph, which is an inverse tree.}
\label{fig-metagraph}
\vspace{1mm}
\end{figure}

\begin{table}[ht]
\begin{center}
   \begin{tabular}{| l | l |}
   	\hline
   	\textbf{Notation} & \textbf{Definition}                                                    \\ \hline
    $\Gamma(V, E)$ and $p$ & A directed graph expressing the serve relation and $p = |V(\Gamma)|$ \\
    $\mu$ is an ancestor of $\nu$ & If $\exists$ a nonempty path from $\mu$ to $\nu$ in $\Gamma$ ($\nu$ is a descendant of $\mu$) \\ 
  	$A_{\mu}$ and $A^*_{\mu}$ & Set of ancestors of $\mu$ and $A^*_{\mu} = A_{\mu} \cup \{\mu\}$ respectively \\
    $D_{\mu}$ and $D^*_{\mu}$ & Set of descendants of $\mu$ and $D^*_{\mu} = D_{\mu} \cup \{\mu\}$ respectively \\
    $R(\mu)$ and $R(U)$ & Set of trips in a node $\mu$ and in a set $U$ of nodes respectively \\
    $S(\mu)$ and $S(U)$ & Set of drivers in a node $\mu$ and in nodes $U$ respectively \\
    $\node(i)$ & The node that contains trip $i$ (if $i \in R(\mu)$ then $\node(i) = \mu$) \\
    $\free(i)$ & The remaining seats (capacity) of $i$ w.r.t. current solution $(S, \sigma)$ \\
    $\stopp(i)$ & The number of stops $i$ has to made to serve all trips assigned to $i$ \\ \hline
   \end{tabular}
\caption{Basic notation and definition used in this section.}
\label{table-def-5.2}
\end{center}
\vspace{-1mm}
\end{table}

\stepcounter{algcounter}
We divide all trips of $R$ into two sets $W$ and $X$ as follows:
\begin{align*}
&W = \{i \in R \mid n_i = 0\} \cup \{i \in R(\mu) \mid \delta_i = 0 \text{ and } |R(\mu) = 1| \text{ for every node } \mu \in V(\Gamma)\} \text{ and} \\
&X = R \setminus W.
\end{align*}
For a node $\mu$ in $\Gamma$, let $X(\mu) = X \cap R(\mu)$ and $W(\mu) = W \cap R(\mu)$.
Our algorithm tries to minimize the number of drivers that only serve itself.
There are three phases in the algorithm. In Phase-I, it serves all trips of $W$ and tries to minimize the number of trips in $W$ that are assigned as drivers since each trip of $W$ can serve only itself. Let $Z$ be the set of unserved trips after Phase-I such that for every $i \in Z$, $\delta_i=0$. In Phase-II, it serves all trips of $Z$ and tries to minimize the number of trips in $Z$ to be assigned as drivers, each only serves itself. In Phase III, it serves all remaining trips.
Let $(S, \sigma)$ be the current partial solution and $i \in R$ be a driver. Denoted by $\free(i) = n_i - |\sigma(i)| + 1$ is the remaining seats (capacity) of $i$ with respect to solution $(S, \sigma)$.
Denoted by $\stopp(i)$ is the number of stops $i$ has to made in order to serve all trips in $\sigma(i)$ w.r.t. $(S, \sigma)$.
For the initial solution $(S, \sigma) = (\emptyset, \emptyset)$, $\free(i) = n_i$ and $\stopp(i) = 0$ for all $i \in R$.
For a driver $i$ and node $\mu$, we define $R(i, \mu, S)$ as the set of $\min\{\free(i), |R(\mu) \setminus \sigma(S)|\}$ trips in $R(\mu) \setminus \sigma(S)$ and $W(i, \mu, S)$ as the set of $\min\{\free(i), |W(\mu) \setminus \sigma(S)|\}$ trips in $W(\mu) \setminus \sigma(S)$, and similarly for $Z(i, \mu, S)$.
The three phases of the approximation algorithm (Algorithm~\arabic{algcounter}) are described in following, and the pseudo code is given in Figure~\ref{fig-alg-1}.
\vspace{2mm}

\noindent \textbf{(Phase-I)} In this phase, the algorithm assigns a set of drivers to serve all trips of $W$, and it ends once all trips of $W$ are served.
Let $\Gamma(W) = \{\mu \in V(\Gamma) \mid W(\mu) \setminus \sigma(S) \neq \emptyset\}$, and in each iteration, a node of $\Gamma(W)$ is processed.
In each iteration, the node $\mu = \text{argmax}_{\mu \in \Gamma(W)} |W(\mu) \setminus \sigma(S)|$ is selected and a subset of trips in $W(\mu) \setminus \sigma(S)$ is served by a driver as follows:
\begin{itemize}
\item Let $\hat{X}_1 = \{i \in S(A_{\mu}) \mid \free(i) > 0 \wedge \stopp(i) < \delta_i \}$ and 
$\bar{X} = \{i \in X \cap R(A^*_{\mu}) \setminus \sigma(S) \mid \stopp(i) < \delta_i \vee i \in R(\mu)\}$.
The algorithm finds and assigns a trip $x$ as a driver to serve $W(x, \mu, S)$ such that $x = \text{argmin}_{x \in \hat{X}_1 \cup \bar{X} \text{ : } n_x \geq |W(\mu) \setminus \sigma(S)|} \delta_x - \stopp(x)$.
    \begin{itemize}
    \item If such a trip $x$ does not exist, it means that $n_x < |W(\mu) \setminus \sigma(S)|$ for every $x \in \hat{X}_1 \cup \bar{X}$ assuming $\hat{X}_1 \cup \bar{X} \neq \emptyset$. Then, $x = \text{argmax}_{x \in \hat{X}_1 \cup \bar{X}} \free(x)$ is assigned as a driver to serve $W(x, \mu, S)$.
    If there is more than one $x$ with same $\free(x)$, the trip with smallest $\delta_x - \stopp(x)$ is selected.
    \end{itemize}
\item When $\hat{X}_1 \cup \bar{X} = \emptyset$, assign every $w \in W(\mu) \setminus \sigma(S)$ as a driver to serve itself.
\end{itemize}
 
\noindent \textbf{(Phase-II)} In the second phase, all trips of $Z = \{i \in R \setminus \sigma(S) \mid \delta_i = 0\}$ are served.
Let $\Gamma(Z) = \{\mu \in \Gamma \mid Z(\mu) = (Z \cap R(\mu)) \neq \emptyset\}$.
Each node $\mu$ of $\Gamma(Z)$ is processed in the decreasing order of their node labels.
\begin{itemize}
    \item If $|Z(\mu)| \geq 2$, trip $x = \text{argmax}_{x \in Z(\mu)} n_x$ is assigned as a driver and serves $Z(x, \mu, S)$ consists of trips with smallest capacity among trips in $Z(\mu) \setminus \sigma(S)$.
    \item This repeats until $|Z(\mu)| \leq 1$. Then next node in $\Gamma(Z)$ is processed.
\end{itemize}
After all nodes of $\Gamma(Z)$ are processed, each non-empty node $\mu$ of $\Gamma(Z)$ is processed again; note that every $\mu$ contains exactly one $z \in Z(\mu)$ now, that is, $|Z(\mu)|=1$.
\begin{itemize}
    \item A driver $x \in \hat{X}_2 = \{i \in S(A^*_{\mu}) \mid \free(i) > 0 \wedge (\stopp(i) < \delta_i \vee i \in R(\mu))\}$ with largest $\free(x)$ is selected to serve $z = Z(\mu)$ if $\hat{X}_2 \neq \emptyset$.
    \item If $\hat{X}_2 = \emptyset$, a trip $x \in \bar{X} = \{i \in X \cap R(A^*_{\mu}) \setminus \sigma(S) \mid \stopp(i) < \delta_i \vee i \in R(\mu)\}$ with largest $\delta_x$ is selected to serve $z = Z(\mu)$.
\end{itemize}

\noindent \textbf{(Phase-III)} To serve all remaining trips, the algorithm processes each node of $\Gamma$ in decreasing order of node labels from $\mu_p$ to $\mu_1$.
Let $\mu_j$ be the node being processed by the algorithm.
Suppose there are trips in $R(\mu_j)$ that have not be served, that is, $R(\mu_j) \nsubseteq \sigma(S)$.
\begin{itemize}
    \item A driver $x \in \hat{X}_2 = \{i \in S(A^*_{\mu_j}) \mid \free(i) > 0 \wedge (\stopp(i) < \delta_i \vee i \in R(\mu_j))\}$ with largest $\free(x)$ is selected if $\hat{X}_2 \neq \emptyset$.
    \item If $\hat{X}_2 = \emptyset$, a trip $x = \text{argmax}_{x \in X(\mu_j) \setminus \sigma(S)} n_x$ is assigned as a driver. If the largest $n_x$ is not unique, the trip with the smallest $\delta_x$ is selected.
    \item In either case, $x$ is assigned to serve $R(x, \mu_j, S)$. This repeats until all of $R(\mu_j)$ are served.
    Then, next node $\mu_{j-1}$ is processed.
\end{itemize}

\begin{figure}[!htbp]
\small
\textbf{Algorithm~\arabic{algcounter}} New approximation algorithm \\
\textbf{Input:} A ridesharing instance $(G,R)$ and the meta graph $\Gamma$ (inverse tree) for $(G,R)$. \\
\textbf{Output:} A solution $(S,\sigma)$ for $(G,R)$ with $\frac{K+2}{2}$-approximation ratio.
\begin{algorithmic}[1]
\\ $(S, \sigma) = (\emptyset, \emptyset)$. Let $\Gamma(W) = \{\mu \in V(\Gamma) \mid W(\mu) \setminus \sigma(S) \neq \emptyset\}$.
\While {$\Gamma(W) \neq \emptyset$}			\hspace{2.6cm} /* Beginning of Phase-I */
    \State Compute $\mu = \text{argmax}_{\mu \in \Gamma(W)} |W(\mu) \setminus \sigma(S)|$.
        Let $\hat{X}_1 = \{i \in S(A_{\mu}) \mid \free(i) > 0 \wedge \stopp(i) < \delta_i \}$
    \State and $\bar{X} = \{i \in X \cap R(A^*_{\mu}) \setminus \sigma(S) \mid \stopp(i) < \delta_i \vee i \in R(\mu)\}$.
        \If {$\hat{X}_1 \cup \bar{X} \neq \emptyset$}
            \State Compute $x = \text{argmin}_{x \in \hat{X}_1 \cup \bar{X} \text{ : } n_x \geq |W(\mu) \setminus \sigma(S)|} \delta_x - \stopp(x)$.
            \State \textbf{if} $x == \emptyset$ \textbf{then} $x = \text{argmax}_{x \in \hat{X}_1 \cup \bar{X}} \free(x)$ with the smallest $\delta_x - \stopp(x)$.
            \State \textbf{if} $x \notin S$ \textbf{then} $S = S \cup \{x\}$; $\sigma(x) = \{x\}$;
            \State $\sigma(x) = \sigma(x) \cup W(x, \mu, S)$; update $\free(x)$ and $\stopp(x)$;
		\Else
            \State for each $w \in W(\mu) \setminus \sigma(S)$, $S = S \cup \{w\}$, $\sigma(w) = \{w\}$; update $\free(w)$;
		\EndIf
\EndWhile \hspace{2.6cm} /* End of Phase-I. Below is Phase-II */

\\ Let $Z = \{i \in R \setminus \sigma(S) \mid \delta_i = 0\}$ and $\Gamma(Z)$ be the set of nodes containing $Z$.
\For {each node $\mu \in \Gamma(Z)$ in decreasing order of the node labels}
    \While {$|Z(\mu)| \geq 2$}
        \State Compute $x = \text{argmax}_{x \in Z(\mu)} n_x$. $S = S \cup \{x\}$; $\sigma(x) = \{x\}$;
        $\sigma(x) = \sigma(x) \cup Z(x, \mu, S)$ where
        \State $Z(x, \mu, S)$ consists of trips with smallest capacity; update $\free(x)$ and $\stopp(x)$; update $Z$.
    \EndWhile
\EndFor
\For {each node $\mu \in \Gamma(Z)$ in decreasing order of node labels} \hspace{0.2cm} /* implying $|Z(\mu)| = 1$ */
    \State Let $\hat{X}_2 = \{i \in S(A^*_{\mu}) \mid \free(i) > 0 \wedge (\stopp(i) < \delta_i \vee i \in R(\mu))\}$.
    \State \textbf{if} {$\hat{X}_2 \neq \emptyset$} \textbf{then} 
        Compute $x = \text{argmax}_{x \in \hat{X}_2} \free(x)$.
    \State \textbf{else} Let $\bar{X} = \{i \in X \cap R(A^*_{\mu}) \setminus \sigma(S) \mid \stopp(i) < \delta_i \vee i \in R(\mu)\}$.
        Compute $x = \text{argmax}_{x \in \bar{X}} \delta_x$.
    \State \textbf{if} $x \notin S$ \textbf{then} $S = S \cup \{x\}$; $\sigma(x) = \{x\}$;
    \State $\sigma(x) = \sigma(x) \cup Z(x, \mu, S)$; update $\free(x)$ and $\stopp(x)$;
\EndFor \hspace{2.8cm} /* End of Phase-II. Below is Phase-III */
\For {each node $\mu$ from $\mu_p$ to $\mu_1$}		
    \While {$R(\mu) \nsubseteq \sigma(S)$}
        \State Let $\hat{X}_2 = \{i \in S(A^*_{\mu}) \mid \free(i) > 0 \wedge (\stopp(i) < \delta_i \vee i \in R(\mu))\}$.
        \State \textbf{if} {$\hat{X}_2 \neq \emptyset$} \textbf{then}
            Compute $x = \text{argmax}_{x \in \hat{X}_2} \free(x)$.
        \State \textbf{else} 
            Compute $x = \text{argmax}_{x \in X(\mu) \setminus \sigma(S)} n_x$ (with smallest $\delta_x$ as a tiebreaker)
        \State \textbf{if} $x \notin S$ \textbf{then} $S = S \cup \{x\}$; $\sigma(x) = \{x\}$;
        \State $\sigma(x) = \sigma(x) \cup R(x, \mu, S)$; update $\free(x)$ and $\stopp(x)$;
    \EndWhile
\EndFor

\end{algorithmic}
\caption{Algorithm for approximating the minimum number of drivers.}
\label{fig-alg-1}
\end{figure}

\subsection{Analysis of new approximation algorithm}
A driver in a solution is called a \textit{solo} driver if it serves only itself. Algorithm 2 tries to minimize the number of solo drivers. 
Recall that $W$ is the set of trips, each of which can serve only itself. The algorithm, in Phase-I, computes a partial solution to serve all trips of $W$ and tries to assign as few trips of $W$ to be drivers as possible.
In Phase-II, the set $Z$ of unserved trips after Phase-I (every $i \in Z$ has $\delta_i = 0$) is served. The rationale to serve such set of trips is that many trips of $Z$ can become solo drivers if all trips of $R(\node(i)) \setminus \{i\}$ for $i \in Z$ are served before $i$ is processed or considered.
This can cause $Z$ to have the same characteristic as $W$, so we need to treat $Z$ separately.
Let $\lambda$ be the number of solo drivers in a solution computed by Algorithm 2 and $\lambda^*$ be the number of solo drivers in any optimal solution. Then there are at most $(|R|-\lambda)/2+\lambda$ drivers in the solution computed by Algorithm 2 and at least $(|R|-\lambda^*)/(K+1)+\lambda^*$ drivers in the optimal solution. A central line of the analysis is to show that $\lambda^*$ is close to $\lambda$ which guarantees the approximation ratio of Algorithm 2

We now introduce some notation used in our analysis.
Denoted by $(S, \sigma)$ is the complete solution computed by Algorithm~\arabic{algcounter}.
Denoted by $(S_{\RNum{1}}, \sigma_{\RNum{1}})$ is the partial solution computed at the end of Phase-I, so all trips of $W$ are served by drivers in $S_{\RNum{1}}$.
For every driver $i \in S_{\RNum{1}}$, $(\sigma_{\RNum{1}}(i) \setminus \{i\}) \cap (R \setminus W) = \emptyset$.
Let $S_{\RNum{1}}(X) = S_{\RNum{1}} \cap X$ and $S_{\RNum{1}}(W) = S_{\RNum{1}} \cap W = S_{\RNum{1}} \setminus S_{\RNum{1}}(X)$.
Note that each driver $i \in S_{\RNum{1}}(X)$ must serve at least one trip from $W$ and $\sigma_{\RNum{1}}(S_{\RNum{1}}(X)) \setminus S_{\RNum{1}}(X) \subseteq W$ if $S_{\RNum{1}}(X) \neq \emptyset$.
Let $W = \{W_1, \ldots, W_e\}$ such that each $W_j$ ($1 \leq j \leq e$) is the set of trips served by a driver (or drivers when $W_j \subseteq W$) in $S_{\RNum{1}}$ for iteration $j$, where $e$ is the last iteration of Phase-I.
For each $W_j$, $W_j$ is a subset of $W(\mu_{a_j})$ for some node $\mu_{a_j}$ (indexed at $a_j$), and let $(S_j, \sigma_j)$ be the partial solution just after serving $W_j$, $1 \leq j \leq e$.
For a driver $i \in S_j$, $\free_j(i) = n_i - |\sigma_j(i)| + 1$ is the remaining available seats (capacity) of $i$ w.r.t. $(S_j, \sigma_j)$,
and $\stopp_j(i)$ is the number of stops $i$ has to made in order to serve all trips in $\sigma_j(i)$ w.r.t. $(S_j, \sigma_j)$.

\begin{property}
For every trip $i$ that is assigned as a driver, $i$ remains a driver until the algorithm terminates and $\free(i)$ is non-increasing throughout the algorithm.
\label{property-1}
\end{property}

Recall that each set $W_j$ of trips either are served by one driver or $W_j \subseteq S_{\RNum{1}}(W)$.
For clarity, we denote each set $W_j \subseteq S_{\RNum{1}}(W)$ by $\tilde{W}_j$.
When trips of $\tilde{W}_j$ are assigned as drivers to serve themselves, all other trips $W(\mu_{a_j}) \setminus \tilde{W}_j$ must have been served by drivers in $S_{j-1}$ such that no driver in $S_{j-1}$ or trip in $X \setminus \sigma_{j-1}(S_{j-1})$ can serve $\tilde{W}_j$.
In other words, $\tilde{W}_j = W(\mu_{a_j}) \setminus \sigma_{j-1}(S_{j-1})$ and $\bar{X}_1 \cup \hat{X} = \emptyset$ w.r.t. $(S_{j-1}, \sigma_{j-1})$, so the algorithm has the following property.
\begin{property}
For every pair $\tilde{W}_i$ and $\tilde{W}_j$ ($i\neq j$), $\mu_{a_i} \neq \mu_{a_j}$.
\label{property-2}
\end{property}

Suppose $(S^*, \sigma^*)$ is an optimal solution for $(G,R)$ with $|S^*|$ minimized.
We first show, in Lemma~\ref{lemma-phase-one-W}, that the number of trips in $S_{\RNum{1}}(W)$ served by $S^*$ is at most that of the passengers served by $\sigma_{\RNum{1}}(S_{\RNum{1}}(X))$.
The proof idea is as follows.
Let $U \subseteq S^*$ be the set of drivers such that for every $u \in U$, $\sigma^*(u) \cap S_{\RNum{1}}(W) \neq \emptyset$ and $U \cap W = \emptyset$.
We prove that $U$ are also drivers in $S_{\RNum{1}}$ (specifically, $U \subseteq S_{\RNum{1}}(X)$) and $\sigma_{\RNum{1}}(u)$ serves at least $|\sigma^*(u) \cap S_{\RNum{1}}(W)|$ passengers for each $u \in U$.

\begin{lemma}
Let $(S^*, \sigma^*)$ be an optimal solution for $(G,R)$ and $S^*(W) = W \cap S^*$.
Let $U \subseteq S^*$ be the set of drivers that serve all trips of $W \setminus S^*(W)$.
Then $|\sigma_{\RNum{1}}(S_{\RNum{1}}(X)) \setminus S_{\RNum{1}}(X)| \geq |\sigma^*(U) \cap S_{\RNum{1}}(W)|$.
\label{lemma-phase-one-W}
\end{lemma}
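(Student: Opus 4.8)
The plan is to pair every driver $u\in U$ of the optimal solution with the driver it turns into inside the Phase-I solution, and to show that $u$ picks up in $\sigma_{\RNum{1}}$ at least as many $W$-passengers as the number of trips of $S_{\RNum{1}}(W)$ that $\sigma^*(u)$ collects. Writing $T_u=\sigma^*(u)\cap S_{\RNum{1}}(W)$, it suffices to prove two claims: (i) every $u\in U$ with $T_u\neq\emptyset$ already lies in $S_{\RNum{1}}(X)$, and (ii) $|\sigma_{\RNum{1}}(u)\setminus\{u\}|\geq |T_u|$ for each such $u$. The lemma then follows by summation: the passenger sets $\sigma_{\RNum{1}}(u)\setminus\{u\}\subseteq W$ are pairwise disjoint and contained in $\sigma_{\RNum{1}}(S_{\RNum{1}}(X))\setminus S_{\RNum{1}}(X)$, so their sizes add up to at most $|\sigma_{\RNum{1}}(S_{\RNum{1}}(X))\setminus S_{\RNum{1}}(X)|$, while disjointness of the sets $\sigma^*(u)$ gives $\sum_{u\in U}|T_u|=|\sigma^*(U)\cap S_{\RNum{1}}(W)|$. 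Note that $U\cap W=\emptyset$, since a trip of $W$ can serve only itself and hence cannot serve any passenger of $W\setminus S^*(W)$; thus each $u\in U$ is an $X$-trip to begin with.

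For (i) I would fix $u\in U$ and $w\in T_u$. Because $w\in S_{\RNum{1}}(W)$, the trip $w$ is made a \emph{solo} driver during some Phase-I iteration that processes a node $\mu$ with $w\in W(\mu)\setminus\sigma(S)$, and this can happen only when $\hat{X}_1\cup\bar{X}=\emptyset$ at that iteration. Since $u$ serves $w$ in the optimal solution and the serve relation is transitive, $\node(u)$ is an ancestor of $\node(w)$ or equals it, so $u\in X\cap R(A^*_{\mu})$. If $u$ were still unassigned at this iteration it would satisfy the $\bar{X}$ membership test (an unassigned trip has $\stopp(u)=0$, so either $\delta_u>0$ and $u\in\bar{X}$, or $\delta_u=0$ with $u\in R(\mu)$ forcing $u\in\bar{X}$; the case $\delta_u=0$ and $u\notin R(\mu)$ is impossible as it contradicts $u$ serving $w$), yielding $\bar{X}\neq\emptyset$, a contradiction. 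Hence $u$ is already a driver, i.e. $u\in S_{\RNum{1}}\cap X=S_{\RNum{1}}(X)$.

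For (ii) I would look at the iteration in which the \emph{last} trip of $T_u$ becomes solo. At that moment $u$ is a driver not in $\hat{X}_1$, so either $\free(u)=0$ or $\stopp(u)=\delta_u$. If $\free(u)=0$, then $u$ serves $n_u$ passengers in $\sigma_{\RNum{1}}$, and since $\sigma^*(u)\supseteq T_u$ forces $n_u\geq|T_u|$, we are done. The remaining case, $\stopp(u)=\delta_u$ with $\free(u)>0$, is the heart of the matter: here $u$ has spent all $\delta_u$ stops and, because its capacity never ran out, each stop absorbed an entire node's worth of then-unserved $W$-trips. I would argue by a greedy exchange that the $\delta_u$ nodes on which the algorithm spends $u$'s stops carry at least as many $W$-trips as the at-most-$\delta_u$ distinct sources through which $\sigma^*(u)$ gathers $T_u$, using that Phase-I processes nodes in decreasing order of their unserved-$W$ count and (via Property~\ref{property-1}) that $\free$ and the served set only grow, so $u$'s stops are invested in the largest currently available ancestor-nodes, which dominate the optimal's choice of the same number of ancestor-nodes.

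The main obstacle is exactly this last exchange argument under the \emph{stop constraint}. Unlike the pure-capacity case, $|\sigma_{\RNum{1}}(u)\setminus\{u\}|$ is not simply $n_u$ but depends on which nodes $u$ was assigned across iterations, and these nodes need not coincide with the sources of $T_u$; moreover $T_u$ may be spread over fewer sources than $u$ actually visits. Making the dominance rigorous will require carefully tracking how node availability evolves as earlier iterations commit other drivers, and certifying stop-for-stop that the greedy never collects fewer $W$-passengers at its chosen nodes than $\sigma^*(u)$ collects on $T_u$. I expect the bookkeeping of "which node absorbs which stop, and in what order" to be the part that needs the most care.
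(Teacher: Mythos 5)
Your plan follows the paper's proof essentially step for step: the paper likewise first shows that every $u\in U$ touching a solo set $\tilde{W}_{a_b}\subseteq S_{\RNum{1}}(W)$ must already be a driver in $S_{\RNum{1}}(X)$ (your claim (i), argued via membership in $\bar{X}$ exactly as you do), then splits on $\free(u)=0$ versus $\free(u)>0$ with $\stopp(u)=\delta_u$, and in the latter case uses precisely the dominance you describe — since $\free$ never reaches zero, each stop of $u$ fully absorbs a node's remaining $W$-trips as a set $W_j$, and the decreasing-remaining-size processing order yields $|W_j|\geq|\tilde{W}_{a_b}|$ for every such $W_j$ computed before $\tilde{W}_{a_b}$, with a stop-count injection covering the distinct sources over which $\sigma^*(u)$ gathers $T_u$. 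The bookkeeping you defer is exactly the paper's case (2.2) together with its analysis of drivers lying in $U''_{a_b}(1)\cap U''_{a_c}(1)$ for two solo sets, so your outline is the intended argument and contains all of its ingredients.
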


\begin{proof}
Let $U_j$ be the set of drivers in $S^*$ that serve $(W_1 \cup \ldots \cup W_j) \setminus S^*(W)$ for $1 \leq j \leq e$.
Note that $W = W_1 \cup \ldots \cup W_e$ and $U_e = U$.
Let $\tilde{W}_{a_1},\ldots, \tilde{W}_{a_d}$ be the sets computed by the algorithm such that $\tilde{W}_{a_b} \subseteq S_{\RNum{1}}(W), 1 \leq b \leq d$, and for $1 \leq b < c\leq d$, $\tilde{W}_{a_b}$ is computed before $\tilde{W}_{a_c}$. For each $\tilde{W}_{a_b}$, the drivers of $U_{a_b}$ that serve $\tilde{W}_{a_b} \setminus S^*(W)$ can be partitioned into two sets: (1) $U'_{a_b} = \{u \in U_{a_b} \mid \sigma^*(u) \cap \tilde{W}_{a_b} \neq \emptyset \text{ and } u \in R(\mu_{a_b})\}$ and
(2) $U''_{a_b} = \{u \in U_{a_b} \mid \sigma^*(u) \cap \tilde{W}_{a_b} \neq \emptyset \text{ and } u \in R(A_{\mu_{a_b}})\}$.
We consider them separately.

(1) Due to $W(\mu_{a_b}) \neq \emptyset$ ($\mu_{a_b} \in \Gamma(W)$), 
the algorithm must have already assigned every $u \in U'_{a_b}$ as a driver in $S_{a_{b-1}}(X)$ when $\mu_{a_b}$ is processed since such a trip $u$ must be included in $\bar{X}$ w.r.t. the partial solution just before $\mu_{a_b}$ is processed.
Further, it must be that $\free_{a_{b-1}}(u) = 0$. Otherwise, $\sigma_{a_{b-1}}(u)$ would have served trips from $\tilde{W}_{a_b}$, a contradiction to the algorithm.
From $\free_{a_{b-1}}(u) = 0$, $|\sigma_{a_b}(u) \cap W| \geq |\sigma^*(u) \cap W|$ for every $u \in U'_{a_b}$, that is, $|\bigcup_{u \in U'_{a_b}} \sigma_{a_b}(u) \cap W| \geq |\bigcup_{u \in U'_{a_b}} \sigma^*(u) \cap W|$.

(2) Every $u \in U''_{a_b}$ must also be a driver in $S_{a_{b-1}}(X)$ with $\free_{a_b}(u) < n_u$. Otherwise, $u$ would have been assigned (from unassigned) as a driver in $S_{a_b}$ to serve trips from $\tilde{W}_{a_b}$.
We further divide $U''_{a_b} $into two subsets: $U''_{a_b}(0) = \{u \in U''_{a_b} \mid \free_{a_b}(u) = 0\}$ and $U''_{a_b}(1) = \{u \in U''_{a_b} \mid \free_{a_b}(u) \geq 1\}$.
We consider $U''_{a_b}(0)$ in case (2.1) and $U''_{a_b}(1)$ in case (2.2).

(2.1) For every $u \in U''_{a_b}(0)$, $|\sigma_{a_b}(u) \cap W| \geq |\sigma^*(u) \cap W|$ since $\free_{a_b}(u) = 0$. This implies that $|\bigcup_{u \in U''_{a_b}(0)} \sigma_{a_b}(u) \cap W| \geq |\bigcup_{u \in U''_{a_b}(0)} \sigma^*(u) \cap W|$.
(2.2) Consider any driver $u \in U''_{a_b}(1)$.
Let $W_j$ be a non-empty set of passengers served by $\sigma_{a_b}(u)$ where $j < a_b$. In other words, $W_j$ is computed before $\tilde{W}_{a_b}$.
Recall that $W_j \subseteq W(\mu_{a_j})$, $W_j$ are the only passengers in $W(\mu_{a_j})$ served by $u$, and $(S_{j-1}, \sigma_{j-1})$ is the partial solution just before trips of $W_j$ are served.
From $\free_{j-1}(u) > \free_{a_b}(u) > 0$, $W_j = W(\mu_{a_j}) \setminus \sigma_{j-1}(S_{j-1})$ must be served by $\sigma_j(u)$, implying $|W_j| < \free_{j-1}(u)$.
Since $W_j$ is computed before $\tilde{W}_{a_b}$, $|\tilde{W}_{a_b}| \leq |W(\mu_{a_b}) \setminus \sigma_{j-1}(S_{j-1})| \leq |W(\mu_{a_j}) \setminus \sigma_{j-1}(S_{j-1})| < \free_{j-1}(u)$, meaning $|W_j| \geq |\tilde{W}_{a_b}|$ for every set $W_j$ of passengers served by $\sigma_{a_b}(u)$.
From the proofs of Cases (1) and (2), we have the following property.
\begin{property}
Every $u \in U'_{a_b} \cup U''_{a_b}$ is also a driver in $S_I(X)$, that is, $U \subseteq S_I(X)$.
\label{prop-opt-drivers-in-p1}
\end{property}

Consider any pair $u_b \in U''_{a_b}(1)$ and $u_c \in U''_{a_c}(1)$ with $u_b \neq u_c$ for any $1 \leq b < c \leq d$.
Since $u_b \neq u_c$, the analysis of Case (2.2) can be applied to $u_b$ and $u_c$ independently, that is, $|W_{j_b}|\geq |\tilde{W}_{a_b}|$ for every set $W_{j_b}$ of passengers served by $\sigma_{j_b}(u_b)$, and $|W_{j_c}|\geq |\tilde{W}_{a_c}|$ for every set $W_{j_c}$ served by $\sigma_{j_c}(u_c)$.
Now, consider the case $u_b = u_c$. Assume that $U''_{a_b}(1)\cap U''_{a_c}(1) \neq \emptyset$ for some $1 \leq b < c \leq d$.
Consider any driver $u \in U''_{a_b}(1) \cap U''_{a_c}(1)$. By definition, $u$ serves trips from both $\tilde{W}_{a_b}$ and $\tilde{W}_{a_c}$.
Since $\free_{a_c}(u) > 0$, $\stopp_{a_c}(u) = \delta_u$.
It must be that $\free_{a_b}(u) \geq \free_{a_c}(u) > 0$ and $\stopp_{a_b}(u) = \delta_u$ (otherwise, $\sigma_{a_b}(u)$ would have served trips from $\tilde{W}_{a_b}$).
From this and $\mu_{a_b} \neq \mu_{a_c}$ (by Property~\ref{property-2}), $\delta_u \geq 2$ and $\sigma_{a_c}(u)$ serves
at least two sets $W_{j_b}$ and $W_{j_c}$ of passengers before $\tilde{W}_{a_b}$ is computed.
By the conclusion of previous paragraph (Case 2.2), $|W_{j_b}| \geq |\tilde{W}_{a_b}|$ and $|W_{j_c}|\geq |\tilde{W}_{a_c}|$.
This can be generalized to all sets $\tilde{W}_{a_1}, \ldots, \tilde{W}_{a_d} \subseteq S_{\RNum{1}}(W)$ such that trips of $\tilde{W}_{a_b} \setminus S^*(W)$ are served by $U_{a_b}$ for $1 \leq b \leq d$.
We get $|\bigcup_{u \in U'_{a_b} \cup U''_{a_b}, 1 \leq b \leq d} \sigma_{a_b}(u) \cap W| \geq |\bigcup_{u \in U'_{a_b} \cup U''_{a_b}, 1 \leq b \leq d} \sigma^*(u) \cap \tilde{W}_{a_b}|$.
By definition, $\bigcup_{u \in U'_{a_b} \cup U''_{a_b}, 1 \leq b \leq d} \sigma_{a_b}(u) \cap W = \sigma_{\RNum{1}}(U) \setminus U$ and $\bigcup_{u \in U'_{a_b} \cup U''_{a_b}, 1 \leq b \leq d} \sigma^*(u) \cap \tilde{W}_{a_b} = \sigma^*(U) \cap S_{\RNum{1}}(W)$.
Since $U \subseteq S_{\RNum{1}}(X)$ (Property~\ref{prop-opt-drivers-in-p1}), $|\sigma_{\RNum{1}}(S_{\RNum{1}}(X)) \setminus S_{\RNum{1}}(X)| \geq |\sigma_{\RNum{1}}(U) \setminus U| \geq |\sigma^*(U)\cap S_{\RNum{1}}(W)|$.
\end{proof}

\begin{lemma}
Let $(S^*, \sigma^*)$ be any optimal solution for $(G,R)$.
Let $F_{\RNum{1}}$ be the set of drivers in $S^*$ that serve all trips of $\sigma_{\RNum{1}}(S_{\RNum{1}})$ in $(S^*, \sigma^*)$ .
Then, $|F_{\RNum{1}}| \geq \frac{2|S_{\RNum{1}} \cup F_{\RNum{1}}|}{K+2}$.
\label{lemma-phase-one-min}
\end{lemma}

\begin{proof}
Three sets $U, B_1, B_2$ of drivers in $S^*$ are considered, each of which serves a portion of trips of $\sigma_{\RNum{1}}(S_{\RNum{1}})$ in $(S^*, \sigma^*)$, and altogether $\sigma^*(U \cup B_1 \cup B_2) \cup S^*(W) \supseteq \sigma_{\RNum{1}}(S_{\RNum{1}})$, where $S^*(W) = S^* \cap W$.
Let $U$ be the set of drivers in $S^*$ that serve all trips of $S_{\RNum{1}}(W) \setminus S^*(W)$ in $(S^*, \sigma^*)$.
By Property~\ref{prop-opt-drivers-in-p1} and Lemma~\ref{lemma-phase-one-W}, all of $U$ must be drivers in $S_{\RNum{1}}(X)$ and $|\sigma_{\RNum{1}}(U) \setminus U| \geq |\sigma^*(U) \cap S_{\RNum{1}}(W)|$.
In this proof, the drivers in $S_{\RNum{1}}$ are partitioned into three sets: $S_{\RNum{1}}(W), U$, and $S_X = S_{\RNum{1}} \setminus (S_{\RNum{1}}(W) \cup U)$.

It requires another set $B_1$ of drivers in $S^*$ to serve all trips of $(\sigma_{\RNum{1}}(U) \setminus U) \subseteq W$ in $(S^*, \sigma^*)$ because $\sigma_{\RNum{1}}(U) \cap S_{\RNum{1}}(W) = \emptyset$ and $|\sigma_{\RNum{1}}(U) \setminus U| \geq |\sigma^*(U) \cap S_{\RNum{1}}(W)|$.
From $|\sigma_{\RNum{1}}(U) \setminus U| \geq |\sigma^*(U) \cap S_{\RNum{1}}(W)|$, $\sigma_{\RNum{1}}(U) \cap S_{\RNum{1}}(W) = \emptyset$ and that $\sigma^*(U) \cap S_{\RNum{1}}(W) = S_{\RNum{1}}(W) \setminus S^*(W)$,
we have $|(S_{\RNum{1}}(W) \setminus S^*(W)) \cup (\sigma_{\RNum{1}}(U) \setminus U)| \geq 2|S_{\RNum{1}}(W) \setminus S^*(W)|$. Therefore, $|U \cup B_1|\geq 2|S_{\RNum{1}}(W) \setminus S^*(W)|/K$ is the minimum number of drivers required in $S^*$ to serve all of $(S_{\RNum{1}}(W) \setminus S^*(W)) \cup (\sigma_{\RNum{1}}(U) \setminus U)$.
In the worst case, the algorithm can assign each trip $v \in B_1$ to be a driver in $S \setminus S_{\RNum{1}}$ such that $v$ serves only itself.

Consider the remaining set of drivers $S_X = S_{\RNum{1}} \setminus (S_{\RNum{1}}(W) \cup U)$.
For each driver $x \in S_X$, $\sigma_{\RNum{1}}(x)$ must serve at least one trip from $W$, meaning $|\sigma_{\RNum{1}}(x)| \geq 2$ and $|\sigma_{\RNum{1}}(S_X)| \geq 2|S_X|$.
Let $B_2$ be the set of drivers in $S^*$ that serve all trips of $\sigma_{\RNum{1}}(S_X)$ in $(S^*, \sigma^*)$.
We now consider the size of $B_2$.
Note that $B_2 \cap S_X$ may or may not be empty.
In the worst case, each trip $v \in B_2 \setminus S_X$ can be assigned as a driver in $S \setminus S_{\RNum{1}}$ s.t. $v$ serves itself only.
Hence, the ratio between the number of drivers in $S$ that serve $\sigma_{\RNum{1}}(S_X) \cup B_2$ and $B_2$ is
$(|S_X| + |B_2 \setminus S_X|)/|B_2|$.
This function is monotone increasing in $|B_2 \setminus S_X|$.
Thus, $B_2 \cap S_X = \emptyset$ gives the worst case.
From this and $|\sigma^*(v) \cap \sigma_{\RNum{1}}(S_X)| \leq K$ for each driver in $v \in B_2$, $|B_2| \geq 2|S_X| / K$.
Since $\sigma_{\RNum{1}}(S_X) \cap \sigma_{\RNum{1}}(U) = \emptyset$ and $\sigma_{\RNum{1}}(S_X) \cap \sigma_{\RNum{1}}(W) = \emptyset$,
$|(S_{\RNum{1}}(W) \setminus S^*(W)) \cup (\sigma_{\RNum{1}}(U) \setminus U) \cup \sigma_{\RNum{1}}(S_X)| \geq 2|S_{\RNum{1}}(W) \setminus S^*(W)|+ 2|S_X|$.
Thus, $|U \cup B_1 \cup B_2| \geq 2(|S_{\RNum{1}}(W) \setminus S^*(W)| + |S_X|)/K$.

Let $F_{\RNum{1}} = U \cup B_1 \cup B_2 \cup S^*(W)$, which is the set of drivers in $S^*$ required to serve all of $\sigma_{\RNum{1}}(S_{\RNum{1}})$ in $(S^*, \sigma^*)$. Then $|F_{\RNum{1}}| = |U \cup B_1 \cup B_2| + |S^*(W)| \geq 2(|S_{\RNum{1}}(W) \setminus S^*(W)| + |S_X|)/K + |S^*(W)|$.
Recall that $S_{\RNum{1}} = S_{\RNum{1}}(W) \cup U \cup S_X$.
The ratio between the number of drivers in $S$ to serve $\sigma_{\RNum{1}}(S_{\RNum{1}}) \cup B_1 \cup B_2$ and $F_{\RNum{1}}$ is
\begin{align*}
\tag{7.1} \label{eq:1}
\frac{|S_{\RNum{1}} \cup B_1 \cup B_2|} {|F_{\RNum{1}}|}
\leq \frac{|S_{\RNum{1}} \cup F_{\RNum{1}}|} {|F_{\RNum{1}}|}
&\leq \frac{|S_{\RNum{1}}(W) \setminus S^*(W)|+|S_X|+|U \cup B_1 \cup B_2 \cup S^*(W)|} {|U \cup B_1 \cup B_2 \cup S^*(W)|} \\
&\leq \frac{|S_{\RNum{1}}(W) \setminus S^*(W)| + |S_X|} {2(|S_{\RNum{1}}(W) \setminus S^*(W)| + |S_X|)/K + |S^*(W)|} + 1\\
&\leq \frac{|S_{\RNum{1}}(W) \setminus S^*(W)| + |S_X|} {2(|S_{\RNum{1}}(W) \setminus S^*(W)| + |S_X|) / K} + 1\\
&= \frac{K}{2} + 1 = \frac{K + 2}{2}.
\end{align*}
Hence, it requires at least $|F_{\RNum{1}}| \geq 2|S_{\RNum{1}} \cup F_{\RNum{1}}| / (K+2)$ drivers in $S^*$ to serve all trips of $\sigma_{\RNum{1}}(S_{\RNum{1}})$.
\end{proof}

Notice that Equation~(\ref{eq:1}) holds when each driver in $u \in F_{\RNum{1}}$ serves at most $K$ trips of $\sigma_{\RNum{1}}(S_{\RNum{1}})$, that is, $|\sigma^*(u)|\leq K+1$.
Next, we consider the minimum number of drivers in $S^*$ that is required to serve all trips of $\sigma_{\RNum{2}}(S_{\RNum{2}})$ in $(S^*, \sigma^*)$,
where $(S_{\RNum{2}}, \sigma_{\RNum{2}})$ is the partial solution computed at the end of Phase-II.
Recall that $Z = \{i \in R \setminus \sigma_{\RNum{1}}(S_{\RNum{1}}) \mid \delta_i = 0\}$ and all of $Z$ are served in $\sigma_{\RNum{2}}(S_{\RNum{2}})$.

\begin{lemma}
Let $(S^*, \sigma^*)$ be any optimal solution for $(G,R)$.
Let $F_{\RNum{2}}$ be the set of drivers in $S^*$ that serve all trips of $\sigma_{\RNum{2}}(S_{\RNum{2}})$ in $(S^*, \sigma^*)$.
Then, $|F_{\RNum{2}}| \geq \frac{2|S_{\RNum{2}} \cup F_{\RNum{2}}|}{K+2}$.
\label{lemma-phase-two-min}
\end{lemma}

\begin{proof}
We consider $F_{\RNum{2}} = F_{\RNum{1}} \cup C'\cup V'\cup C''$, each of $C', V'$ and $C''$ is a set of drivers in $S^*$ that serves a portion of trips of $\sigma_{\RNum{2}}(S_{\RNum{2}}\setminus S_{\RNum{1}})$ in $(S^*,\sigma^*)$.
Let $S' = \{x \in S_{\RNum{2}} \setminus S_{\RNum{1}} \mid |\sigma_{\RNum{2}}(x)| = 1\}$ be the set of solo drivers in $S_{\RNum{2}} \setminus S_{\RNum{1}}$.
Since $S' \subseteq X$, $n_x > 0$ for every $x \in S'$.
Each $x \in S'$ belongs to a distinct node of $\Gamma$ since otherwise, one of them can serve the other.
This implies that $S' \subseteq Z$.
Let $C' = C'_0 \cup C'_1$ be the set of drivers in $S^*$ that serve all of $S'$ in $(S^*,\sigma^*)$, where $C'_0=\{v \in S^* \mid \sigma^*(v) \cap S' \neq \emptyset \text{ and } \delta_v = 0\}$ and $C'_1=\{v \in S^* \mid \sigma^*(v) \cap S' \neq \emptyset \text{ and } \delta_v \geq 1\}$.
By definition and $S' \subseteq Z$, $C' \subseteq X$ and $C'_1 \cap Z = \emptyset$.
Let $S'= S'_0 \cup S'_1$, where $S'_0$ is served by $C'_0$ and $S'_1$ is served by $C'_1$. Then $|C'_0|=|S'_0|$ because each $x \in S'_0$ belongs to a distinct node and $\delta_v = 0$ for all $v \in C'_0$.

Consider any driver $z \in S'_1$. Let $(S_z, \sigma_z)$ be the partial solution just before $z$ is assigned as a driver by the algorithm.
All trips in $C'_1 \cap R(A^*_{\node(z)})$ must have been assigned as drivers in $S_z$. Otherwise, any $v \in C'_1 \cap R(A^*_{\node(z)})$ would have been assigned as a driver in $S_{\RNum{2}}$ to serve $z$ when $\node(z)$ is processed.
Hence, $C'_1 \subseteq S_{\RNum{2}}$, and 
for every driver $v \in C'_1 \cap R(A^*_{\node(z)})$, $\free_z(v) = 0$ or $\free_z(v) > 0$ with $\stopp_z(v) = \delta_v$.
From these and each $z \in S'_1$ belongs to a distinct node, $|\bigcup_{v \in C'_1} \sigma_z(v) \setminus \{v\}| \geq |\bigcup_{v \in C'_1} \sigma^*(v) \cap S'_1| = |S'_1|$, and $|C'_1| \geq |S'_1|/K$ to serve all of $S'_1 \subseteq Z$ since $S'_1 \cap C'_1 = \emptyset$.
Recall that for every driver $v \in C'_1$, each passenger served by $\sigma_{\RNum{2}}(v)$ is either in $W$ or $Z$.
For any $v \in C'_1$ such that $\sigma_{\RNum{2}}(v) \cap W \neq \emptyset$,
$v \in S_{\RNum{1}}$ and $v$ is included in the calculation of Equation~(\ref{eq:1}).
For any such $v$ (regardless if $\sigma_{\RNum{2}}(v) \cap Z \neq \emptyset$), the ratio $|S_{\RNum{2}}\setminus S_{\RNum{1}}|/|F_{\RNum{2}}|$ decreases because $v \in S_{\RNum{1}}$ and $v \in F_{\RNum{2}}$.
To get the approximation ratio for the worst case, we assume that all $C'_1 \subseteq (S_{\RNum{2}} \setminus S_{\RNum{1}})$, that is, $\sigma_{\RNum{2}}(v) \cap W = \emptyset$ and $\sigma_{\RNum{2}}(v) \cap Z \neq \emptyset$ for all $v \in C'_1$.
Let $V'$ be the set of drivers in $S^*$ that serve all of $\bigcup_{v \in C'_1} \sigma_{\RNum{2}}(v) \cap Z$ in $(S^*,\sigma^*)$.
By the algorithm (Phase-II part 2 specifically), each passenger $z \in \sigma_{\RNum{2}}(v) \cap Z$ belongs to a distinct node, implying $|\bigcup_{v \in C'_1} \sigma_{\RNum{2}}(v) \cap Z| \geq |S'_1|$.
From these, each driver in $V' \subseteq S^*$ can serve at most $K$ trips of $\bigcup_{v \in C'_1} \sigma_{\RNum{2}}(v) \cap Z$, and hence, $|V'| \geq |S'_1|/K$. Since $S'_1\cap (\bigcup_{v\in C'_1} \sigma_{\RNum{2}}(v) \cap Z)=\emptyset$, $|V' \cup C'_1| \geq 2|S'_1|/K$.
In the worst case, the algorithm can assign all of $C'$ and $V'$ to be drivers in $S$.
From $|S'_0| = |C'_0|$, the ratio between $|S' \cup C' \cup V'|$ and $|C' \cup V'|$ is
\begin{align*}
\frac{|S' \cup C' \cup V'|}{|C'\cup V'|} \leq \frac{|S'| + |C' \cup V'|}{|C' \cup V'|} = \frac{|S'_0| + |S'_1|}{|C'_0|+|C'_1 \cup V'|} + 1 = \frac{|C'_0| + |S'_1|}{|C'_0|+ |C'_1 \cup V'|} + 1.
\end{align*}
Since $|S'_1| \geq |C'_1|$, $(|C'_0|+|S'_1|)/(|C'_0|+|C'_1 \cup V'|)$ is monotone decreasing in $|C'_0|$.
Therefore,
\begin{align*}
\tag{7.2} \label{eq:2}
\frac{|S' \cup C' \cup V'|}{|C'\cup V'|}
\leq \frac{|C'_0| + |S'_1|}{|C'_0|+|C'_1 \cup V'|} + 1
\leq \frac{|S'_1|}{|C'_1 \cup V'|} + 1
\leq \frac{|S'_1|}{2|S'_1|/K}+1
= \frac{K+2}{2}.
\end{align*}

Consider the remaining drivers in $S'' = S_{\RNum{2}} \setminus (S_{\RNum{1}} \cup S' \cup C')$.
Since each driver $x \in S''$ serves at least one passenger, $|\sigma_{\RNum{2}}(S'')| \geq 2|S''|$.
Let $C'' = C''_0 \cup C''_1$ be the set of drivers in $S^*$ that serve all of $\sigma_{\RNum{2}}(S'')$ in $(S^*,\sigma^*)$, where
$C''_0 = \{v \in S^* \mid \sigma^*(v) \cap \sigma_{\RNum{2}}(S'') \neq \emptyset \text{ and } v \in Z\}$ and $C''_1 = \{v \in S^* \mid \sigma_{\RNum{2}}(S'') \neq \emptyset \text{ and } v \in X \setminus Z\}$.
Note that $C''_0 \subseteq \sigma_{\RNum{2}}(S_{\RNum{2}})$ by definition.
From the algorithm (Phase-II), $\sigma_{\RNum{2}}(S'') \subseteq Z$ and $\sigma_{\RNum{2}}(S'') \cap \sigma_{\RNum{2}}(S') = \emptyset$.
In the worst case, each trip $v \in C''$ can be assigned as a driver in $S$ such that $v$ serves itself only.
From this, $C''_0 \subseteq \sigma_{\RNum{2}}(S_{\RNum{2}})$ and that every $v \in C''$ can serve at most $K$ trips of $\sigma_{\RNum{2}}(S'')$, the ratio between the number of drivers in $S$ that serve $\sigma_{\RNum{2}}(S'') \cup C''$ and $C''$ is
\begin{align*}
\tag{7.3} \label{eq:3}
\frac{|S'' \cup C''|}{|C''|} \leq \frac{|S''| + |C''|}{|C''|} \leq \frac{|S''|}  {2|S''|/K} + 1 \leq \frac{K}{2} + 1
= \frac{K + 2}{2}.
\end{align*}

Next, we combine Equations (\ref{eq:2}) and (\ref{eq:3}) with Equation (\ref{eq:1}).
Let $F_{\RNum{2}} = F_{\RNum{1}} \cup C' \cup V' \cup C''$, which is the set of drivers in $S^*$ required to serve all trips of $\sigma_{\RNum{2}}(S_{\RNum{2}})$ in $(S^*,\sigma^*)$.
Note that $F_{\RNum{1}} \subseteq S^*$ is the minimum set of drivers that serve all of $\sigma_{\RNum{1}}(S_{\RNum{1}}) \subseteq W$ and $(C' \cup V') \subseteq S^*$ is the minimum set of drivers that serve all of $S' \cup (\bigcup_{v \in S'} \sigma_{\RNum{2}}(v) \cap Z) $, and $C'' \subseteq S^*$ is the minimum set of drivers that serve all of $\sigma_{\RNum{2}}(S'') \subseteq Z$ such that $\sigma_{\RNum{2}}(S'') \cap \sigma_{\RNum{2}}(S') = \emptyset$.
The minimum number of drivers in each set of $F_{\RNum{1}}$, $C' \cup V'$ and $C''$ is calculated based on each driver $u \in F_{\RNum{2}}$ serving $K$ trips of $\sigma_{\RNum{2}}(S_{\RNum{2}})$, as stated in Equations (\ref{eq:1}), (\ref{eq:2}) and (\ref{eq:3}).
Hence, 
\begin{align*}
|F_{\RNum{2}}| &\geq 2|S_{\RNum{1}} \cup F_{\RNum{1}}|/(K+2) + 2|S' \cup C' \cup V'|/(K+2) + 2|S'' \cup C''|/(K+2) \\
&= 2(|S_{\RNum{1}} \cup F_{\RNum{1}}|+|S' \cup C' \cup V'|+|S'' \cup C''|)/(K+2).
\end{align*}
The ratio between $S_{\RNum{2}} \cup F_{\RNum{2}}$ and $F_{\RNum{2}}$ is at most
\begin{align*}
\tag{7.4} \label{eq:4}
\frac{|S_{\RNum{2}} \cup F_{\RNum{2}}|} {|F_{\RNum{2}}|}
&\leq \frac{|S_{\RNum{1}} \cup S' \cup S'' \cup F_{\RNum{1}} \cup C' \cup V' \cup C''|} {|F_{\RNum{1}} \cup C' \cup V' \cup C''|} \\
&\leq \frac{|S_{\RNum{1}} \cup F_{\RNum{1}}| +|S' \cup C' \cup V'| + |S'' \cup C''|} {2(|S_{\RNum{1}} \cup F_{\RNum{1}}|+|S' \cup C' \cup V'|+|S'' \cup C''|)/(K+2)} \\
&= \frac{K+2}{2}.
\end{align*}
Therefore, it requires at least $|F_{\RNum{2}}| \geq 2|S_{\RNum{2}} \cup F_{\RNum{2}}|/(K+2) \geq 2|S_{\RNum{2}}| / (K+2)$ drivers in $S^*$ to serve all of $\sigma_{\RNum{2}}(S_{\RNum{2}})$.
\end{proof}

Again, Equation (\ref{eq:4}) holds if each driver $u \in F_{\RNum{2}}$ serves at most $K$ trips of $\sigma_{\RNum{2}}(S_{\RNum{2}})$, that is, $|\sigma^*(u)|\leq K+1$.
Recall that $B_1$ and $B_2$ are subsets of $F_{\RNum{1}}$ defined in the proof of Lemma~\ref{lemma-phase-one-min}, and $C'$, $V'$ and $C''$ are subsets of $F_{\RNum{2}}$ defined in the proof of Lemma~\ref{lemma-phase-two-min}.
Each trip $v$ in $B_1 \cup B_2 \cup C' \cup V' \cup C''$ can be a driver in $S$ that serves itself only.
This can happen if before $v \in B_1 \cup B_2 \cup C' \cup V' \cup C''$ is processed by the algorithm, $R(D^*_{\node(v)}) \setminus \{v\} \subseteq \sigma_v(S_v)$ for $\delta_v > 0$ or $R(\node(v)) \setminus \{v\} \subseteq \sigma_v(S_v)$ for $\delta_v = 0$, where $(S_v, \sigma_v)$ is the partial solution just before $v$ is processed.
In other words, all trips that can be served by $v$ are already served w.r.t. $(S_v, \sigma_v)$.

\begin{remark}
Let $B_1$ and $B_2$ be the set of trips defined in the proof of Lemma~\ref{lemma-phase-one-min}. Let $S'$ and $S''$ be the sets of drivers defined in the proof of Lemma~\ref{lemma-phase-two-min}.
Trips of $B_1 \cup B_2$ can be assigned as drivers in Phase-II or Phase-III. Suppose $v \in B_1 \cup B_2$ is assigned as a driver in Phase-II.
If $\sigma(v)$ serves only itself, $v$ is included in $S'$.
If $\sigma(v)$ serves more than one trip, $v$ is included in $S''$.
For either case, Equation (\ref{eq:4}) holds.
\label{remark-1}
\end{remark}

From Remark~\ref{remark-1}, let $B'_1 \cup B'_2 \subseteq B_1 \cup B_2$ be the trips assigned as drivers in Phase-III.
Let $\bar{S} = S \setminus (S_{\RNum{2}} \cup B'_1 \cup B'_2 \cup C' \cup V' \cup C'')$ be the set of drivers found during Phase-III of the algorithm.

\begin{lemma}
It requires at least $\frac{2|S|}{K+2}$ drivers in $S^*$ to serve all trips of $\sigma(S)$ in $(S^*, \sigma^*)$.
\label{lemma-phase-three-min}
\end{lemma}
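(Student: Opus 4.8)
The plan is to mirror the proofs of Lemma~\ref{lemma-phase-one-min} and Lemma~\ref{lemma-phase-two-min}, now charging the drivers created in Phase-III to drivers of $S^*$ at a rate of at most $\frac{K+2}{2}$, and then to fold the result into equation~\ref{eq:5.5}. Since $(S,\sigma)$ is a complete solution we have $\sigma(S)=R$, so the set $F$ of drivers of $S^*$ that serve all of $\sigma(S)$ is all of $S^*$; proving $|F|\ge \frac{2|S|}{K+2}$ is therefore exactly the sought approximation bound $|S|\le\frac{K+2}{2}|S^*|$. The Phase-III drivers are $\hat{S}=S\setminus S_{\RNum{2}}$. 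Following Remark~\ref{remark-1}, I would write $\hat{S}=\bar{S}\cup(B'_1\cup B'_2\cup C'')$, where the members of $B'_1\cup B'_2\subseteq B_1\cup B_2$ and of $C''$ have \emph{already} been paid for on the $S^*$-side of equation~\ref{eq:5.5} (they enter there as worst-case solo drivers of $S$). Hence it suffices to charge only the genuinely new drivers $\bar{S}=\hat{S}\setminus(B'_1\cup B'_2\cup C'')$ to drivers of $S^*$ lying outside $F_{\RNum{2}}$.

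\textbf{Splitting $\bar{S}$ and a distinct-node claim.} First I would partition $\bar{S}=\bar{S}_s\cup\bar{S}_m$ into the \emph{solo} drivers ($|\sigma(x)|=1$) and the drivers serving at least two trips. The key structural claim is that the solo drivers of $\bar{S}$ occupy distinct nodes of $\Gamma$: when a solo $x\in\bar{S}_s$ is assigned at $\mu=\node(x)$ through the $\hat{X}_2=\emptyset$ branch of Phase-III, we have $\free(x)=n_x\ge 1$, so had any other unserved trip remained in $R(\mu)$ it would have been added to $\sigma(x)$ via $R(x,\mu,S)$, contradicting solo; thus $x$ is the last unserved trip of $R(\mu)$, $R(\mu)$ becomes fully served, and $\mu$ is never revisited. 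I would also record that $\hat{X}_2=\emptyset$ at the instant $x$ is chosen, i.e.\ no already-assigned driver of $S$ in $S(A^*_{\node(x)})$ still has both free capacity and an available stop.

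\textbf{Charging and combination.} For the solo drivers I would let $C'''=C'''_0\cup C'''_1$ be the drivers of $S^*$ serving $\bar{S}_s$, split by $\delta_v=0$ versus $\delta_v\ge 1$, exactly as $C'$ was split in Lemma~\ref{lemma-phase-two-min}. The distinct-node property gives $|C'''_0|=|\bar{S}_{s,0}|$, and the same counting that produced equations~\ref{eq:5.2}--\ref{eq:5.3} (each driver of $C'''_1$ is itself a driver of $S$ that serves, in distinct nodes, at least $|\sigma^*(\cdot)\cap\bar{S}_{s,1}|$ passengers, so together with $\bar{S}_{s,1}$ at least $2|\bar{S}_{s,1}|$ trips must be covered, $K$ at a time) yields $|\bar{S}_s\cup C'''|/|C'''|\le\frac{K+2}{2}$. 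For the non-solo drivers, $|\sigma(\bar{S}_m)|\ge 2|\bar{S}_m|$, and letting $C''''\subseteq S^*$ serve $\sigma(\bar{S}_m)$, each such driver serves at most $K$ of those trips, so the monotonicity argument of equation~\ref{eq:5.4} gives $|\bar{S}_m\cup C''''|/|C''''|\le\frac{K+2}{2}$. Finally I would set $F=F_{\RNum{2}}\cup C'''\cup C''''\subseteq S^*$: because the trip-sets served across the three phases are disjoint ($W$ in Phase-I, $Z$ in Phase-II, the remainder in Phase-III), the three lower bounds on $|F_{\RNum{2}}|$, $|C'''|$, $|C''''|$ add, and combining equation~\ref{eq:5.5} with the two new ratios in the same manner gives $\frac{|S|}{|F|}=\frac{|S_{\RNum{2}}\cup\bar{S}|}{|F|}\le\frac{K+2}{2}$, hence $|F|\ge\frac{2|S|}{K+2}$, proving the lemma.

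\textbf{Main obstacle.} The delicate part is the bookkeeping that prevents double-charging a driver of $S^*$ that already appears in $F_{\RNum{2}}$ (notably a member of $U$, or of $C'$/$C''$) when it is reused to serve a solo driver of $\bar{S}_s$; this requires the same kind of ``slight adjustment'' used for $C'\cap S_X$ in Lemma~\ref{lemma-phase-two-min}. Even harder is justifying $|C'''_1|\ge 2|\bar{S}_{s,1}|/K$: I must turn the fact that $\hat{X}_2=\emptyset$ \emph{forced} each solo driver (so no ancestor driver of $S$ could still absorb it) into the guarantee that the corresponding ancestor drivers of $S^*$ are already saturated, so that genuinely distinct drivers of $S^*$ are needed. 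Establishing this will rely on the transitivity of the serve relation together with the greedy ``largest $\free(x)$''/``largest $n_x$'' selection rule of Phase-III.
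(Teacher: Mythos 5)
Your proposal diverges from the paper at the one point that actually matters, and the divergence opens a genuine gap that you flag but do not close. The paper does \emph{not} charge the Phase-III solo drivers to a separate set of optimal drivers in the style of Phase-II. Instead it defines $X(1)=\{x\in\bar{S} : |\sigma(x)|=1\}$, shows the members of $X(1)$ are pairwise incomparable in $\Gamma$, and then amortizes each solo $x$ against the \emph{other Phase-III drivers in its ancestor nodes}: the block $\bar{S}(A^*_{\node(x)})$ contains $x$ plus drivers that each serve at least two trips, so $|\bar{S}(A^*_{\node(x)})|\le(|\sigma(\bar{S}(A^*_{\node(x)}))|+1)/2$, and the blocks (and their charge sets $Y_x\subseteq S^*$) are pairwise disjoint. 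The remaining drivers $\bar{S}_{\bar{X}}$ all serve at least two trips and are handled as in your $\bar{S}_m$ case; a final $Y_0/Y_1$ case analysis disposes of optimal drivers that are passengers in $\sigma(\bar{S})$ versus trips already counted in $F_{\RNum{2}}$.

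The gap in your route is the inequality $|C'''_1|\ge 2|\bar{S}_{s,1}|/K$. The analogous bound in Lemma~\ref{lemma-phase-two-min} rests on the fact that every optimal driver $v$ with $\sigma^*(v)\cap S'\neq\emptyset$ is forced to already be a driver in $S_{\RNum{2}}$ (an unassigned such $v$ would sit in $\bar{X}$ and be drafted to serve the lone trip), and hence itself carries at least $|\sigma^*(v)\cap S'_1|$ passengers on the $S$-side, doubling the trip count that $C'_1$ must cover. In Phase-III this forcing fails: by the time a solo $x\in\bar{S}_s$ is created, every trip in $R(A_{\node(x)})$ has already been \emph{served}, possibly as a passenger, and the algorithm never promotes an already-served passenger to a driver. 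So an optimal driver $v$ of $x$ may be a mere passenger in $(S,\sigma)$; the emptiness of $\hat{X}_2$ only tells you that the \emph{drivers of $S$} among $x$'s ancestors are saturated, not that the \emph{drivers of $S^*$} are. In the worst case a single such $v$ serves $K$ solo drivers of $\bar{S}_s$ while contributing no passengers of its own to the count, and your ratio degrades to $K$ rather than $\frac{K+2}{2}$. You correctly identify this as the "main obstacle," but the proof as proposed does not overcome it; the paper's block decomposition $\bar{S}(A^*_{\node(x)})$, $Y_x$ (together with the $Y_0/Y_1$ argument) is precisely the device that replaces the unavailable Phase-II forcing argument, and some version of it is needed here.
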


\begin{proof}
Any trip $x$ in $R \setminus \sigma_{\RNum{2}}(S_{\RNum{2}})$ has $n_x > 0$ and $\delta_x > 0$ since all of $W$ and $Z$ are served in $\sigma_{\RNum{2}}(S_{\RNum{2}})$.
Consider the moment a trip $x \in \bar{S}$ is assigned as a driver.
Let $(S_x, \sigma_x)$ be the partial solution just before $x$ is processed by the algorithm.
Since $n_x > 0$ and $\delta_x > 0$, $x$ will serve at least one passenger ($|\sigma(x)| \geq 2$) if there exists an un-assigned trip in $R(D^*_{\node(x)}) \setminus \{x\}$, that is, $R(D^*_{\node(x)}) \setminus \{x\} \nsubseteq \sigma_x(S_x)$.
Let $X(1) = \{x \in \bar{S} \mid |\sigma(x)| = 1\}$.
For every pair $x, x' \in X(1)$, $x \notin R(D^*_{\node(x')}) \cup R(A^*_{\node(x')})$. Otherwise, one of them can serve the other.
For every $x \in X(1)$, any driver $x' \in \bar{S}(A^*_{\node(x)}) \setminus \{x\}$ must serve at least two trips, where $\bar{S}(A^*_{\node(x)}) = \bar{S} \cap R(A^*_{\node(x)})$.
For any $x \in X(1)$, let $Y_x$ be the set of drivers in $S^*$ that serve all of $\bigcup_{x' \in \bar{S}(A^*_{\node(x)})} \sigma(x')$ in $(S^*,\sigma^*)$.
For a driver $y \in Y_x$, $\sigma^*(y) \setminus \{y\}$ can contain at most $K$ trips of $\bigcup_{x' \in \bar{S}(A^*_{\node(x)})} \sigma(x')$. If $y \in \bigcup_{x' \in \bar{S}(A^*_{\node(x)})} \sigma(x')$, $y$ can serve at most $K+1$ trips of $\bigcup_{x' \in \bar{S}(A^*_{\node(x)})} \sigma(x')$.
Hence,
$|Y_x| \geq (2|\bar{S}(A_{\node(x)})| + 1)/(K+1)$.
For any pair $x, x' \in X(1)$ with $x \neq x'$, since drivers in $Y_x$ cannot serve any trip of $\bigcup_{x'' \in \bar{S}(A^*_{\node(x')})} \sigma(x'')$, it must be that $Y_x \cap Y_{x'} = \emptyset$.
Let $Y = \bigcup_{x \in X(1)} Y_x$, $\bar{S}_{X(1)} = \bigcup_{x \in X(1)} \bar{S}(A^*_{\node(x)})$ and $\sigma(\bar{S}_{X(1)}) = \bigcup_{x \in \bar{S}_{X(1)}} \sigma(x)$.
Note that $|Y| \geq |X(1)|$.
Then, 
\begin{align*}
|Y| = \sum_{x \in X(1)} |Y_x| &\geq \sum_{x \in X(1)} (2|\bar{S}(A_{\node(x)})| + 1)/(K+1) \\
&= (\sum_{x \in X(1)} 2|\bar{S}(A_{\node(x)})| + |X(1)|)/(K+1).
\end{align*}
The above can be rewritten as 
\[
\frac{(K+1)\cdot |Y| - |X(1)|}{2} \geq \sum_{x \in X(1)} |\bar{S}(A_{\node(x)})| = \sum_{x \in X(1)} |\bar{S}(A^*_{\node(x)})| - X(1),
\]
and hence,
\[
\sum_{x \in X(1)} |\bar{S}(A^*_{\node(x)})| \leq \frac{(K+1)\cdot |Y| + |X(1)|}{2}.
\]

Consider the remaining drivers in $\bar{S}_{\overline{X(1)}} = \bar{S} \setminus \bar{S}_{X(1)}$. Each driver $x \in \bar{S}_{\overline{X(1)}}$ serves at least two trips, implying $|\sigma(\bar{S}_{\overline{X(1)}})| \geq 2|\bar{S}_{\overline{X(1)}}|$.
Let $Y'$ be the set of drivers in $S^*$ that serve all of $\bigcup_{x \in \bar{S}_{\overline{X(1)}}} \sigma(x)$ in $(S^*,\sigma^*)$.
Any driver $x \in \bar{S}_{\overline{X(1)}}$ is not in $\sigma(\bar{S}_{X(1)})$ by the definition of $\bar{S}_{\overline{X(1)}}$, and $x$ is not in $R(D^*_{\node(x')})$ for any $x' \in X(1)$ since otherwise, $x'$ would have served $x$.
From these, for every $y' \in Y'$, $y' \notin R(D^*_{\node(x)}) \cup R(A^*_{\node(x)})$ for all $x \in \bar{S}_{X(1)}$, which implies that $|Y \cup Y'| = |Y| + |Y'|$.
Similar to $Y$, each driver in $Y'$ can serve at most $K+1$ trips of $\bigcup_{x \in \bar{S}_{\overline{X(1)}}} \sigma(x)$.
Hence, $|Y'| \geq 2|\bar{S}_{\overline{X(1)}}|/(K+1)$, implying $((K+1)\cdot |Y'|)/2 \geq |\bar{S}_{\overline{X(1)}}|$.
Each $y \in Y \cup Y'$ must be in either $\sigma(S\setminus S_{\RNum{2}})$ or $\sigma(S_{\RNum{2}})$ since all trips must be served at the end by the algorithm.
In other words, if $y \in S \setminus \bar{S}$, $y$ has been considered in Equation (\ref{eq:4}). This means that we only need to consider $\bar{S}$, and the ratio between $|\bar{S}|$ and $|Y \cup Y'|$ is
\begin{align*}
\tag{7.5} \label{eq:5}
\frac{|\bar{S}|}{|Y \cup Y'|} = \frac{|\bar{S}_{X(1)}| + |\bar{S}_{\overline{X(1)}}|} {|Y| + |Y'|}
&\leq \frac{((K+1)\cdot |Y| + |X(1)|)/2 + ((K+1)\cdot |Y'|)/2} {|Y| + |Y'|} \\
&= \frac{(K+1)\cdot (|Y| + |Y'|) + |X(1)|} {2(|Y| + |Y'|)} \\
&= \frac{K+1}{2} + \frac{|X(1)|} {2(|Y| + |Y'|)} \\
&\leq \frac{K+1}{2} + \frac{1}{2} =\frac{K+2}{2}.
\end{align*}

Finally, we calculate the ratio between $S$ and $S^*$, where $F_{\RNum{2}} \cup Y \cup Y' \subseteq S^*$.
Recall that $S = \bar{S} \cup S_{\RNum{2}} \cup F_{\RNum{2}}$ and all trips served by each driver $x \in \bar{S}$ are in $X$.
From this and the same reason stated in the proof of Lemma~\ref{lemma-phase-three-min} for Equation (\ref{eq:4}),
the minimum number of drivers in each set of $F_{\RNum{2}}$, $Y$ and $Y'$ is calculated based on using all capacity $K$ of every driver $u \in F_{\RNum{2}} \cup Y \cup Y'$.
Hence, with Equations (\ref{eq:4}) and (\ref{eq:5}),
\begin{align*}
|F_{\RNum{2}} \cup Y \cup Y'| &\geq 2|S_{\RNum{2}} \cup F_{\RNum{2}}|/(K+2) + 2|\bar{S}|/(K+2) \\
&= 2(|S_{\RNum{2}} \cup F_{\RNum{2}}| + |\bar{S}|)/(K+2)
\end{align*}
The ratio between $S$ and $S^*$ is
\begin{align*}
\tag{7.6} \label{eq:6}
\frac{|S|} {|S^*|}
\leq \frac{|S|} {|F_{\RNum{2}} \cup Y \cup Y'|}
&\leq \frac{|\bar{S}| + |S_{\RNum{2}} \cup F_{\RNum{2}}|} {|F_{\RNum{2}} \cup Y \cup Y'|} \\
&\leq \frac{|\bar{S}| + |S_{\RNum{2}} \cup F_{\RNum{2}}|} {2(|S_{\RNum{2}} \cup F_{\RNum{2}}| + |\bar{S}|)/(K+2)} \\
&= \frac{K+2}{2}
\end{align*}
Therefore, it requires at least $\frac{2|S|}{K+2}$ drivers in $S^*$ to serve all of $\sigma(S)$.
\end{proof}

Next, we show that Algorithm~\arabic{algcounter} always computes a valid solution to any instance of the ridesharing problem with stop constraint, followed by its running time.
Let $(S', \sigma')$ be the partial solution computed by Algorithm~\arabic{algcounter} for a given time point.
\begin{lemma}
Let $(S, \sigma)$ be a solution found by Algorithm~\arabic{algcounter} after processing all trips in $R$.
Then for each pair $i, j \in S$, $\sigma(i) \cap \sigma(j) = \emptyset$ and $\sigma(S) = R$, implying $(S, \sigma)$ is indeed a valid solution to the ridesharing instance $(G, R)$.
\label{lemma-alg-solution}
\end{lemma}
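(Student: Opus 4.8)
The plan is to verify the three defining conditions of a solution from Section~\ref{sec-preliminaries}: (i) every driver $i\in S$ can serve $\sigma(i)$ (feasibility), (ii) $\sigma(i)\cap\sigma(j)=\emptyset$ for distinct $i,j\in S$ (disjointness), and (iii) $\sigma(S)=R$ (coverage). Since the statement singles out (ii) and (iii), I would treat these as the core and argue (i) as a structural invariant maintained by the driver-selection rules. Throughout I would track the partial solution $(S',\sigma')$ produced after each assignment and show the three properties hold as invariants that survive every iteration of all three phases.

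For disjointness and feasibility I would argue by induction on the sequence of assignments. Each time the algorithm commits a driver $x$, the passengers it gains are drawn from $R(\mu)\setminus\sigma(S)$ (respectively $W(\mu)\setminus\sigma(S)$ or $Z(\mu)\setminus\sigma(S)$), i.e.\ from currently unserved trips; moreover a freshly promoted driver is initialized with $\sigma(x)=\{x\}$ before anything is added, so $x$ itself enters $\sigma(S)$ and, being a driver thereafter by Property~\ref{property-1}, is never selected as someone else's passenger. Hence no trip is ever placed in two passenger sets, giving (ii). For (i), I would observe that in every phase the chosen $x$ satisfies $\node(x)\in A^*_\mu$ (the selection sets $\hat X_1,\hat X_2,\bar X$ all live in $R(A^*_\mu)$ or $S(A^*_\mu)$), so by Conditions (1)--(3) and the meta-graph serve relation $x$ can indeed reach and serve any trip of $R(\mu)$ with zero detour along its fixed path; Condition (5) removes any time obstruction. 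The only binding limits are capacity and stops, and these are exactly what the guards $\free(x)>0$ and $\big(\stopp(x)<\delta_x \vee x\in R(\mu)\big)$ enforce: serving trips at $\node(x)$ adds no stop, while serving a descendant node consumes at most one unit of the stop budget, so $\stopp(x)\le\delta_x$ and $\free(x)\ge 0$ are preserved after the update.

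For coverage I would proceed phase by phase. The Phase-I while-loop runs until $\Gamma(W)=\emptyset$, so all of $W$ is served in $\sigma_{\RNum{1}}(S_{\RNum{1}})$. In Phase-II the first loop repeatedly lets the largest-capacity trip of a node absorb its node-mates until each node of $\Gamma(Z)$ retains a single unserved $z$; the second loop then serves that $z$, and here I would note that any $z\in Z$ still unserved must have $n_z>0$ and not be alone in its node (otherwise $z\in W$ and would already be served), whence $z\in X\cap R(\node(z))$ places $z\in\bar X$, so a driver always exists (in the extreme, $z$ serves itself). Thus $Z\subseteq\sigma_{\RNum{2}}(S_{\RNum{2}})$. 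Finally Phase-III sweeps $\mu_p,\dots,\mu_1$, and for each node its while-loop exits only when $R(\mu)\subseteq\sigma(S)$; I would argue the loop always makes progress because every trip still unserved after Phase-II lies in $X$ with $n_x>0$ and $\delta_x>0$, so when $\hat X_2=\emptyset$ the set $X(\mu)\setminus\sigma(S)$ is nonempty and yields a new driver serving at least itself. Hence every node is fully served and $\sigma(S)=R$.

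The main obstacle I anticipate is the coverage/termination argument rather than disjointness: one must rule out the possibility that a node is entered with unserved trips yet no admissible driver exists. This reduces to a clean classification of unserved trips after each phase (all of $W$ gone after Phase-I, all of $Z$ gone after Phase-II, leaving only $X$-trips with positive capacity and positive stop budget for Phase-III), which guarantees $X(\mu)\setminus\sigma(S)\neq\emptyset$ whenever $R(\mu)\nsubseteq\sigma(S)$. A secondary subtlety is confirming that the disjunctive stop guard $\stopp(x)<\delta_x \vee x\in R(\mu)$ never lets a driver exceed $\delta_x$; I would dispatch this by the same-node-versus-descendant-node case split above.
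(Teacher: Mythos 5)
Your proposal is correct and follows essentially the same route as the paper's proof: a phase-by-phase argument that every driver assignment is feasible (via the membership of the selected $x$ in $\hat X_1$, $\hat X_2$ or $\bar X$ over $R(A^*_\mu)$ together with the capacity and stop guards) and that each phase's loop terminates only once its target set ($W$, then $Z$, then all of $R$) is served. You are in fact somewhat more explicit than the paper on the disjointness claim $\sigma(i)\cap\sigma(j)=\emptyset$ (passengers drawn only from $R(\mu)\setminus\sigma(S)$, drivers never later selected as passengers by Property~\ref{property-1}) and on why a candidate driver always exists when a node still has unserved trips; the paper leaves both points largely implicit.
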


\begin{proof}
Phase-I of the algorithm ends until all trips of $W$ are served, that is, $\Gamma(W) = \emptyset$.
In each iteration of Phase-I, a node $\mu \in \Gamma(W)$ containing trips of $W$ is chosen w.r.t. $(S', \sigma')$.
A trip $x$ is selected from $\hat{X}_1 \cup \bar{X}$, where $\hat{X}_1 = \{i \in S'(A_{\mu}) \mid \free'(i) > 0 \text{ and } \stopp'(i) < \delta_i \}$ and $\bar{X} = \{i \in X \cap R(A^*_{\mu}) \setminus \sigma'(S') \mid \stopp'(i) < \delta_i \text{ or } i \in R(\mu)\}$.
By the definition of $\hat{X}_1$ and $\bar{X}$, $x$ is either a driver or an un-assigned trip that can still serve other trips in $R(D^*_{\mu})$.
From this, $x$ is a valid assignment for serving $W(x, \mu, S')$.
If $\hat{X}_1 \cup \bar{X} = \emptyset$, each trip of $W(\mu) \setminus \sigma'(S')$ is assigned as a driver to serve itself.

Phase-II of the algorithm ends until all trips of $Z$ are served, where $Z = \{i \in R \setminus \sigma'(S') \mid \delta_i = 0\}$. 
Since all of $W$ are served before Phase-II starts, $n_i \geq 1$ for every $i \in R \setminus \sigma'(S')$, that is, $Z \subseteq X$.
From this, every $x \in Z \setminus \sigma'(S')$ that is assigned as a driver to serve other trips in $Z(x, \node(x), S')$ is valid, as described in the first part (first for-loop) of Phase-II.
The second part of Phase-II is similar to Phase-I.
A node $\mu \in \Gamma(Z)$ is chosen, where $\Gamma(Z)$ is the set of nodes containing the rest of $Z$ w.r.t. $(S', \sigma')$.
Either a driver $x \in \hat{X}_2$ or an unassigned trip $x \in \bar{X}$ is selected to serve $Z(x, \mu, S')$, where $\hat{X}_2 = \{i \in S(A^*_{\mu}) \mid \free(i) > 0 \text{ and } (\stopp(i) < \delta_i \text{ or } i \in R(\mu))\}$ and $\bar{X}$ is the same as defined above.
The assignment of $x$ is valid as mentioned above.

In Phase-III of the algorithm, the rest of $X \setminus Z$ are served.
The algorithm processes each node from $\mu_p$ to $\mu_1$.
All trips in $R(\mu_j)$ must be served before $\mu_{j-1}$ is processed.
In each iteration, either a driver $x \in \hat{X}_2$ (as defined above) or an unassigned trip $x \in X(\mu) \setminus \sigma'(S')$ is selected to serve $R(x, \mu, S')$.
The assignment of $x$ is valid as mentioned above.
Therefore, Algorithm~\arabic{algcounter} produces a valid solution after all trips in $R$ are processed.
\end{proof}

\begin{theorem}
Given a ridesharing instance $(G, R)$ of size $M$ and $l$ trips satisfying Conditions (1-3) and (5).
Algorithm~\arabic{algcounter} computes a solution $(S, \sigma)$ for $(G, R)$ such that $|S^*| \leq |S| \leq \frac{K+2}{2} |S^*|$, where $(S^*, \sigma^*)$ is any optimal solution and $K = \max_{i \in R} {n_i}$, with running time $O(M + l^2)$.
\end{theorem}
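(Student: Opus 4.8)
The plan is to split the argument into two essentially independent parts: the approximation guarantee and the running-time bound. The approximation guarantee is a short corollary of the machinery already in place. First I would invoke Lemma~\ref{lemma-alg-solution} to conclude that the computed pair $(S,\sigma)$ is a genuine solution, so $\sigma(S)=R$. Then, since any optimal solution $(S^*,\sigma^*)$ must also serve all of $R=\sigma(S)$, Lemma~\ref{lemma-phase-three-min} applies with $S^*$ in the role of the serving set: it yields $|S^*|\geq \frac{2|S|}{K+2}$, which rearranges to $|S|\leq \frac{K+2}{2}|S^*|$. The lower bound $|S^*|\leq|S|$ is immediate from the optimality of $S^*$ (it uses the fewest drivers among all solutions). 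This settles the chain $|S^*|\leq|S|\leq\frac{K+2}{2}|S^*|$ with no further work.

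The remaining and more technical part is the $O(M+l^2)$ running time. I would first account for preprocessing: building the serve-relation meta graph $\Gamma$ from $(G,R)$ costs $O(M)$ as in~\cite{GLZ19}, and simplifying $\Gamma$ to the inverse tree together with running the labeling procedure of~\cite{GLZ19} costs $O(l)$, since $|V(\Gamma)|\leq l$ and the tree has $O(l)$ arcs. Everything afterward is expressed purely in terms of $l$, so the goal reduces to showing that each of the three phases runs in $O(l^2)$.

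The per-phase strategy is to bound the number of outer iterations by $O(l)$ and the cost of each iteration by $O(l)$. In Phase-I, each while-loop iteration either serves at least one previously unserved trip of $W$ through the chosen driver $x$, or converts all remaining trips of $W(\mu)\setminus\sigma(S)$ into solo drivers; in either case at least one trip leaves $W\setminus\sigma(S)$, so there are $O(l)$ iterations, and forming $\hat{X}_1,\bar{X}$ by scanning the ancestors $A^*_\mu$ and their current drivers, plus the argmax over $\Gamma(W)$, is $O(l)$ per iteration. In Phase-II, the first for-loop's while-loop serves at least two trips of $Z$ per iteration (one driver plus at least one passenger), bounding its iterations by $O(l)$, while the second for-loop touches each node of $\Gamma(Z)$ once; forming $\hat{X}_2,\bar{X}$ and selecting $x$ is again $O(l)$ per step. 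Phase-III processes $\mu_p,\dots,\mu_1$ once each, and its inner while-loop serves $\min\{\free(x),\,|R(\mu)\setminus\sigma(S)|\}\geq 1$ trips per iteration, so the total number of inner iterations \emph{summed over all nodes} is $O(l)$, with each step costing $O(l)$ to build $\hat{X}_2$ or to evaluate $\mathrm{argmax}_{x\in X(\mu)\setminus\sigma(S)}n_x$. Summing gives $O(l^2)$ per phase and $O(M+l^2)$ overall.

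I expect the main obstacle to lie entirely in the running-time bookkeeping rather than the ratio. The delicate points are arguing that the inner iterations of Phase-III amortize to $O(l)$ across all processed nodes (not merely $O(l)$ per node, which would give $O(l^3)$), and that the sets $\hat{X}_1,\hat{X}_2,\bar{X}$ together with the quantities $\free(\cdot)$ and $\stopp(\cdot)$ can be refreshed within the claimed $O(l)$ per-iteration budget. The approximation bound itself is routine once Lemma~\ref{lemma-phase-three-min} and Lemma~\ref{lemma-alg-solution} are available.
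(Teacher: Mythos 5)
Your proposal is correct and follows essentially the same route as the paper: the ratio is obtained exactly as you describe, by combining Lemma~\ref{lemma-alg-solution} (so that $\sigma(S)=R$) with Lemma~\ref{lemma-phase-three-min} applied to $S^*$, and the running time is bounded phase by phase with $O(l)$ outer iterations (amortized over trips served, as you note for Phase-III) at $O(l)$ cost each, plus $O(M)$ preprocessing for $\Gamma$. The delicate points you flag are handled in the paper at the same level of detail you give, so no further work is needed.
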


\begin{proof}
By Lemma~\ref{lemma-phase-three-min} and Lemma~\ref{lemma-alg-solution}, Algorithm~\arabic{algcounter} computes a solution $(S, \sigma)$ for $R$ with $\frac{K+2}{2}$-approximation ratio.
It takes $O(M)$ time to construct the meta graph $\Gamma(V, E)$ using the preprocessing described in~\cite{GLZ19}.
The labeling of nodes in $\Gamma$ takes $O(l)$ time.
Sorting the trips in a node $\mu$ according to their capacity takes $O(K \cdot |R(\mu)|)$ time for each node $\mu$, so in total $O(K \cdot l)$ to sort all trips in $R$.The total time for the preprocessing is $O(M+K\cdot l)$; we assume $K < l$.
For Phase-I, there are at most $O(l)$ iterations (in the while-loop).
In each iteration, it takes $O(l)$ time to pick the required node $\mu$ from $\Gamma(W)$ and $O(l)$ time to select a trip $x$ from $\hat{X}_1 \cup \bar{X}$.
To serve all of $W(x, \mu, S')$ or $W(\mu)$, $|W(\mu)| \leq O(l)$ is required.
Hence, Phase-I runs in time $O(l^2)$.
For Phase-II, we can first scan the tree $\Gamma$ following the node labels in decreasing order, which takes $O(l)$ time.
Whenever a node $\mu$ with $|Z(\mu)| \geq 2$ is encountered, a trip $x \in Z(\mu) \setminus \sigma'(S')$ is selected to serve $Z(x, \mu, S')$ repeated until $|Z(\mu)| \leq 1$. This takes $O(l)$ time since the trips in $R(\mu)$ are sorted  according to their capacity.
Hence, it takes $O(l^2)$ time for the first for-loop in Phase-II.
The second for-loop in Phase-II is similar to Phase-I, which requires $O(l)$ time for each iteration.
Thus, it requires $O(l^2)$ time for Phase-II.
For Phase-III, in each iteration when processing a node $\mu$, it takes $O(l)$ time to select a trip $x$ from $\hat{X}_2$ or $X(\mu) \setminus \sigma'(S')$. Then in total, it requires $O(l + K)$ time to serve $R(x, \mu, S')$.
Collectively, Phase-III may require $O(l)$ iterations to process trips of all nodes in $V(\Gamma)$.
Thus, it requires $O(l^2)$ time for Phase-III.
Therefore, the running time of Algorithm~\arabic{algcounter} is $O(M + l^2)$. 
\end{proof}

\section{Conclusion} \label{conclusion}
We proved that it is NP-hard to approximate with a constant factor each problem of minimizing the number of drivers and minimizing the total travel distance of drivers if one of Conditions (2)-(5) is not satisfied. Our results together with the results in~\cite{GLZ16} imply that both minimization problems are NP-hard if one of Conditions (1)-(5) is not satisfied.
We also presented $\frac{K+2}{2}$-approximation algorithms for minimizing number of drivers for problem instances satisfying all conditions except Condition (4), where $K$ is the largest capacity of all vehicles.
It is worth developing approximation algorithms for other NP-hard cases; for example, two or more of the five conditions are not satisfied.
It is interesting to study applications of the approximation algorithms for other related problems, such as multimodal transportation with ridesharing (integrating public and private transportation).

\end{document}